\Crefname{section}{Sec.}{Sections}
\crefname{subsection}{Sec.}{Sections}
\crefname{prop}{Proposition}{Propositions}
\renewcommand{\backref}[1]{}
\renewcommand{\backrefalt}[4]{
\ifcase #1 
\or
[p.\ #2]
\else
[pp.\ #2]
\fi}
\renewcommand{\hat}{\widehat}
\def\@cite#1#2{\textup{[{#1\if@tempswa , #2\fi}]}}
\renewcommand{\abs}[1]{\left\lvert#1\right\rvert}
\renewcommand{\norm}[1]{\left\lVert#1\right\rVert}
\DeclarePairedDelimiterX{\inp}[2]{\langle}{\rangle}{#1, #2}
\DeclarePairedDelimiter\ceil{\lceil}{\rceil}
\DeclarePairedDelimiter\floor{\lfloor}{\rfloor}
\declaretheorem[name=Theorem]{theorem}
\declaretheorem[name=Lemma, sibling=theorem]{lemma}
\declaretheorem[name=Corollary,sibling=theorem]{corollary}
\declaretheorem[name=Proposition,sibling=theorem]{prop}
\declaretheorem[name=Claim, sibling=theorem]{claim}
\declaretheorem[name=Fact,sibling=claim]{fact}
\declaretheorem[name=Question, style=definition]{question}
\declaretheorem[name=Definition, sibling=theorem, style=definition]{definition}
\declaretheorem[name=Problem, sibling=theorem]{problem}
\declaretheorem[name=Conjecture]{conjecture}
\renewcommand{\deg}{\mathrm{deg}}
\newcommand{\dt}{\mathrm{D}}
\newcommand{\calD}{\mathcal{D}}
\newcommand{\cert}{\mathrm{C}}
\DeclareMathOperator{\rdeg}{rdeg}
\newcommand{\rdegnull}{\mathrm{rdeg}}
\newcommand{\rdegeps}{\mathrm{\rdeg}_{\epsilon}}
\DeclareMathOperator{\ndeg}{ndeg}
\newcommand{\sgn}{\mathrm{sgn}}
\newcommand{\sens}{\mathrm{s}}
\newcommand{\senszero}{\mathrm{s}^{(0)}}
\newcommand{\sensone}{\mathrm{s}^{(1)}}
\newcommand{\bs}{\mathrm{bs}}
\newcommand{\conj}{\overline}
\newcommand{\quantumquery}{\func{Q}}
\newcommand{\postquantumquery}{\func{PostQ}}
\newcommand{\eps}{\varepsilon}
\newcommand{\pmone}{\{-1,1\}}
\newcommand{\bool}{\{0,1\}}
\newcommand{\lpar}{\left(}
\newcommand{\rpar}{\right)}
\newcommand{\Thr}{\mathrm{Thr}}
\renewcommand{\R}{\mathbb{R}}
\newcommand{\anddim}{\dim_{\wedge}}
\newcommand{\ordim}{\dim_{\vee}}
\newcommand{\tildex}{\tilde{x}}
\newcommand{\tildey}{\tilde{y}}
\DeclareMathOperator{\MT}{\mathsf{MT}}
\definecolor{conj}{HTML}{C2C0BF} 
\definecolor{open}{HTML}{A31F34} 
\newcommand{\ct}[2]{%
\cellcolor{white}%
\begin{tabular}[t]{@{}c@{}}#1\\[-5pt]#2\end{tabular}%
}
\newcolumntype{Y}{>{\centering\arraybackslash}X}
\renewcommand{\co}[2]{%
\cellcolor{conj!30}%
\begin{tabular}[t]{@{}c@{}}#1\\[-5pt]#2\end{tabular}%
}
\renewcommand{\cc}[2]{%
\cellcolor{conj!30}%
\begin{tabular}[t]{@{}c@{}}#1\\[-5pt]#2\end{tabular}%
}
\newcommand{\smcite}[1]{{\scriptsize \cite{#1}}}
\newcommand{\detq}{\mathrm{D}}
\newcommand{\randq}{\mathrm{R}}
\newcommand{\randcert}{\mathrm{RC}}
\newcommand{\quantq}{\mathrm{Q}}
\newcommand{\s}{\mathrm{s}}
\DeclareMathOperator{\AND}{\mathsf{AND}}
\DeclareMathOperator{\NOTgate}{\mathsf{NOT}}
\DeclareMathOperator{\XOR}{\mathsf{XOR}}
\DeclareMathOperator{\OR}{\mathsf{OR}}
\DeclareMathOperator{\PARITY}{\mathsf{PARITY}}
\newcommand{\apxdeg}{\adeg}
\newcommand{\adeg}{\widetilde{\vphantom{t}\smash{\deg}}}
\newclass{\ptime}{P}
\newclass{\bqp}{BQP}
\newclass{\eqp}{EQP}
\newclass{\pp}{PP}
\newclass{\rp}{RP}
\newclass{\bpp}{BPP}
\newclass{\posteqp}{PostEQP}
\newclass{\postzpp}{PostZPP}
\newclass{\postbqp}{PostBQP}
\newclass{\ceqp}{\CeP}
\newclass{\coeqp}{\mathsf{co}\CeP}
\newclass{\cA}{A}
\newclass{\cB}{B}
\newclass{\epp}{EPP}
\newcommand{\newprob}[2]{\newcommand{#1}{{\text{\upshape\scshape #2}}\xspace}}
\newprob{\pA}{A}
\newprob{\pB}{B}
\newprob{\orprob}{OR}
\newprob{\maj}{MAJ}
\newprob{\majn}{MajOrNone}
\newprob{\imbalance}{Imbalance}
\newprob{\parprob}{Parity}
\newprob{\bi}{BI}
\renewcommand{\epsilon}{\varepsilon}
\newcommand{\dom}{\mathrm{Dom}}
\newcommand\bigO{
  \mathcal{O}
  }
\newcommand\littleO{o}
\newcommand{\newbound}[1]{\colorbox{green!50}{#1}}
\title{On the Rational Degree of Boolean Functions and Applications}
\date{\vspace{-3ex}}
\begin{document}

\maketitle

    \vspace{-8ex}

\begin{tabular}{cccc}
Vishnu Iyer&
Siddhartha Jain&
Robin Kothari&
Matt Kovacs-Deak\\[-1mm]
\small\slshape UT Austin&
\small\slshape UT Austin&
\small\slshape Google Quantum AI&
\small\slshape University of Maryland\\
\\
Vinayak M. Kumar&
Luke Schaeffer&
Daochen Wang&
Michael Whitmeyer\\[-1mm]
\small\slshape UT Austin&
\small\slshape University of Waterloo&
\small\slshape UBC&
\small\slshape University of Washington
\end{tabular}

\begin{abstract}
We study a natural complexity measure of Boolean functions known as the rational degree. Denoted $\rdeg(f)$, it is the minimal degree of a rational function that is equal to $f$ on the Boolean hypercube.
For total functions $f$, it is conjectured that $\rdeg(f)$ is polynomially related to the Fourier degree of $f$, $\deg(f)$.
Towards this conjecture, we show that:
\begin{itemize}[label=\small$\bullet$,leftmargin=1.5em]
    \item Symmetric functions have rational degree at least $\Omega(\deg(f))$ and unate functions have rational degree at least $\sqrt{\deg(f)}$. We observe that both of these lower bounds are asymptotically tight. 
    \item Read-once $\AC$ and $\TC$ formulae have rational degree at least $\Omega(\sqrt{\deg(f)})$. If these formulae contain parity gates, we show a lower bound of $\Omega(\deg(f)^{1/2d})$, where $d$ is the depth. 
    \item Almost every Boolean function on $n$ variables has rational degree at least $n/2 - \bigO(\sqrt{n})$.
\end{itemize}
In contrast, we exhibit partial functions that witness unbounded separations between rational and approximate degree, in both directions. As a consequence, we show that for quantum computers, post-selection and bounded-error are incomparable resources in the black-box model.

In addition, we show $\AND$ and $\OR$ composition lemmas for the rational degree and exhibit new polynomial separations between the rational degree and other well-studied complexity measures, such as sensitivity and spectral sensitivity.
\end{abstract}

\section{Introduction}

Starting with the seminal work of Minsky and Papert \cite{minskypapert}, a long line of research has sought to relate various measures of Boolean function complexity. In \cite{nisan-szegedy}, Nisan and Szegedy proved that the deterministic decision tree complexity $\dt(f)$ of a Boolean function $f$ is polynomially related to its degree $\deg(f)$. The same paper posed two open questions. One of them conjectures that the sensitivity and block sensitivity of a Boolean function are polynomially related. This conjecture was recently proven in a breakthrough by Huang \cite{huang2019induced}. Huang's result brought sensitivity into a \say{happy flock} of complexity measures on total Boolean functions that are all polynomially related: sensitivity, degree, approximate degree, and notions of query complexity.

Another natural measure of Boolean function complexity is the minimal degree of the ratio of two polynomials which represents the function exactly, called the \emph{rational degree} (denoted $\rdeg$). However, $\rdeg$ is \emph{not} known to be either polynomially related to or separated from the complexity measures mentioned above.
In fact, this was the other open question posed over 30 years ago in the paper of Nisan and Szegedy (via personal communication with Fortnow) \cite{nisan-szegedy}. This question was reiterated by Aaronson \textit{et al.} \cite{aaronson2020degree} yet very little progress has been made toward its resolution.

\begin{question}[Fortnow \cite{nisan-szegedy}]
\label{q:fortnow}
    Does there exist $c > 1$ such that for all total Boolean functions $f$, $\deg(f) \leq \bigO(\rdeg(f)^c)$?
\end{question}

One of the motivations for Fortnow's question was complexity-theoretic: is the intersection of \ceqp\ and \coeqp\ strictly contained in \pp\ with respect to an oracle \cite{FortnowBlog}? \ceqp\ and \coeqp\ are \say{counting classes} \cite{ComplexityZoo} which we define later, and rational degree corresponds to the black-box version of their intersection.

The rational degree is also related to quantum query complexity. In particular, de Wolf defined the \emph{non-deterministic degree} $\ndeg(f)$ of a Boolean function $f$ as the minimal degree of a polynomial whose zero set is precisely the set of inputs on which $f$ evaluates to false \cite{dewolf2000}, and related it to the rational degree through the identity $\rdeg(f) = \max\{\ndeg(f), \ndeg(\bar{f})\}$. de Wolf also proved that the non-deterministic degree $\ndeg(f)$ equals the \emph{non-deterministic quantum query complexity} up to a constant factor. 

In the same manuscript, de Wolf stated the following conjecture which, together with the inequality $\deg(f) \leq \dt(f)$, would resolve Fortnow's question in the affirmative with $c = 2$.
\begin{conjecture}[de Wolf \cite{dewolf2000}]
    For all Boolean functions $f$, $\dt(f) \leq \bigO(\ndeg(f) \ndeg(\bar{f}))$.
\end{conjecture}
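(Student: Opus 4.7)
My strategy would be a recursive decision-tree construction driven by the two minimal non-deterministic polynomials. Fix $a = \ndeg(f)$ and $b = \ndeg(\bar{f})$, and let $p$ be a multilinear polynomial of degree $a$ with $p^{-1}(0) = f^{-1}(0)$ and $q$ a multilinear polynomial of degree $b$ with $q^{-1}(0) = f^{-1}(1)$, so that the product $pq$ vanishes identically on the Boolean cube. The plan is to define a potential function on the pair $(p_\rho, q_\rho)$---naturally $(\deg p_\rho, \deg q_\rho)$ ordered lexicographically, with a tiebreaker counting top-degree monomials---and to exhibit, for every non-constant restriction $\rho$, a query $x_i$ that strictly decreases this potential on one branch while not increasing it on the other. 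A straightforward induction would then produce a decision tree of depth $O(ab)$.

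The easy case is when some variable appears in every top-degree monomial of, say, $p_\rho$: setting it to $0$ strictly decreases $\deg p_\rho$, and both branches inherit valid certifying polynomials. The hard case---which I expect to be the main obstacle---is when no such universally shared top-degree variable exists in either polynomial. There one must genuinely use the algebraic coupling $p_\rho q_\rho \equiv 0$ on the restricted cube. A structural lemma one might hope for is: there exist a variable $x_i$ and values $c_0, c_1 \in \{0,1\}$ with $\deg(p_\rho|_{x_i = c_0}) < \deg p_\rho$ and $\deg(q_\rho|_{x_i = c_1}) < \deg q_\rho$, so that querying $x_i$ makes progress on both polynomials along the respective subtrees. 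Extracting such a combinatorial consequence from the polynomial identity $pq \equiv 0$ modulo the hypercube ideal seems to be the crux.

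A complementary route I would pursue in parallel is through certificate complexity, attempting to bound $\cert_0(f)$ and $\cert_1(f)$ individually in terms of $a$ and $b$ and then invoking a standard polynomial relation between $\dt$ and $\cert$. The naive hope that $\cert_1(f) \leq \ndeg(f)$ is already false---for the symmetric function $f$ with $f(x) = 1$ iff $|x| \in \{0,n\}$ one has $\ndeg(f) = n-1$ but $\cert_1(f) = n$---so any successful certificate argument must use $p$ and $q$ jointly. This again returns us to the algebraic coupling as the essential difficulty, and suggests that any resolution of the conjecture will hinge on a structural lemma about pairs of low-degree polynomials with complementary Boolean zero sets.
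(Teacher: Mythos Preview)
The paper does not prove this statement. It is presented in the introduction as an open \emph{conjecture} of de Wolf, motivated by the fact that it would resolve Fortnow's question with exponent $c=2$; nowhere in the paper is a proof (or even a proof sketch) offered. So there is no ``paper's own proof'' to compare your attempt against.

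That said, your proposal is also not a proof: it is a strategy outline with the key step explicitly missing. You correctly identify that the whole argument hinges on a structural lemma of the form ``from $p_\rho q_\rho \equiv 0$ on the restricted cube, one can always find a variable whose query strictly decreases the potential on at least one branch,'' and you note that you do not know how to prove this. That is precisely the obstruction: the relation $pq \equiv 0$ on $\{0,1\}^n$ is a statement modulo the ideal $(x_i^2 - x_i)$, and it does not obviously force the kind of shared-variable structure between top-degree monomials of $p$ and $q$ that your recursion needs. Absent that lemma, the induction never gets off the ground in the ``hard case.'' Your secondary route through certificate complexity runs into the same wall, as you yourself observe.

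In short: you have not proved the conjecture, and neither has the paper---it remains open.
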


Mahadev and de Wolf showed \cite{mahadev2014} an even tighter connection between the notion of rational degree and quantum query complexity: denoting by $\rdegeps(f)$ the minimum degree of a rational polynomial that $\epsilon$-approximates $f$ pointwise, they showed that $\rdegeps(f)$ equals (up to a constant factor) the query complexity of an $\eps$-error quantum algorithm that computes $f$ with \emph{post-selection}, an operation that allows for projection onto an efficiently computable set of basis states. For partial functions, it can be shown that the rational degree gives a lower bound on the query complexity of algorithms with post-selection, though the opposite direction is not known to be true. Furthermore, this result extends to the case of $\eps = 0$, the so-called \say{zero-error} setting.

\subsection{Our Results}

We prove lower bounds on the rational degree for certain classes of total Boolean functions. We summarize our results, which are also depicted in \Cref{fig:results-table}:
\begin{itemize}[leftmargin=0.5in]
    \item[\Cref{subsec:symmetric}] For symmetric functions we show that $\rdeg(f) \geq (\deg(f) + 1)/3$. This lower bound is tight up to a factor of $3/2$, as witnessed by the $\PARITY_n$ function. Our technique generalizes to give a $\deg(f)/4$ lower bound for (possibly partial) functions which are constant on many Hamming slices. 
    \item[\Cref{subsec:monotone}] For unate functions we prove that $\rdeg(f) = \sens(f) \geq \sqrt{\deg(f)}$. This is also tight as witnessed by the $\AND_n \circ \OR_n$ function.
    \item[\Cref{subsec:readonce}] We employ the lower bound on symmetric and unate functions to show that for $f$ computable by read-once depth-$d$ $\AC$ and $\TC$ formulae,\footnote{Read-once $\AC$ (resp. $\TC$) are read-once formulae consisting of unbounded fan-in $\{\lor,\land,\neg\}$ gates (resp. arbitrary linear threshold gates). $\AC[\oplus]$ and $\TC[\oplus]$ refer to the classes with the addition of unbounded fan-in parity gates.} $\rdeg(f) \geq \Omega(\sqrt{\deg(f)})$. This is tight, as witnessed by the $\AND_n \circ \OR_n$ function. For read-once $\TC$ formulae, we also give a standalone proof for a somewhat strengthened result in \cref{app:readonce_tc}. If we include parity gates, the lower bounds become $\Omega(\deg(f)^{1/d})$ for read-once $\AC[\oplus]$ and $\Omega(\deg(f)^{1/2d})$ for read-once $\TC[\oplus]$.
    \item[\Cref{subsec:random}] Our final lower bound on total functions is extremal: we prove that almost all Boolean functions on $n$ bits have rational degree at least $n/2 - \bigO(\sqrt{n})$.
\end{itemize}

\begin{figure}
\centering
\label{fig:results-table}
{\rowcolors{2}{white}{black!05}
\renewcommand{\arraystretch}{1.5}
\begin{tabularx}{\textwidth}{l Y Y}
    & \textbf{Lower Bound} & \textbf{Upper Bound} \\
    Symmetric Functions & $(\deg(f) + 1)/3$ & $\deg(f)/2$\quad$(\PARITY_n)$ \\
    Unate Functions & $\sqrt{\deg(f)}$ & $\sqrt{\deg(f)}$\quad$(\AND_{n} \circ \OR_{n})$ \\
    Read-once $\AC$/$\TC$  & $\Omega(\sqrt{\deg(f)})$ & $\sqrt{\deg(f)}$\quad$(\AND_{n} \circ \OR_{n})$ \\ 
    Read-once depth $d$ $\AC[\oplus]$ & $\Omega(\deg(f)^{1/d})$ & $\sqrt{\deg(f)}$\quad$(\AND_{n} \circ \OR_{n})$ \\ 
    Read-once depth $d$ $\TC[\oplus]$ & $\Omega(\deg(f)^{1/2d})$  & $\sqrt{\deg(f)}$\quad$(\AND_{n} \circ \OR_{n})$ \\ 
    Almost all $f\colon \bool^n \to \bool$ & $n/2 - \bigO(\sqrt{n})$ & $n$\\
\end{tabularx}
}
\caption{A table summarizing our lower bounds on rational degree for total functions. The third column gives an example of a function that demonstrates the asymptotic tightness of our lower bound, where applicable.}
\end{figure}

Next, we prove that the non-deterministic degree composes exactly with respect to $\AND$. Leveraging this result, we:
\begin{itemize}[leftmargin=0.5in]
    \item[\Cref{sec:composition-lemmas}] Establish $\AND$ and $\OR$ composition lemma for the rational degree.
    \item[\Cref{sec:rdeg-v-lambda}] Exhibit an asymptotic separation between the rational degree and the spectral sensitivity of degree $\sfrac{3}{2}$.
\end{itemize}

In contrast to total functions, we show that for partial functions, the rational and approximate degrees can be unboundedly separated in both directions. These separations also resolve an open question of Fortnow \cite{FortnowBlog}. 

\vspace{3mm}

\begin{itemize}[leftmargin=0.5in]
    \item[\Cref{sec:rdeg-is-high}]
    We give a partial function $\majn_n$ on $n$ bits with constant quantum query complexity yet rational degree $\Omega(n)$. As a result, $\majn_n$  has constant approximate degree and $\Omega(n)$ zero-error post-selected quantum query complexity.
    
    \item[\Cref{sec:rdeg-is-low}] On the other hand, we give a partial function $\imbalance_n$ on $n$ bits with approximate degree $\Omega(n)$ yet constant rational degree. As a result, $\imbalance_n$  has constant zero-error post-selected quantum query complexity and quantum query complexity $\Omega(n)$.
\end{itemize}

Now, employing the framework of standard complexity results such as \cite{FSS1981}, we can argue that post-selection and bounded error are incomparable resources in the black-box setting. To formalise this, we define \posteqp\ as the class of decision problems which can be decided deterministically in polynomial time by quantum computers with access to post-selection. In particular, there exist bidirectional oracle separations between $\posteqp$ and $\bqp$. Formally, combining the results of \Cref{cor:posteqp-high,cor:posteqp-low} we get the following statement.

\begin{corollary}
    There exist oracles $O_1$ and $O_2$ such that $\bqp^{O_1} \not \subseteq \posteqp^{O_1}$ yet $\posteqp^{O_2} \not \subseteq \bqp^{O_2}$.
\end{corollary}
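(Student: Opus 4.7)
The plan is to lift the two partial-function query-complexity separations recorded above into oracle separations via the standard Baker--Gill--Solovay / Furst--Saxe--Sipser encoding. For each integer $m$, the oracle $O_i$ will use its length-$m$ slice (queries of the form $(m, j)$ with $j \in \bool^m$) to encode a single instance $x_m \in \bool^N$ of the relevant partial function on $N = 2^m$ bits. I then define the oracle language $L_i = \{1^m : f_i(x_m) = 1\}$, where $f_1 = \majn_N$ and $f_2 = \imbalance_N$. The exponential gap between the Turing-machine input length $m$ and the instance size $N = 2^m$ is precisely what converts a polynomial-versus-linear query separation into a polynomial-versus-exponential time separation.

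For the first direction, membership $L_1 \in \bqp^{O_1}$ follows by simulating the constant-query bounded-error quantum algorithm for $\majn_N$ from \Cref{sec:rdeg-is-high}: each of its $\bigO(1)$ queries becomes an $\bigO(1)$-query oracle call on $\mathrm{poly}(m)$-bit addresses. For the non-containment $L_1 \notin \posteqp^{O_1}$, any polynomial-time $\posteqp^{O_1}$ decider $M$ on input $1^m$ makes $\mathrm{poly}(m) = \mathrm{poly}(\log N)$ queries to $O_1$, which directly yields a zero-error post-selected quantum query algorithm for $\majn_N$ using $\mathrm{poly}(\log N)$ queries. Because rational degree lower-bounds zero-error post-selected query complexity even for partial functions (as noted in the introduction) and $\rdeg(\majn_N) = \Omega(N)$, this is a contradiction once $m$ is sufficiently large. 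The construction for $O_2$ is symmetric, using $\imbalance_N$: its $\bigO(1)$ zero-error post-selected quantum query complexity gives $L_2 \in \posteqp^{O_2}$, while $\quantq(\imbalance_N) = \Omega(N)$ rules out any polynomial-time $\bqp^{O_2}$ decider.

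The hard part of the argument has already been pushed into \Cref{sec:rdeg-is-high,sec:rdeg-is-low}; the step that remains inside the corollary is a routine BGS diagonalization. One enumerates the countably many polynomial-time $\posteqp$ (resp.~$\bqp$) machines $M_1, M_2, \dots$, picks a sufficiently large $m_k$ for each $M_k$, and adaptively commits the values of $O_i$ on queried addresses so as to force $M_k(1^{m_k})$ to disagree with $L_i(1^{m_k})$. The main obstacle at this level is pedantic: one must verify that the unqueried bits of $x_{m_k}$ can be set simultaneously to make $x_{m_k}$ satisfy the promise of $f_i$ (a valid $\majn_N$ or $\imbalance_N$ input) and to flip $M_k$'s output. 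The $\mathrm{poly}(\log N)$-versus-$\Omega(N)$ query gap leaves exponentially many unqueried coordinates, and both promises are loose enough on those coordinates that this simultaneous commitment is always possible, so no new ingredient beyond the two query-complexity separations is needed.
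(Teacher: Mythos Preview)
Your proposal is correct and takes essentially the same approach as the paper: the paper simply states this corollary as the conjunction of \Cref{cor:posteqp-low,cor:posteqp-high}, each of which is asserted without proof beyond a pointer to the standard Furst--Saxe--Sipser framework, and you have spelled out exactly that framework. Two cosmetic points: the partial function in the second direction is $\bi_n$ (the Boolean restriction of $\imbalance$), not $\imbalance_N$ itself; and your final paragraph's language of ``adaptively committing on queried addresses'' is the classical BGS picture and does not literally apply to quantum machines querying in superposition---but your earlier contradiction argument (few queries $\Rightarrow$ low-query algorithm $\Rightarrow$ violates the $\Omega(N)$ lower bound) is the right one and already suffices.
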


These complexity-theoretic consequences are summarized in \Cref{fig:quantum_hierarchy}. As the figure illustrates, these are the strongest possible separations in the black-box model.

\begin{figure}[t!]
    \centering
    \begin{tikzpicture}[scale=1.1]
\tikzset{inner sep=0,outer sep=3}

\tikzstyle{a}=[inner sep=6pt, inner ysep=6pt,outer sep=0.5pt,
draw=black!20!white, fill=Cerulean!2!white, very thick, rounded corners=6pt, align=center]

\begin{scope}[yscale=1.145]
\large
\node[a] (P) at (0,1.5) {$\ptime$};
\node[a] (RP) at (-3,3) {$\rp$};
\node[a] (BPP) at (-3,5) {$\bpp$};
\node[a] (EQP) at (0,3) {$\eqp$};
\node[a] (BQP) at (0,6) {$\bqp$};
\node[a] (PostEQP) at (3,4) {$\posteqp$};
\node[a] (PostBQP) at (3,7) {$\postbqp$};
\end{scope}

\path[-{Stealth[length=6pt]},line width=.6pt,gray]
(P) edge (RP)
(P) edge (EQP)
(P)  edge (PostEQP)
(RP) edge (BPP)
(EQP) edge (BQP)
(EQP) edge (PostEQP)
(BQP) edge (PostBQP)
(BPP) edge (BQP)
(PostEQP) edge (PostBQP);

\small
\hypersetup{hidelinks}
\tikzset{new/.style={-{Stealth[length=6pt]},dashed,line width=1pt,purple!80}}
\draw[new,bend left=7]
(RP) edge
node[midway,above,inner sep=4pt,sloped,pos=0.35] {\Cref{cor:posteqp-low}}
(PostEQP);
\draw[new,bend left=20]
(PostEQP) edge
node[pos=0.5,above,sloped,inner sep=3pt] {\Cref{cor:posteqp-high}}
(BQP);

\end{tikzpicture}
    \caption{Relevant complexity classes. We are able to obtain the strongest possible oracle separations in this picture. An arrow $\cA\rightarrow\cB$ means $\cA\subseteq \cB$ relative to all oracles. A~dashed arrow~$\cA\dashrightarrow\cB$ means $\cA\not\subseteq \cB$ relative to some oracle.}
    \label{fig:quantum_hierarchy}
\end{figure}

In addition to these consequences for $\posteqp$, our lower bound also resolves Fortnow's complexity-theoretic question. We show that not only is $\ceqp \cap \coeqp$ strictly contained in \pp, but even \rp\ is not in this intersection with respect to an oracle. The class \ceqp\ is the set of languages decidable by an \NP\ machine such that if the string is in the language, the number of accepting paths is \emph{exactly} equal to the number of rejecting paths. Finally, to contextualize the power of $\posteqp$, we provide strong evidence that zero-error post-selection can offer advantage over efficient classical computation, even in the non-relativized setting.

\begin{itemize}[leftmargin=0.5in]
    \item[\Cref{sec:post-selection-nondeterminism}] We show that $\posteqp$ contains $\NP \cap \coNP$. We remark that $\NP \cap \coNP$ is not even believed to be contained in $\bpp$.
\end{itemize}

\setlength{\tabcolsep}{2pt}
\renewcommand{\arraystretch}{1.3}
\begin{table}
\hspace{-3em}
\begin{minipage}{\linewidth}
\begin{center}
\caption{Best known separations between complexity measures for total functions (Reproduced from \cite[Table 1]{aaronson2020degree} and updated with recent developments as well as our results)}
\begin{tabular}{r||c|c|c|c|c|c|c|c|c|c|c|c|c}
\label{tab:relations}
{} & $\detq$ & $\randq_0$ & $\randq$ & $\cert$ & $\randcert$ & $\bs$ & $\s$ & $\lambda$ & $\quantq_E$ & $\deg$ & $\quantq$ & $\adeg$ & $\rdeg$ \\ \hline\hline

$\detq$ &
\cellcolor{darkgray} & 
\ct{2, 2}{\smcite{ABB+15}} & 
\cc{2, 3}{\smcite{ABB+15}} &
\ct{2, 2}{$\wedge\circ\vee$} & 
\cc{2, 3}{$\wedge\circ\vee$} & 
\cc{2, 3}{$\wedge\circ\vee$} &
\co{3, 6}{\smcite{BHT17}} &
\co{4, 6}{\smcite{ABB+15}} &
\co{2, 3}{\smcite{ABB+15}} & 
\co{2, 3}{\smcite{GPW15}} & 
\ct{4, 4}{\smcite{ABB+15}} &
\ct{4, 4}{\smcite{ABB+15}} &
\co{\newbound{2}, ?}{$\wedge\circ\overline{\EH}$}\\ \hline

$\randq_0$ &
\ct{1, 1}{$\oplus$} &
\cellcolor{darkgray} & 
\ct{2, 2}{\smcite{ABB+15}} &
\ct{2, 2}{$\wedge\circ\vee$} & 
\cc{2, 3}{$\wedge\circ\vee$} & 
\cc{2, 3}{$\wedge\circ\vee$} &
\co{3, 6}{\smcite{BHT17}} &
\co{4, 6}{\smcite{ABB+15}} &
\co{2, 3}{\smcite{ABB+15}} &
\co{2, 3}{\smcite{GJPW15}} &
\co{3, 4}{\smcite{ABB+15}} & 
\ct{4, 4}{\smcite{ABB+15}} &
\co{\newbound{2}, ?}{$\wedge\circ\overline{\EH}$} \\ \hline

$\randq$ &
\ct{1, 1}{$\oplus$} & 
\ct{1, 1}{$\oplus$} & 
\cellcolor{darkgray} & 
\ct{2, 2}{$\wedge\circ\vee$} &
\cc{2, 3}{$\wedge\circ\vee$} & 
\cc{2, 3}{$\wedge\circ\vee$} &
\co{3, 6}{\smcite{BHT17}} &
\co{4, 6}{\smcite{ABB+15}} &
\co{$\frac{3}{2}$, 3}{\smcite{ABB+15}} &
\co{2, 3}{\smcite{GJPW15}} &
\co{3, 4\vspace{-0.75ex}}{\smcite{BS20}\\[-2ex]\smcite{SSW20}\vspace{-0.4ex}} & 
\ct{4, 4}{\smcite{ABB+15}} &
\co{\newbound{2}, ?}{$\wedge\circ\overline{\EH}$}\\ \hline

$\cert$ &
\ct{1, 1}{$\oplus$} & 
\ct{1, 1}{$\oplus$} & 
\co{1, 2}{$\oplus$} & 
\cellcolor{darkgray} &
\ct{2, 2}{\smcite{GSS13}} & 
\ct{2, 2}{\smcite{GSS13}} &
\co{3, 5}{\smcite{BBGJK21}} &
\co{4, 6}{\smcite{BBGJK21}} &
\co{1.15, 3}{\smcite{Amb13}}  & 
\co{2, 3}{\smcite{BBGJK21}} & 
\co{2, 4}{$\wedge$} & 
\co{3, 4}{\smcite{BBGJK21}} &
\co{\newbound{2}, ?}{$\wedge\circ\overline{\EH}$}\\ \hline

$\randcert$ &
\ct{1, 1}{$\oplus$} & 
\ct{1, 1}{$\oplus$} & 
\ct{1, 1}{$\oplus$} & 
\ct{1, 1}{$\oplus$} & 
\cellcolor{darkgray} & 
\co{$\frac{3}{2}$, 2}{\smcite{GSS13}} &
\co{2, 4}{\smcite{Rub95}} &
\co{2, 4}{$\wedge$} &
\co{1.15, 2}{\smcite{Amb13}} & 
\co{1.63, 2}{\smcite{nisanwigderson1995}} & 
\ct{2, 2}{$\wedge$} & 
\ct{2, 2}{$\wedge$} &
\co{\newbound{2}, ?}{$\wedge\circ\overline{\EH}$}\\ \hline

$\bs$ &
\ct{1, 1}{$\oplus$} & 
\ct{1, 1}{$\oplus$} & 
\ct{1, 1}{$\oplus$} & 
\ct{1, 1}{$\oplus$} & 
\ct{1, 1}{$\oplus$} & 
\cellcolor{darkgray} &
\co{2, 4}{\smcite{Rub95}} &
\co{2, 4}{$\wedge$} &
\co{1.15, 2}{\smcite{Amb13}} & 
\co{1.63, 2}{\smcite{nisanwigderson1995}} & 
\ct{2, 2}{$\wedge$} &
\ct{2, 2}{$\wedge$} &
\co{\newbound{2}, ?}{$\wedge\circ\overline{\EH}$}\\ \hline 

$\s$ &
\ct{1, 1}{$\oplus$} &
\ct{1, 1}{$\oplus$} &
\ct{1, 1}{$\oplus$} &
\ct{1, 1}{$\oplus$} &
\ct{1, 1}{$\oplus$} &
\ct{1, 1}{$\oplus$} &
\cellcolor{darkgray} &
\ct{2, 2}{$\wedge$} &
\co{1.15, 2}{\smcite{Amb13}} &
\co{1.63, 2}{\smcite{nisanwigderson1995}}  &
\ct{2, 2}{$\wedge$} & 
\ct{2, 2}{$\wedge$} &
\co{\newbound{2}, ?}{$\wedge\circ\overline{\EH}$}\\ \hline

$\lambda$ &
\ct{1, 1}{$\oplus$} &
\ct{1, 1}{$\oplus$} &
\ct{1, 1}{$\oplus$} &
\ct{1, 1}{$\oplus$} &
\ct{1, 1}{$\oplus$} &
\ct{1, 1}{$\oplus$} &
\ct{1, 1}{$\oplus$} &
\cellcolor{darkgray} &
\ct{1, 1}{$\oplus$} &
\ct{1, 1}{$\oplus$} &
\ct{1, 1}{$\oplus$} &
\ct{1, 1}{$\oplus$}  &
\co{\newbound{1.5}, ?}{$\wedge\circ\overline{\EH}$}\\ \hline

$\quantq_E$ &
\ct{1, 1}{$\oplus$} &
\co{1.33, 2}{$\bar{\wedge}$-tree} &
\co{1.33, 3}{$\bar{\wedge}$-tree} &
\ct{2, 2}{$\wedge\circ\vee$} &
\cc{2, 3}{$\wedge\circ\vee$} &
\cc{2, 3}{$\wedge\circ\vee$} &
\co{3, 6}{\smcite{BHT17}} &
\co{4, 6}{\smcite{aaronson2016separations}} &
\cellcolor{darkgray} &
\co{2, 3}{\smcite{aaronson2016separations}} &
\co{2, 4}{$\wedge$} &
\ct{4, 4}{\smcite{aaronson2016separations}} &
\co{\newbound{2}, ?}{$\wedge\circ\overline{\EH}$}\\ \hline

$\deg$ & 
\ct{1, 1}{$\oplus$} &
\co{1.33, 2}{$\bar{\wedge}$-tree} &
\co{1.33, 2}{$\bar{\wedge}$-tree} &
\ct{2, 2}{$\wedge\circ\vee$} &
\ct{2, 2}{$\wedge\circ\vee$} &
\ct{2, 2}{$\wedge\circ\vee$} &
\ct{2, 2}{$\wedge\circ\vee$} &
\ct{2, 2}{$\wedge$} &
\ct{1, 1}{$\oplus$}&
\cellcolor{darkgray} &
\ct{2, 2}{$\wedge$} &
\ct{2, 2}{$\wedge$} &
\co{\newbound{2}, ?}{$\wedge\circ\overline{\EH}$}\\ \hline

$\quantq$ &
\ct{1, 1}{$\oplus$} &
\ct{1, 1}{$\oplus$} &
\ct{1, 1}{$\oplus$} &
\ct{2, 2}{\smcite{aaronson2016separations}} &
\cc{2, 3}{\smcite{aaronson2016separations}} &
\cc{2, 3}{\smcite{aaronson2016separations}} &
\co{3, 6}{\smcite{BHT17}} &
\co{4, 6}{\smcite{aaronson2016separations}} &
\ct{1, 1}{$\oplus$} &
\co{2, 3}{\smcite{aaronson2016separations}} &
\cellcolor{darkgray} &
\ct{4, 4}{\smcite{aaronson2016separations}} &
\co{\newbound{1.5}, ?}{$\wedge\circ\overline{\EH}$}\\ \hline

\raisebox{-2pt}{$\adeg$} &
\ct{1, 1}{$\oplus$} &
\ct{1, 1}{$\oplus$} &
\ct{1, 1}{$\oplus$} &
\ct{2, 2}{\smcite{bun17optimal}} &
\ct{2, 2}{\smcite{bun17optimal}} &
\ct{2, 2}{\smcite{bun17optimal}} &
\ct{2, 2}{\smcite{bun17optimal}} &
\ct{2, 2}{\smcite{bun17optimal}} &
\ct{1, 1}{$\oplus$} &
\ct{1, 1}{$\oplus$}  &
\ct{1, 1}{$\oplus$} &
\cellcolor{darkgray} &
\co{\newbound{1.5}, ?}{$\wedge\circ\overline{\EH}$}\\ \hline

\end{tabular}
\label{tab:sep}
\end{center}
\begin{itemize}[itemsep=3pt,label=\small$\bullet$]
    \item  An entry $a,b$ in the row $M_1$ and column $M_2$ roughly means that there exists a function $g$ with $M_1(g) \geq M_2(g)^{a-o(1)}$, and for all total functions $f$, $M_1(f) \leq M_2(f)^{b+o(1)}$. For example, the $3,4$ entry at row $\randq$ and column $\quantq$ means that the maximum possible separation between $\randq$ and $\quantq$ is at least cubic and at most quartic.
    \item {The second row of each cell contains an example of a function that achieves the separation (or a citation to an example), 
    where $\oplus = \textsc{parity}$, 
    $\wedge = \textsc{and}$, 
    $\vee = \textsc{or}$, 
    $\wedge \circ \vee = \textsc{and-or}$, $\bar{\wedge}$-tree is the balanced \textsc{nand}-tree function, and $\overline{\EH}$ is the negation of the exact-half function defined in \cref{sec:rdeg-v-lambda}.}
    \item Cells have a white background if the relationship is optimal and a gray background otherwise.
    \item The entries with \newbound{green} background in the column for $\rdeg$ are contributions of this work (see \cref{prop:separation}). Getting any polynomial upper bound (replacing \say{?} on the right with a constant) in any cell would resolve \cref{q:fortnow}.
\end{itemize}
\end{minipage}
\end{table}
\renewcommand{\arraystretch}{1}
\cref{tab:relations} is a reproduction of \cite[Table 1]{aaronson2020degree} showing the  known relationships between various Boolean complexity measures. We extended this table with a column corresponding to the rational degree that shows the separations established in our work, and updated some of the other cells to reflect more recent developments.

\section{Preliminaries}

    In this section we review some of the notation and definitions used in our paper. We denote by $[n]$ the set $\{1,2,...,n\}$. Given a function $f\colon S \to \R$ we denote by $\norm{f}_1$ its $l_1$ norm, $\norm{f}_1 = \sum_{x\in S} \abs{f(x)}$. For a bitstring $x \in \bool^n$, we denote by $|x|$ the Hamming weight of $x$: the number of indices equal to $1$. If $x \in \pmone^n$ the Hamming weight is the number of bits that equal $-1$. For a string $x \in \bool^n$, $x^{i \to b}$ refers to the string obtained by replacing the $i$th bit of $x$ with $b$.

\subsection{Boolean Functions}
    A (total) Boolean function is any function $f\colon \Sigma^n \to \Sigma$ where $\Sigma$ is some two-element set. We will refer to the set $\Sigma^n$ as the Boolean hypercube. We will primarily work over the sets $\Sigma = \bool$ and $\Sigma = \pmone$, often swapping between them to make our analysis more presentable.
    While not all Boolean complexity measures are left invariant by this change of representation, all of the measures considered in this paper are preserved.
    
    We also consider restrictions of Boolean functions to proper subsets of the Boolean cube $D \subset \Sigma^n$. We refer to such functions $f\colon D \to \Sigma$ as \emph{partial} Boolean functions. Given a Boolean function $f$, we denote its negation by $\bar{f}$.

    We can define an inner product on the space of functions $f\colon \pmone^n \to \mathbb{R}$:
    \begin{align*}
        \inp{f}{g} = 2^{-n} \sum_{x\in \{-1,1\}^n} f(x)g(x).
    \end{align*}
    For each $S \subseteq [n]$ we define the \emph{character function $\chi_S$ on $S$} as $\chi_S(x) = \prod_{i\in S} x_i$. The character functions $\chi_S$ form an orthonormal basis under the above inner product. Thus each function over $\pmone^n$ can be uniquely expressed via its \emph{Fourier representation}: 
    \begin{align*}
        f = \sum_{S\subseteq[n]} \hat{f}(S)\cdot \chi_S,
    \end{align*}
    where we refer to $\hat{f}(S) = \inp{f}{\chi_S}$ as the \emph{Fourier coefficient of $f$ at $S$}. We say an input $i \in [n]$ is \emph{relevant} for $f$ if $x_i$ appears in the Fourier expansion for $f$. In other words, $f$ depends on $x_i$ in a nontrivial manner.

    In this work, we consider Boolean functions with special properties. For example, we say a function $f$ is \emph{symmetric} if it remains invariant under any permutation of the input variables. A Boolean function $f\colon \bool^n \to \bool$ is said to be \emph{monotone} if $\forall x,y \in \bool^n$, $x\leq y$ implies $f(x) \leq f(y)$, where $x \leq y$ is taken pointwise. Finally, the notion of unateness generalizes monotonicity. We say a Boolean function $f\colon \bool^n \to \bool$ is \emph{unate} in the coordinate $i$ if either:
    \begin{enumerate}[label=(\arabic*)]
        \item For every $x \in \bool^n$, $f(x^{i \to 0}) \leq f(x^{i \to 1})$.
        \item For every $x \in \bool^n$, $f(x^{i \to 0}) \geq f(x^{i \to 1})$.
    \end{enumerate}
    The function $f$ itself is said to be unate if it is unate in every coordinate $i \in [n]$. Unate functions are exactly those functions that become monotone under negation of some subset of the input variables. While these properties were defined over the domain $\bool^n$, they have clear analogues for functions over the domain $\pmone^n$. As mentioned earlier, we will switch between the two bases to aid the presentation of our results. 
    
    We assume basic familiarity with notions in complexity such as Boolean formulae, but review some terminology. An $\AC$ formula, refers to a formula with $\lnot$ gates and unbounded fan-in $\lor$ and $\land$ gates. A $\TC$ formula, refers to a formula where the gates are arbitrary linear threshold functions of unbounded fan-in. $\AC[\parity]$ and $\TC[\parity]$ are defined by allowing unbounded fan-in parity gates to $\AC$ and $\TC$, respectively. Finally, a formula is \emph{read-once} if every variable feeds into at most one gate. Recall that every gate in a formula has fan-out at most $1$.
    
    For a more comprehensive introduction to topics in the analysis of Boolean functions see \cite{saks_1993,odonnell2021analysis}.

\subsection{Polynomials}

    As described previously, each Boolean function can be represented uniquely as a formal multilinear polynomial through its Fourier representation.  We define the Fourier \emph{degree} (or simply degree) of $f$ as $\deg(f) = \max\{\abs{S}: \hat{f}(S) \neq 0\}$. We can extend this notion to polynomials that pointwise approximate $f$:
    
\begin{definition}
    Let $D \subseteq \pmone^n$ and $f \colon D \to \pmone$. A polynomial $p:\pmone^n \to \R$ is said to $\eps$-approximate $f$ if:
    \begin{enumerate}[label=(\arabic*)]
        \item $\forall x \in D$, $|p(x)-f(x)| \leq \eps$.
        \item $\forall x \in \pmone^n$, $|p(x)| \leq 1$.
    \end{enumerate}   
    The \emph{$\epsilon$-approximate degree} of $f$, denoted $\apxdeg_\epsilon(f)$, is defined as the minimum degree of any polynomial that $\epsilon$-approximates $f$. 
    If $\eps$ is not specified, we define the \emph{approximate degree} of $f$, denoted $\apxdeg(f)$, to be $\apxdeg_{1/3}(f)$.
\end{definition}

In this paper, we are primarily concerned with representations of $f$ via rational polynomials. This gives rise to a measure known as \emph{rational degree}, which is formally defined as follows.

\begin{definition}
Let $D \subseteq \pmone^n$ and $f \colon D \to \pmone$. If $p \colon D \to \mathbb{R}$ and $q \colon D \to \mathbb{R}$ are polynomials such that 
\[
\forall x \in D, \quad \abs{f(x) - \frac{p(x)}{q(x)}} \leq \epsilon,
\]
we say that $p/q$ is an $\epsilon$-approximate rational representation of $f$. The $\epsilon$-\emph{approximate rational degree} of $f$, denoted $\rdeg_\epsilon(f)$, is defined as the minimum value of $\max\{\deg(p), \deg(q)\}$ such that $p/q$ is an $\epsilon$-approximate rational representation of $f$. The \emph{rational degree} of $f$, denoted $\rdeg(f)$, is defined as $\rdeg_0(f)$.
\end{definition}

In the case of $\eps = 0$, we drop the \say{approximate} and refer to $p/q$ as a rational representation of $f$.
Unlike in the definition of approximate degree, there is no requirement for an approximate rational representation to be bounded outside of $D$. Whether or not such a boundedness condition is imposed matters significantly for the degree (see \cite{bunkotharithaler2020polystrikesback}) but not for the rational degree, as we outline in \cref{app:rdeg_postq}.

\subsection{Sensitivity and Certificate Complexity}
We now define some useful combinatorial measures of Boolean function complexity. Let $f$ be a Boolean function, $x\in \pmone^n$, and $B \subseteq [n]$. We say that $B$ is a sensitive block of $f$ at $x$ if $f(x) \neq f(x^B)$ where $x^B$ denotes the bitstring obtained by flipping all bits of $x$ indexed by $B$. We define, and denote by $
\bs_f(x)$, the \emph{block sensitivity of $f$ at $x$} as the maximum number of disjoint blocks that are all sensitive at $x$. By restricting our attention to sensitive blocks that are singletons we obtain the analogous notion of the \emph{sensitivity of $f$ at $x$}, denoted $\sens_f(x)$. The \emph{block sensitivity} of $f$ is defined as $\bs(f) = \max_{x \in \pmone^n} \bs_f(x)$. Similarly the \emph{sensitivity} of $f$ is defined as $\sens(f) = \max_{x \in \pmone^n} \sens_f(x)$. 
For $b\in \{0,1\}$, we also write $\sens^{(b)}(f) = \max_{x\in f^{-1}(b)} \sens_f(x)$.

A partial assignment is some function $\rho\colon [n] \to \{-1, 1, \star\}$. We define, and denote by $\abs{\rho}$, the size of the partial assignment $\rho$ as cardinality of the set $\{i\in [n]: \rho(i) \neq \star\}$. We say that a partial assignment $\rho$ is \emph{consistent} with some $x\in \pmone^n$ if $x_i = \rho(i)$ for all $i$ with $\rho(i) \neq \star$. Given a Boolean function $f$ we denote by $f\vert_{\rho}$ the restriction of $f$ to the set of inputs $x\in \pmone^n$ that are consistent with $\rho$. Given $b \in \pmone$, we say that a partial assignment $\rho$ is a \emph{$b$-certificate for $f$} if $f\vert_{\rho}(x) = b$ for all $x \in \dom(f\vert_{\rho})$. The \emph{$b$-certificate complexity} of $f$ is defined as 
\begin{align*}
    \cert_b(f) = \max_{x\in f^{-1}(b)} \min\{\abs{\rho}: \rho \text{ is a $b$-certificate for $f$ consistent with $x$}\}.
\end{align*}
The \emph{certificate complexity of $f$} is defined as $\cert(f) = \max_{b\in \pmone} \cert_b(f)$.

\subsection{Sign and Non-Deterministic Degree}

For a Boolean function $f$ we say that a polynomial $p\colon \pmone^n \to \mathbb{R}$ is a \emph{sign representation} if $\sgn(p(x)) = f(x)$ for all $x\in \pmone^n$ and $p(x) \neq 0$ on the entire hypercube. The \emph{sign degree} of $f$ is defined as the minimum degree of any polynomial that sign represents $f$. Alon \cite{alonslicingthecube} and Anthony \cite{ANTHONY199591} have shown that all but a negligible fraction of $n$-bit Boolean functions have sign degree at least $n/2$. Later, O'Donnell and Servedio proved \cite{ODONNELL2008298} that almost every Boolean function has sign degree at most $n/2 + \bigO(\sqrt{n \log{n}})$. 

A less common but somewhat similar notion is that of a \emph{non-deterministic polynomial} introduced by de Wolf \cite{dewolf2000}. In this context, it is customary to consider Boolean functions using the $\{0,1\}^n \to \{0,1\}$ representation. 

We say that $p\colon \{0,1\}^n \to \mathbb{R}$ is a non-deterministic polynomial for $f\colon \{0,1\}^n \to \{0,1\}$ if $p(x) 
= 0$ if and only if $f(x) = 0$. An easy calculation establishes the following relationship between the rational and non-deterministic degrees
\begin{align*}
    \rdeg(f) = \max\{\ndeg(f), \ndeg(\bar{f})\}.
\end{align*}
Note that this implies $\rdeg(f) = \rdeg(\bar{f})$.
As mentioned in the introduction de Wolf conjectured that $\dt(f) \leq \bigO(\ndeg(f)\ndeg(\bar{f}))$ for all total Boolean functions. By showing that $\ndeg(f) \leq \cert_1(f)$, de Wolf also established the inequality $\rdeg(f) \leq \cert(f)$ \cite{dewolf2000}.

\subsection{Quantum Query Complexity and Post-selection}
We assume basic familiarity with concepts in quantum information. While we review some of these, we direct the reader to, e.g. \cite{nielsen-chuang}, for background.

Consider a Boolean function $f$ over a domain $D$. We say an $\eps$-error quantum algorithm computes $f$ if for all $x \in D$ it outputs a bit $a(x)$, corresponding to the measurement outcome of the first qubit, such that $\Pr[a(x) = f(x)] \geq 1 - \eps$.
$\bqp$ is the class of problems that have efficient (polynomial-time) quantum algorithms with error $1/3$ and $\eqp$ is the analogous class of zero-error algorithms.
We can also define complexity classes corresponding to quantum algorithms augmented with the power of \emph{post-selection}. 

\begin{definition} \label{def:postbqp}
    $\postbqp$ is the set of languages $L \subset \{0,1\}^{\star}$ for which there exists a polynomial time quantum algorithm that for all inputs $x \in \bool^{\star}$ outputs two bits $a(x),b(x)$ (corresponding to the measurement results of the first and second qubits, respectively) such that:
    \begin{enumerate}[label=(\arabic*)]
        \item $\Pr[b(x) = 1] > 0$.
        \item If $x\in L$, then $\Pr[a(x) = 1 | b(x) = 1] \geq 2/3$.
        \item If $x\not \in L$, then $\Pr[a(x) = 1 | b(x) = 1] \leq 1/3$.
    \end{enumerate}
    $\posteqp$ is the corresponding class of \emph{zero-error} algorithms with post-selection: that is, the language defined by replacing the $2/3$ and $1/3$ above with $1$ and $0$, respectively.
\end{definition}

Each of these computational complexity classes has an associated query measure. Formally, we say a function has query access to a string $w$ if it has black-box access to a unitary $U$ such that $U \ket{i}\ket{b}= \ket{i}\ket{b\oplus w_i}$. When the input $w$ encodes the truth table of a Boolean function $f$, we will often write this as $U \ket{x}\ket{b} = \ket{x}\ket{b\oplus f(x)}$, where $x \in \bool^n$. The number of calls an algorithm makes to the unitary $U$ is its \emph{query complexity}. By $\quantumquery_{\eps}(f)$ and $\quantumquery_E(f)$ we denote the query complexities of $\eps$-error and zero-error quantum algorithms, respectively.  $\postquantumquery_{\eps}(f)$ and $\postquantumquery_{E}(f)$ are defined analogously for quantum algorithms with post-selection. For simplicity of notation, $\quantumquery(f)$ and $\postquantumquery(f)$ are understood to correspond to $\eps = 1/3$. 

A seminal result by Beals \textit{et al.} gives a lower bound quantum query complexity using polynomials \cite{beals1998}. Formally, we have $\quantumquery_\eps(f) \geq \apxdeg_\eps(f)/2$ for all (possibly partial) Boolean functions $f$. As a special case, $\quantumquery_E(f) \geq \deg(f)/2$. This result gave rise to the so-called \emph{polynomial method} for quantum query lower bounds. Similarly, it was shown by Mahadev and de Wolf that $\postquantumquery_\eps(f) = \Theta(\rdeg_\eps(f))$ and $\postquantumquery_E(f) = \Theta(\rdeg(f))$ for total functions $f$ \cite{mahadev2014}. It is not difficult to extend this result for partial functions (see \cref{app:rdeg_postq}). Nonetheless, it is surprising that the result does still hold for partial functions since the analogous result for quantum query complexity and approximate degree was recently shown to be false in \cite{ambainis2023}.

\section{Lower Bounds on Rational Degree}

In this section, we present rational degree lower bounds for certain classes of Boolean functions. Our results constitute progress towards showing that rational degree is polynomially related to Fourier degree for total functions.

First, we establish the asymptotically tight lower bound $\rdeg(f) \geq (\deg(f) + 1)/3$ for symmetric functions. 
Next, we prove that the rational degree equals the sensitivity for monotone functions, which implies that $\sqrt{\deg(f)} \leq \rdeg(f)$ for such functions. This lower bound is also tight. These results become key in proving rational degree lower bound for read-once Boolean formulae. Finally, we show that almost all Boolean functions have rational degree at least $n/2 - \bigO(\sqrt{n})$.

\subsection{Symmetric Functions}
\label{subsec:symmetric}
Our first lower bounds are for (possibly partial) functions which are constant on a large number of Hamming slices.  This lemma will later be useful in obtaining an unbounded separation of rational degree from quantum query complexity (and thus approximate degree) in the case of partial functions.

\begin{lemma}
    \label{lem:rdeg-symmetric}
    Let $f$ be a (possibly partial) nonconstant Boolean function over input domain $D \subseteq \bool^n$ and define 
    \begin{equation}
    \begin{aligned}\label{eq:s0_s1}
     &S_0 \coloneqq \{k\in \{0,1,\dots,n\} \colon |x| = k \implies f(x) = 0\},\\
     &S_1 \coloneqq \{k \in \{0,1,\dots,n\} \colon |x| = k \implies f(x) = 1\}.
    \end{aligned}
    \end{equation}
    Then $\rdegnull(f) \geq \frac{1}{2}\max(|S_0|, |S_1|)$.
\end{lemma}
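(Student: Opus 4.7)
The plan hinges on the identity $\rdeg(f) = \max\{\ndeg(f), \ndeg(\bar{f})\}$ recorded earlier, paired with a Minsky-Papert style symmetrization. Since $\rdeg(f) = \rdeg(\bar{f})$, and the roles of $S_0$ and $S_1$ swap under negation of $f$, it suffices to prove $\rdeg(f) \geq |S_0|/2$; the bound involving $|S_1|$ then follows by applying the same argument to $\bar{f}$.

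Let $p/q$ be a rational representation of $f$ on $D$ with $\deg(p), \deg(q) \leq r = \rdeg(f)$. On $D$, $f(x) = 0$ forces $p(x) = 0$, while $f(x) = 1$ forces $p(x) = q(x) \neq 0$. Define $P(x) := p(x)^2$, and reduce it modulo the relations $x_i^2 = x_i$ to a multilinear polynomial on $\{0,1\}^n$ (which preserves values on the hypercube and does not increase the degree). Then $P$ has degree at most $2r$ and is pointwise non-negative on $\{0,1\}^n$.

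Next, I symmetrize: set
\[
\tilde{P}(k) \;:=\; \binom{n}{k}^{-1} \sum_{|x| = k} P(x),
\]
which by Minsky-Papert is a univariate polynomial in $k$ of degree at most $2r$. For every $k \in S_0$, the whole Hamming slice of weight $k$ lies in $D$ and is mapped to $0$, so each summand vanishes and $\tilde{P}(k) = 0$. Thus $\tilde{P}$ has at least $|S_0|$ distinct integer roots in $\{0,1,\ldots,n\}$. To finish, I rule out the degenerate case $\tilde{P} \equiv 0$: because $P \geq 0$ pointwise, $\tilde{P}(k) = 0$ forces $p(x) = 0$ for every $x$ with $|x| = k$, so $\tilde{P} \equiv 0$ would force $p \equiv 0$, hence $f \equiv 0$ on $D$, contradicting non-constancy. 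A nonzero polynomial of degree at most $2r$ has at most $2r$ roots, so $|S_0| \leq 2r$, i.e. $\rdeg(f) \geq |S_0|/2$.

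The main subtlety (rather than a serious obstacle) is the squaring step: the polynomial $p$ itself may oscillate in sign, so a direct symmetrization of $p$ would not give a clean root count; replacing $p$ by $p^2$ converts \emph{$p$ vanishes on a Hamming slice} into \emph{the symmetrized polynomial vanishes there}, at the cost of a factor of $2$ in the degree — which is precisely the source of the $\tfrac{1}{2}$ in the claimed bound. The partial-function subtlety that $p$'s behavior on $\{0,1\}^n \setminus D$ is unconstrained is irrelevant here, since the non-negativity of $p^2$ is used only for inputs inside the Hamming slices indexed by $S_0$, all of which lie in $D$ by the definition of $S_0$.
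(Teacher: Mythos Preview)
Your proof is correct and follows essentially the same approach as the paper: make the numerator nonnegative at a factor-$2$ cost in degree (you square $p$, the paper replaces $p'/q'$ by $p'q'/(q')^2$), then apply Minsky--Papert symmetrization and count roots on the slices in $S_0$. The only cosmetic difference is that you work directly with $p^2$ rather than reinterpreting the squared expression as a new rational representation, but the mechanics and the source of the $\tfrac{1}{2}$ are identical.
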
In the above definition of $S_0$ and $S_1$, we require that $D$ contains the entirety of the relevant Hamming slices.

We will use the Minsky-Papert symmetrization technique, which converts a multivariate polynomial over $\bool^n$ to a univariate polynomial over $\R$ \cite{minskypapert}. Formally, given $p: \bool^n \to \R$ we define $P(k) \coloneqq \mathop{\mathbb{E}}_{|x| = k}[p(x)]$.

\begin{proof}
    Since $\rdeg(f) = \rdeg(\bar{f})$, we can assume without loss of generality that $|S_0| \geq |S_1|$. It suffices to show that $\rdeg(f) \geq |S_0|$. Let $f = p/q$ be a rational representation of $f$. We will also assume that $p$ is nonnegative, which we can do with at most a factor $2$ overhead in the rational degree. Indeed, if $f = p^\prime/q^\prime$ where $p^\prime$ and $q^\prime$ can take on negative values, we can take $f = p/q$ where $p = p^\prime q^\prime$ and $q = {q^\prime}^2$, increasing the rational degree by a factor of $2$. Since $f$ takes values in $\{0,1\}$ and $q$ is clearly positive, $p$ must always be nonnegative.
    
    Applying the Minsky-Papert symmetrization technique to $p(x)$, we obtain a univariate polynomial $P(k)$ such that $\deg(p) \geq \deg(P)$ and $P(k) = 0$ for any $k \in S_0$. On the other hand, there exists at least one $k \in [n]$ such that $P(k) \neq 0$, since $f$ is nonconstant and nonnegative.
    Thus $\deg(p) \geq \deg(P) \geq |S_0|$. Recall that we also incurred a multiplicative blowup of $2$ in the rational degree by assuming that $p$ is nonnegative. This results in an additional factor of $1/2$ in the lower bound. Since this argument holds for every rational representation of $f$, the result follows.
\end{proof}

For total Boolean functions, we can use a slightly more careful version of the argument for \cref{lem:rdeg-symmetric} to show the next proposition.
\begin{prop}
 \label{cor:symmetric-lower-bound}
    If $f\colon\bool^n \to \bool$ is symmetric and non-constant then $\rdeg(f) \geq (\deg(f)+1)/3$.
\end{prop}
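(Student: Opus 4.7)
The plan is to mirror the Minsky–Papert symmetrization used for Lemma 3.1 but to harness the fact that $f$ is total and symmetric on $\{0,1\}^n$: the Hamming slices $k=0$ and $k=n$ are singletons, which pins the boundary values of the symmetrized denominator. Let $p/q$ be any rational representation of $f$ with $r \coloneqq \max(\deg p, \deg q)$, and let $P, Q$ be the Minsky–Papert symmetrizations of $p, q$ so that $\deg P, \deg Q \leq r$ and $P(k) = \tilde f(k)\, Q(k)$ on $\{0,\dots,n\}$, where $\tilde f \colon \{0,\dots,n\} \to \{0,1\}$ is the induced univariate function. The crucial observation is that $Q(0) = q(0^n)\neq 0$ and $Q(n) = q(1^n)\neq 0$, since $q$ is nowhere zero on the Boolean cube.

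Using $\rdeg(f) = \rdeg(\bar f)$, I will assume WLOG $\tilde f(0) = 0$ and split on $\tilde f(n)$. In the first case, $\tilde f(n) = 1$: then $P(n) = Q(n) \neq 0$, so $P \not\equiv 0$ and its $|S_0|$ zeros on $S_0$ force $r \geq \deg P \geq |S_0|$. Running the identical argument on the rational representation $(q-p)/q$ of $\bar f$, whose symmetrization is $Q-P$ with nonzero value $Q(0)$ at $k=0$, yields $r \geq |S_1|$. Therefore $r \geq \max(|S_0|,|S_1|) \geq (n+1)/2 \geq (\deg(f)+1)/2$, comfortably stronger than the required $(\deg(f)+1)/3$.

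In the second case, $\tilde f(n) = 0$: now $P(0) = P(n) = 0$, so I cannot rule out $P \equiv 0$ and the root-counting bound on $P$ degenerates. I instead run the argument on $\bar f$, where $(Q-P)(0) = Q(0) \neq 0$ still holds, so $Q-P \not\equiv 0$ and has $|S_1|$ roots on $S_1 = \bar S_0$, producing $r \geq |S_1|$. In parallel, I apply Lemma 3.1 itself (whose proof survives $P \equiv 0$ because it symmetrizes $pq$ rather than $p$, at the cost of a factor of $2$) to get $r \geq \max(|S_0|,|S_1|)/2 \geq |S_0|/2$. Adding $2r \geq |S_0|$ and $r \geq |S_1|$ gives $3r \geq |S_0| + 2|S_1| \geq |S_0| + |S_1| = n + 1 \geq \deg(f) + 1$, hence $r \geq (\deg(f)+1)/3$.

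The main obstacle is this second case: the naive root-counting on $P$ collapses when both endpoint slices lie in $S_0$. The two ingredients that save it are (i) the $\bar f$ reflection, which converts the nonvanishing of $Q(0)$ into a nonvanishing boundary value of $Q-P$ even when $P$ itself vanishes; and (ii) summing the resulting $|S_1|$ bound with the factor-$2$ Lemma 3.1 bound on $|S_0|$ so that together they span all $n+1$ Hamming slices, which is exactly where the constant $1/3$ comes from.
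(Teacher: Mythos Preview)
Your proof is correct and follows essentially the same approach as the paper's: both arguments exploit that the symmetrized denominator is nonzero at the singleton slices $k=0$ and $k=n$, case-split on the boundary values $\tilde f(0),\tilde f(n)$, and in the ``bad'' case combine the full bound $r\ge |S_1|$ (from the complementary polynomial) with the halved bound $r\ge |S_0|/2$ (from the squaring trick of Lemma~3.1) to reach $(n+1)/3$. The paper phrases this via $\ndeg(f)$ and $\ndeg(\bar f)$ and keeps all four cases explicit, whereas you work directly with the symmetrizations $P,Q$ and reduce to two cases via $\rdeg(f)=\rdeg(\bar f)$; these are cosmetic differences. One harmless slip: adding $2r\ge |S_0|$ and $r\ge |S_1|$ yields $3r\ge |S_0|+|S_1|$, not $|S_0|+2|S_1|$, but your conclusion $3r\ge n+1$ is unaffected.
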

\begin{proof}

Since $\deg(f)$ is always at most $n$, it suffices to show that $\rdeg(f)\geq (n+1)/3$.

Define $S_0$ and $S_1$ as in \cref{eq:s0_s1}. Since $f$ is total and symmetric, we have 
\begin{equation}\label{eq:constaint}
    \abs{S_0}+\abs{S_1} = n+1.
\end{equation}

We analyze $f$ according to the following four cases.
\begin{enumerate}
    \item $f(0^n) = 0$ and $f(1^n) = 0$. In this case, $\ndeg(\bar{f}) \geq \abs{S_1}$. Importantly, there is no factor of $1/2$ since the symmetrized version of any non-deterministic polynomial for $\bar{f}$ is automatically non-zero at inputs $0$ and $n$. In addition, $\ndeg(f)\geq \abs{S_0}/2$ by considering the square of the symmetrized polynomial (like in the proof of \cref{lem:rdeg-symmetric}). Therefore $\rdeg(f) \geq \max(\abs{S_0}/2,\abs{S_1}) = \max(\abs{S_0}/2, n + 1 - \abs{S_0})$. This quantity is minimized when $|S_0| = \frac{2(n+1)}{3}$, from which we obtain a lower bound of $\rdeg(f) \geq (n+1)/3$.
    \item $f(0^n) = 0$ and $f(1^n) = 1$. In this case, regardless of whether we consider $f$ or $\bar{f}$, we will have at least one Hamming slice that is nonzero upon symmetrization. As such,
    $\rdeg(f)\geq \max(\abs{S_0},\abs{S_1})$ and so $\rdeg(f)\geq (n+1)/2$. 
    \item $f(0^n) = 1$ and $f(1^n) = 0$. We obtain a lower bound $\rdeg(f)\geq \max(\abs{S_0},\abs{S_1} \geq (n+1)/2$ in the exact same way as in case 2.
    \item $f(0^n) = 1$ and $f(1^n) = 1$. We follow the same reasoning as case 1, giving us the lower bound $\rdeg(f)\geq \max(\abs{S_0},\abs{S_1}/2) \geq (n+1)/3$. 
\end{enumerate}
In all cases, we have $\rdeg(f) \geq (n+1)/3$, which gives the proposition.
\end{proof}

In fact, \cref{cor:symmetric-lower-bound} can be tight even in its constant factor as witnessed by the following function. For positive integers $n$ divisible by $3$, let $\MT_n \colon \{0,1\}^n \to \{0,1\}$ be the Boolean function such that $\MT_n(x) = 1$ iff $\abs{x} \in \{n/3,n/3+1,\dots, 2n/3\}$. (The name $\MT$ abbreviates ``middle third''.) We will show in the next proposition that $\rdeg(\MT_n) \leq n/3+1$. But the polynomial degree of any non-constant symmetric Boolean function is at least $n - O(n^{0.548}) = n(1-o(1))$ as shown by von Zur Gathen and Roche \cite{vongathen_roche}. Therefore, $\rdeg(\MT_n) \leq \frac{1}{3}(1+o(1)) \cdot \deg(\MT_n)$.

\begin{prop}\label{prop:rdeg_lower_mt}
    $\rdeg(\MT_n) \leq n/3+1$.
\end{prop}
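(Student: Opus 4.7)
The plan is to use the identity $\rdeg(f) = \max\{\ndeg(f), \ndeg(\overline{f})\}$ recalled in the preliminaries, reducing the task to bounding each non-deterministic degree by $n/3+1$. The bound on $\ndeg(\overline{\MT_n})$ is immediate: the symmetric polynomial
\[
Q(x) = \prod_{j=n/3}^{2n/3}(|x|-j)
\]
is multilinear of degree $n/3+1$ (since $|x|$ is linear) and vanishes exactly when $|x| \in \{n/3,\ldots,2n/3\}$, which is precisely where $\overline{\MT_n}$ vanishes.

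The crux is constructing a polynomial $P(x)$ of degree at most $n/3+1$ that vanishes precisely on the outer slices $\{x : |x| < n/3\} \cup \{x : |x| > 2n/3\}$. Any symmetric such $P$ would need degree at least $|S_0| = 2n/3$, since its univariate symmetrization would have that many roots, so the construction must be non-symmetric. I would use the ansatz
\[
P(x) = \sum_{T \in \binom{[n]}{n/3}} a_T \prod_{i \in T} x_i,
\]
a polynomial of degree $n/3$ that automatically vanishes on $\{|x| < n/3\}$ since each monomial requires at least $n/3$ ones. Imposing $P(y) = 0$ for every $y$ with $|y| > 2n/3$ is a linear system on the $\binom{n}{n/3}$ coefficients, consisting of $\sum_{k < n/3}\binom{n}{k}$ equations; the strict binomial decay $\binom{n}{k} < \binom{n}{k+1}/2$ for $k \leq n/3-1$ forces the number of equations to be strictly less than the number of variables, so the kernel is positive-dimensional. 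A genericity argument within this kernel---each requirement ``$P(x) \neq 0$'' at a middle-slice input $x$ is a proper hyperplane, verifiable by showing the evaluation functional $P \mapsto P(x)$ does not lie in the span of the constraint functionals $\{P \mapsto P(y)\}_{|y| > 2n/3}$---then selects a single $P$ nonzero throughout the middle slices.

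Combining, $\MT_n(x) = P(x)/(P(x)+Q(x))$ is a rational representation of degree at most $n/3+1$: on middle slices $Q$ vanishes and $P$ does not, giving ratio $1$; on outer slices $P$ vanishes and $Q$ does not, giving ratio $0$; and since $P$ and $Q$ have disjoint zero sets the denominator $P+Q$ never vanishes. The main obstacle is the genericity step in constructing $P$: one must rule out that some specific middle-slice input is annihilated by every kernel element. This can be verified directly for small cases ($P(x) = x_1 + x_2 - 2x_3$ works for $n=3$) and in general reduces to a combinatorial non-containment statement for a subset inclusion matrix, which the dimension gap above makes tractable via a short inclusion-exclusion check.
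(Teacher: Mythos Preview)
Your plan matches the paper's: both reduce to $\rdeg=\max\{\ndeg(\MT_n),\ndeg(\overline{\MT_n})\}$, use the same product $Q(x)=\prod_{j=n/3}^{2n/3}(|x|-j)$ for $\ndeg(\overline{\MT_n})\le n/3+1$, and obtain a nonzero degree-$\le n/3$ polynomial vanishing on the outer slices via the same binomial dimension count. Your homogeneous ansatz $P(x)=\sum_{|T|=n/3}a_T\prod_{i\in T}x_i$ is a pleasant simplification over the paper's use of all degree-$\le n/3$ polynomials, since vanishing on $\{|x|<n/3\}$ comes for free and only the inequality $\sum_{k<n/3}\binom{n}{k}<\binom{n}{n/3}$ is needed.

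The one substantive gap is exactly the step you yourself flag as ``the main obstacle'': showing that for \emph{every} middle-slice point $x$ the evaluation functional at $x$ is not in the span of the high-Hamming-weight constraint functionals. The dimension gap alone does not yield this---a positive-dimensional kernel can perfectly well sit inside a single hyperplane $\{P:P(x)=0\}$---and an unspecified ``inclusion-exclusion check'' does not obviously resolve it. In the paper this step is the actual content of the argument, not a routine verification: take any nonzero kernel element $q$; if it vanished on all of slice $k$, pick some middle-slice $y$ with $q(y)\neq 0$, say $|y|=l<k$, fix the $l$ one-coordinates of $y$ to $1$, and observe that the restricted polynomial in the remaining $n-l$ variables vanishes on Hamming weights $k-l$ and $2n/3+1-l,\dots,n-l$ (that is $n/3+1$ values) while being nonzero at Hamming weight $0$, contradicting $\deg\le n/3$ after Minsky--Papert symmetrization. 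Permutation symmetry then upgrades ``some point of slice $k$'' to ``every point of slice $k$'', after which your genericity/avoidance step is indeed routine. So your outline is correct, but the step you defer is precisely the one that needs an argument beyond the dimension count.
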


It is easy to see that $\ndeg(\overline{\MT_n}) \leq n/3+1$.
The harder part of this proposition is proving $\ndeg(\MT_n) \leq n/3$, which we do via the following sequence of three, increasingly stronger, lemmas. The first of these contains the key dimension-counting idea.
\begin{lemma}\label{lem:mt_rdeg_1}
    For every $k\in \{n/3,\dots, 2n/3\}$, there exists $z\in \{0,1\}^n$ with $\abs{z} = k$ and a multilinear polynomial $q$ of degree at most $n/3$ such that $q$ is non-zero at $z$ but is zero at all $x\in \{0,1\}^n$ with $\abs{x} \in \{0,1,\dots, n/3-1\}\cup \{2n/3+1,\dots, n\}$. 
\end{lemma}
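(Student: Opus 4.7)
The plan is to exhibit a single explicit polynomial $q$ of degree exactly $n/3$ that simultaneously witnesses the claim for every middle $k$, which is in fact stronger than what the statement asks. I will take
\[
q(x) \;\coloneqq\; \prod_{i=1}^{n/3}(x_{2i-1}-x_{2i}).
\]
This is multilinear of total degree $n/3$, so only the vanishing and non-vanishing conditions need verification.

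First I will check the vanishing. Observe that $q(x)\neq 0$ forces $x_{2i-1}\neq x_{2i}$ for every $i\in\{1,\ldots,n/3\}$, i.e., exactly one bit in each of the $n/3$ pairs equals $1$. Thus the first $2n/3$ coordinates contribute exactly $n/3$ ones to $\abs{x}$, while the remaining $n/3$ coordinates contribute some integer in $\{0,\ldots,n/3\}$. Hence $q(x)\neq 0$ implies $\abs{x}\in\{n/3,\ldots,2n/3\}$, which is precisely the desired vanishing on the outer Hamming slices.

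Next, for each $k\in\{n/3,\ldots,2n/3\}$ I will construct a witness $z$. Set $z_{2i-1}=1$ and $z_{2i}=0$ for $i=1,\ldots,n/3$, and set exactly $k-n/3$ of the final $n/3$ coordinates of $z$ to $1$ (feasible since $0\le k-n/3\le n/3$), with the remaining coordinates zero. Then $\abs{z}=k$ and every factor of $q$ evaluates to $1-0=1$, so $q(z)=1\neq 0$.

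There is no substantive obstacle here; the key conceptual point is that the antisymmetric-pair construction pins the support of $q$ to Hamming weights in $\{n/3,\ldots,2n/3\}$ at the cost of only degree $n/3$. A purely symmetric construction---a univariate polynomial in $\abs{x}$ vanishing on the $2n/3$ outer integers $\{0,\ldots,n/3-1\}\cup\{2n/3+1,\ldots,n\}$---would instead require degree $2n/3$, so breaking coordinate symmetry is essential.
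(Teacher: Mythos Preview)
Your proof is correct. The polynomial $q(x)=\prod_{i=1}^{n/3}(x_{2i-1}-x_{2i})$ is multilinear of degree $n/3$, any input with $q(x)\neq 0$ must have exactly one $1$ in each of the $n/3$ pairs and hence Hamming weight in $\{n/3,\dots,2n/3\}$, and your witness $z$ for each $k$ is valid. You even get a single $q$ that works for all $k$ simultaneously.

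This is a genuinely different route from the paper's proof. The paper argues non-constructively: by a binomial-coefficient inequality (\cref{fact:binom_fact}), the dimension of the space of multilinear polynomials of degree at most $n/3$ exceeds the number of points on the outer Hamming slices, so some non-zero such polynomial vanishes there; a further symmetrization argument by contradiction then shows it cannot vanish on an entire middle slice. Your approach sidesteps both the counting and the symmetrization by writing down an explicit witness, exploiting the observation that an ``antisymmetric'' product over disjoint coordinate pairs pins the Hamming weight of its support at the cost of only one degree per pair. This is more elementary and more informative (one sees exactly what the polynomial looks like). The paper's dimension-counting argument, on the other hand, is more robust: it would still go through for, say, a slightly smaller middle interval or other threshold choices where no clean explicit formula presents itself. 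Either proof feeds equally well into the subsequent \cref{lem:mt_rdeg_2,lem:mt_rdeg_3}, which only use \cref{lem:mt_rdeg_1} as a black box.
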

\begin{proof}
    We prove this by a dimension-counting argument based on the following elementary fact whose easy proof we defer to the end of this subsection.
    \begin{fact}\label{fact:binom_fact}
    For every positive integer $n$ that is divisible by $3$,
    \begin{equation}
        \sum_{i\in \{0,1,\dots, n/3-1\}\cup \{2n/3+1,\dots, n\}} {\textstyle\binom{n}{i}} < \sum_{i\in \{0,1,\dots, n/3\}}{\textstyle\binom{n}{i}}.
    \end{equation}
    \end{fact}

    From this fact and dimension counting, we deduce there must exist a \emph{non-zero} multilinear polynomial $q$ of degree at most $n/3$ that vanishes at all $x\in \{0,1\}^n$ with $\abs{x} \in \{0,1,\dots, n/3-1\}\cup \{2n/3+1,\dots, n\}$. It remains to argue that $q$ is non-zero at some $z\in \{0,1\}^n$ with $\abs{z} = k$.

    Suppose for contradiction that $q$ is zero at all $z\in \{0,1\}^n$ with $\abs{z} = k$. Since $q$ is non-zero (as a polynomial) and multilinear, $q$ must be non-zero at some point $y\in \{0,1\}^n$ with $\abs{y} \in \{n/3,\dots, 2n/3\}$. Write $l$ for $\abs{y}$ so that $l\neq k$. We assume that $k>l$ as the argument in the case $k<l$ is analogous.

    We can obtain a contradiction by symmetrizing a ``translated version'' of $q$ as follows. Assume that $y$ is of the form $1^{l}0^{n-l}$. It is easy to see that the following argument makes this assumption without loss of generality. Consider the polynomial 
    \begin{equation}
        \tilde{q}(x_{l+1},\dots,x_n) \coloneqq q(1,\dots, 1, x_{l+1},\dots, x_n).
    \end{equation}
    Clearly, $\deg(\tilde{q})\leq \deg(q) \leq n/3$. But 
    \begin{align*}
        &\tilde{q}(0^l) \neq 0,
        \\
        &\tilde{q}(x) = 0 &&\text{for all $x\in \{0,1\}^n$ with $\abs{x} \in \{k-l\}\cup \{2n/3+1-l,\dots, n-l\}$},
    \end{align*}
    where the not-equal-to uses $\tilde{q}(0^l) = q(y)\neq 0$ and the equalities use the fact that $q$ vanishes at all $x\in \{0,1\}^n$ with $\abs{x} \in \{2n/3+1,\dots ,n\}$.

    Therefore, $\tilde{q}$ is a non-zero univariate polynomial vanishing at $n/3+1$ points, so the standard symmetrization argument implies $\deg(\tilde{q}) \geq n/3+1$, which contradicts $\deg(\tilde{q})\leq n/3$.
\end{proof}

The preceding lemma can be strengthened to:
\begin{lemma}\label{lem:mt_rdeg_2}
    For every $z\in \{0,1\}^n$ with $\abs{z} \in \{n/3,\dots, 2n/3\}$ there exists a multilinear polynomial $q$ of degree at most $n/3$ such that $q$ is non-zero at $z$ but is zero at all $x\in \{0,1\}^n$ with $\abs{x} \in \{0,1,\dots, n/3-1\}\cup \{2n/3+1,\dots, n\}$. 
\end{lemma}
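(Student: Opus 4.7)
The plan is to upgrade Lemma \ref{lem:mt_rdeg_1} from ``some $z$'' to ``every $z$'' of the prescribed Hamming weight by exploiting the symmetry of the ``bad'' set of Hamming weights $B \coloneqq \{0,\dots,n/3-1\}\cup\{2n/3+1,\dots,n\}$ under coordinate permutations. Since both the set $B$ and the degree/multilinearity of a polynomial are invariant under relabelling coordinates, the existence of a single good polynomial at some weight-$k$ point immediately propagates to every weight-$k$ point.

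More concretely, fix any $z \in \{0,1\}^n$ with $k \coloneqq \abs{z} \in \{n/3,\dots,2n/3\}$. First, invoke Lemma \ref{lem:mt_rdeg_1} at this value of $k$ to obtain some $z_0 \in \{0,1\}^n$ with $\abs{z_0} = k$ and a multilinear polynomial $q_0$ of degree at most $n/3$ that satisfies $q_0(z_0) \neq 0$ and vanishes on every $x \in \{0,1\}^n$ with $\abs{x} \in B$. Since $z$ and $z_0$ have the same Hamming weight, there exists a permutation $\sigma$ of $[n]$ such that $z_i = (z_0)_{\sigma^{-1}(i)}$ for every $i \in [n]$.

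Next, define the permuted polynomial
\begin{equation*}
    q(x_1,\dots,x_n) \coloneqq q_0\bigl(x_{\sigma^{-1}(1)},\dots, x_{\sigma^{-1}(n)}\bigr).
\end{equation*}
Relabelling variables preserves both multilinearity and total degree, so $q$ is multilinear of degree at most $n/3$. By construction, $q(z) = q_0(z_0) \neq 0$. Finally, for any $x \in \{0,1\}^n$ with $\abs{x} \in B$, the vector $(x_{\sigma^{-1}(1)},\dots,x_{\sigma^{-1}(n)})$ has the same Hamming weight as $x$, hence still lies in the vanishing set of $q_0$, giving $q(x) = 0$. This establishes all three required properties.

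There is essentially no obstacle here: the dimension-counting argument of Lemma \ref{lem:mt_rdeg_1} already did the heavy lifting, and the strengthening is just an observation about the $S_n$-symmetry of the problem. The only step worth double-checking is that the hypothesis of Lemma \ref{lem:mt_rdeg_1} is available at the particular value of $k = \abs{z}$ we need, which is guaranteed since the assumption $\abs{z} \in \{n/3,\dots,2n/3\}$ in the current lemma matches the range of $k$ covered there.
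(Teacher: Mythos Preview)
Your approach is exactly the paper's: invoke Lemma~\ref{lem:mt_rdeg_1} to get a good polynomial at \emph{some} point $z_0$ of the right Hamming weight, then transport it to the desired $z$ by a coordinate permutation, using that both the degree bound and the vanishing condition on Hamming slices are $S_n$-invariant.

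One small indexing slip: with $\sigma$ defined by $z_i = (z_0)_{\sigma^{-1}(i)}$ and $q(x) = q_0(x_{\sigma^{-1}(1)},\dots,x_{\sigma^{-1}(n)})$, you get $q(z)_j = z_{\sigma^{-1}(j)} = (z_0)_{\sigma^{-2}(j)}$, not $(z_0)_j$; so $q(z) = q_0(z_0)$ holds only when $\sigma$ is an involution. Replacing $\sigma^{-1}$ by $\sigma$ in the definition of $q$ (or in the definition of $\sigma$) fixes this, and the rest of the argument goes through unchanged.
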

\begin{proof}
    By the preceding lemma (\cref{lem:mt_rdeg_1}), there exists a polynomial $p$ of degree at most $n/3$ such that $p$ is non-zero at some $w\in \{0,1\}^n$ with $\abs{w} = \abs{z}$ but is zero at all $x\in \{0,1\}^n$ with $\abs{x} \in \{0,1,\dots, n/3-1\}\cup \{2n/3+1,\dots, n\}$. 

    Now, we simply observe that the following polynomial $q$ satisfies the conditions of the lemma:
    \begin{equation}
        q(x_1,\dots,x_n) = p(x_{i_1},\dots,x_{i_n}),
    \end{equation}
    where $i_1,\dots, i_n$ is any permutation of $1,\dots, n$ that bijects $\{i\in [n]\mid z_i =1\}$ with $\{i\in [n]\mid w_i=1\}$, which have the same size since $\abs{w} =\abs{z}$.
\end{proof}

The preceding lemma can be further strengthened to:
\begin{lemma}\label{lem:mt_rdeg_3}
    There exists a multilinear polynomial $q$ of degree at most $n/3$ that is non-zero at \emph{all} $z\in \{0,1\}^n$ with $\abs{z} \in \{n/3,\dots 2n/3\}$ but is zero at all $x\in \{0,1\}^n$ with $\abs{x} \in \{0,1,\dots, n/3-1\}\cup \{2n/3+1,\dots, n\}$. 
\end{lemma}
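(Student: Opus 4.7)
The plan is to bootstrap from \cref{lem:mt_rdeg_2}, which gives a ``separating'' polynomial for each individual good input, to a single polynomial that is simultaneously non-zero on all good inputs. The natural tool is the standard fact that a finite-dimensional vector space over an infinite field cannot be covered by finitely many proper subspaces.

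Concretely, I would let
\[
V = \{\, q \in \R[x_1,\dots,x_n] \colon q \text{ multilinear}, \deg(q) \leq n/3, \text{ and } q(x)=0 \text{ for all } x \text{ with } \abs{x}\in B\,\},
\]
where $B = \{0,1,\dots,n/3-1\}\cup\{2n/3+1,\dots,n\}$ is the set of ``bad'' Hamming weights. Then $V$ is a finite-dimensional real vector space. For each $z\in\{0,1\}^n$ with $\abs{z}\in\{n/3,\dots,2n/3\}$, let
\[
W_z = \{\, q \in V \colon q(z) = 0\,\}.
\]
Each $W_z$ is a linear subspace of $V$, and by \cref{lem:mt_rdeg_2} there exists some $q \in V$ with $q(z)\neq 0$, so $W_z \subsetneq V$ is a \emph{proper} subspace.

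The main (and only) step is then to invoke the fact that a vector space over an infinite field is not a finite union of proper subspaces. Since the number of good inputs $z$ is finite and $\R$ is infinite, we conclude that
\[
V \;\neq\; \bigcup_{z \colon \abs{z}\in\{n/3,\dots,2n/3\}} W_z,
\]
so there exists some $q\in V$ lying outside every $W_z$. This $q$ is a multilinear polynomial of degree at most $n/3$ that vanishes on all $x$ with $\abs{x}\in B$ and is simultaneously non-zero at every $z$ with $\abs{z}\in\{n/3,\dots,2n/3\}$, which is exactly what the lemma asserts.

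I do not expect a real obstacle here: the heavy lifting is done by \cref{lem:mt_rdeg_2}, and the avoidance-of-subspaces argument is entirely routine (it can even be realized explicitly by taking a generic $\R$-linear combination $q = \sum_z \lambda_z q_z$ of the per-point polynomials from \cref{lem:mt_rdeg_2}, since the set of ``bad'' coefficient tuples $(\lambda_z)$ is contained in a finite union of hyperplanes in $\R^{|\{z\colon \abs{z}\in\{n/3,\dots,2n/3\}\}|}$).
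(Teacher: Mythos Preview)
Your proposal is correct and is essentially the paper's argument. The paper phrases the subspace-avoidance step via its \cref{lem:avoidance_lemma} (choosing coefficients $\alpha_z$ so that $\sum_z \alpha_z q_z$ avoids zero on all good inputs), which is exactly the ``generic linear combination'' realization you mention at the end; your invocation of the standard fact that a real vector space is not a finite union of proper subspaces is an equivalent formulation of the same idea.
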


\begin{proof}
    The desired polynomial $q$ can be constructed by linearly combining the polynomials in $\{q_z \mid z\in \{0,1\}^n$ with $\abs{z} \in \{n/3,\dots, 2n/3\}\}$ with the property that $q_z$ is of degree at most $n/3$, is non-zero at $z$, and is zero at all $x\in \{0,1\}^n$ with $\abs{x} \in \{0,1,\dots, n/3-1\}\cup \{2n/3+1,\dots, n\}$. The existence of such $p_z$s is the content of  the preceding lemma (\cref{lem:mt_rdeg_2}). The existence of such a linear combination follows immediately from \cref{lem:avoidance_lemma}.
\end{proof}

\cref{prop:rdeg_lower_mt} is an easy corollary of the lemmas above.
\begin{proof}[Proof of \cref{prop:rdeg_lower_mt}]
    Since $\rdeg(\ndeg(\MT_n)) = \max(\ndeg(\MT_n),\ndeg(\overline{\MT_n}))$, it suffices to show $\ndeg(\MT_n) \leq n/3$ and $\ndeg(\overline{\MT_n}) \leq n/3+1$. The first inequality is just a rephrasing of \cref{lem:mt_rdeg_3}. 

    To see the second inequality, observe that the following polynomial $p$ is non-zero at every $x\in \{0,1\}^n$ with $\abs{x} \in \{0,1,\dots, n/3-1\}\cup \{2n/3+1,\dots, n\}$ and is zero at every $x\in \{0,1\}^n$ with $\abs{x} \in \{n/3,\dots,2n/3\}$:
    \begin{equation}
        p(x) \coloneqq \prod_{i\in \{n/3,\dots,2n/3\}} (x_1 + \dots + x_n - i).
    \end{equation}
    Therefore, $p$ non-deterministically represents $\overline{\MT_n}$. But the degree of $p$ is clearly $n/3+1$. Therefore, $\ndeg(\overline{\MT_n}) \leq n/3+1$.
\end{proof}

For completeness, we prove the fact used in the proof of \cref{lem:mt_rdeg_1}.
\begin{proof}[Proof of \cref{fact:binom_fact}]
    By symmetry of the binomial coefficients, the fact is equivalent to
     \begin{equation}
        \sum_{i\in \{0,1,\dots, n/3-1\}} {\textstyle\binom{n}{i}} < {\textstyle\binom{n}{n/3}}.
    \end{equation}
    Now, for every $i\in \{0,1,\dots,n/3-1\}$, we have 
    \begin{equation}
        \frac{\binom{n}{i+1}}{\binom{n}{i}} = \frac{n-i}{i+1},
    \end{equation}
    which decreases with $i$ and is at least $(n-(n/3-1))/(n/3) = 2+3/n>2$
    and so 
    \begin{equation}
        \sum_{i\in \{0,1,\dots, n/3-1\}} {\textstyle\binom{n}{i}} < (1/2 + 1/4 + \dots + 1/2^{n/3}) {\textstyle\binom{n}{n/3}} <  {\textstyle\binom{n}{n/3}},
    \end{equation}
    as claimed.
\end{proof}

\subsection{Monotone and Unate Functions}
\label{subsec:monotone}
In this subsection we prove that $\rdeg(f) = \sens(f)$ for monotone and unate Boolean functions $f$.  We note that for monotone functions it suffices to prove that $\sens(f) \leq \rdeg(f)$. This is because the certificate complexity of a monotone Boolean functions $f$ equals its sensitivity $\cert(f) = \sens(f)$ \cite{nisan1991}. Combining this with the fact that $\rdeg(f) \leq C(f)$ (via \cite{dewolf2000}) we can already conclude the other inequality.

\begin{claim}\label{claim:rdegmonotone}
For monotone Boolean functions $f\colon\{0,1\}^n\to \{0,1\}$, 
$s(f) \leq \rdeg(f)$.
\end{claim}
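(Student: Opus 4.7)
The plan is to invoke the identity $\rdeg(f)=\max\{\ndeg(f),\ndeg(\bar{f})\}$ recalled in the preliminaries and to lower-bound whichever non-deterministic degree matches the sensitivity. Let $s=\sens(f)$ and fix a point $x\in\{0,1\}^n$ with $\sens_f(x)=s$, and let $i_1,\ldots,i_s$ be the sensitive coordinates of $f$ at $x$. By monotonicity, the direction in which each $x_{i_j}$ can be flipped to change $f$ is determined by $f(x)$: if $f(x)=0$ then $x_{i_j}=0$ for every $j$, and if $f(x)=1$ then $x_{i_j}=1$ for every $j$. I will write out the case $f(x)=0$; the case $f(x)=1$ is symmetric with the roles of $f$ and $\bar{f}$ exchanged.

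The key observation is that monotonicity forces $f$ to take the value $1$ on every non-zero setting of the sensitive coordinates, with the remaining coordinates held to the values in $x$. Precisely, for any non-empty $T\subseteq\{i_1,\ldots,i_s\}$, the point $x^T$ (obtained by flipping the bits of $x$ indexed by $T$) dominates $x^{\{i_j\}}$ coordinate-wise for any $i_j\in T$, and sensitivity of $i_j$ gives $f(x^{\{i_j\}})=1$, so monotonicity yields $f(x^T)=1$. Meanwhile $f(x^{\emptyset})=f(x)=0$. Thus on the subcube spanned by the $s$ sensitive coordinates, $f$ is the indicator of being different from the all-zero point.

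Now let $p$ be a minimum-degree non-deterministic polynomial for $\bar{f}$, so $p(y)=0$ iff $f(y)=1$. Restrict $p$ by freezing the coordinates outside $\{i_1,\ldots,i_s\}$ to their values in $x$, then multilinearize the restriction, which does not increase the degree. The resulting multilinear polynomial $q\in\R[y_1,\ldots,y_s]$ satisfies $q(0^s)=p(x)\neq 0$ while $q(y)=0$ for every $y\in\{0,1\}^s\setminus\{0^s\}$. Because multilinear polynomials on $\{0,1\}^s$ are uniquely determined by their values on the cube, $q$ must be a non-zero scalar multiple of $\prod_{i=1}^s(1-y_i)$, which has degree exactly $s$. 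Hence $\ndeg(\bar{f})\geq\deg(q)\geq s$, and so $\rdeg(f)\geq\sens(f)$.

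I do not anticipate a real obstacle: the only choice to make is to pass to $\bar{f}$ rather than $f$ when $f(x)=0$, and this choice is forced by the direction in which the sensitive coordinates must flip under monotonicity. The argument is essentially a restriction-plus-dimension-count, and it uses monotonicity in a single, essential way: to guarantee that, within the sensitive subcube, all non-zero inputs have the opposite value from $x$, which is what pins down the multilinear restriction to $c\prod_i(1-y_i)$.
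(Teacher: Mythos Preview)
Your proof is correct and follows essentially the same route as the paper: restrict to the subcube of sensitive coordinates, use monotonicity to identify the restriction as $\OR_s$ (or $\AND_s$ in the $f(x)=1$ case), and conclude $\ndeg(\bar{f})\geq s$ (respectively $\ndeg(f)\geq s$). The only cosmetic difference is that the paper simply cites $\ndeg(\overline{\OR}_m)\geq m$, whereas you spell this out via the uniqueness of the multilinear representation $c\prod_i(1-y_i)$.
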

Our proof is similar to the proof that for all monotone Boolean functions $s(f) \leq \deg(f)$ as presented in \cite[Proposition 4]{buhrmandewolf2002survey}.

\begin{proof}
Suppose without loss of generality that $f$ is monotone increasing. We prove the claim by showing that
\begin{equation}
    \senszero(f) \leq \ndeg(\bar{f}) \quad \text{and} \quad \sensone(f) \leq \ndeg(f).
\end{equation}

We only prove the first inequality as the second can be proven analogously. Let $x$ be such that $\senszero(f) = \sens_f(x)$. All sensitive variables must be $0$ in $x$ since $f$ is monotone increasing. Moreover, setting any sensitive variable to $1$ changes the value of $f$ from $0$ to $1$. Therefore, fixing all variables in $x$ except for the $\senszero(f)$ many sensitive variables yields the $\OR_m$ function on $m\coloneqq\senszero(f)$ variables. Since $\ndeg(\overline{\OR}_m) \geq m$, $\ndeg(\bar{f}) \geq \senszero(f)$.
\end{proof}

We remark that \Cref{claim:rdegmonotone} cannot be extended to all Boolean functions, as evidenced by the Kushilevitz function $K_m\colon \{0,1\}^{6^m} \to \{0,1\}$ \cite{nisanwigderson1995}, which has full sensitivity, but degree $3^m$. Since $\sens(f) = \cert(f)$ and $\sqrt{\deg(f)} \leq \sens(f)$ for monotone functions \cite{nisan1991}, we have the following corollary.

\begin{corollary}\label{cor:rdeg-monotone}
    For monotone Boolean functions $f$, $\rdeg(f) = \sens(f)$. In particular, $\sqrt{\deg(f)} \leq \rdeg(f)$.
\end{corollary}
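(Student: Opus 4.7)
The plan is to chain together three already-available ingredients with essentially no new work. From the just-proved Claim \ref{claim:rdegmonotone} we have the lower bound $\sens(f)\le \rdeg(f)$. For the matching upper bound, I would invoke de Wolf's inequality $\rdeg(f)\le \cert(f)$ (recorded in the preliminaries via $\rdeg(f) = \max\{\ndeg(f),\ndeg(\bar f)\}$ and $\ndeg(f)\le \cert_1(f)$) together with Nisan's equality $\cert(f)=\sens(f)$ for monotone Boolean functions. Combining these immediately gives $\rdeg(f)\le \cert(f)=\sens(f)$, so both inequalities match and the first statement $\rdeg(f)=\sens(f)$ follows.

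For the second assertion I would invoke the well-known monotone-function bound $\sqrt{\deg(f)}\le \sens(f)$, also due to Nisan, and substitute the equality from the first part to obtain $\sqrt{\deg(f)}\le \sens(f)=\rdeg(f)$. No symmetrization, polynomial construction, or case analysis is needed; the entire proof is a one-line composition of prior inequalities, so the main obstacle is simply making sure the citations and the direction of each inequality are correctly lined up.

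Since monotonicity is only used through the identity $\cert(f)=\sens(f)$ and the bound $\sqrt{\deg(f)}\le \sens(f)$, I would also remark that the same argument applies verbatim to unate functions, because unateness is preserved under per-coordinate negation which leaves $\rdeg$, $\sens$, $\cert$, and $\deg$ invariant. This gives the unate extension claimed in \Cref{subsec:monotone} without extra effort. Thus the proof I would write is essentially two sentences: assemble the sandwich $\sens(f)\le \rdeg(f)\le \cert(f)=\sens(f)$, then quote $\sqrt{\deg(f)}\le \sens(f)$.
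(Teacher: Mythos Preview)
Your proposal is correct and matches the paper's argument essentially verbatim: the paper also sandwiches $\sens(f)\le\rdeg(f)\le\cert(f)=\sens(f)$ using \Cref{claim:rdegmonotone}, de Wolf's bound $\rdeg(f)\le\cert(f)$, and Nisan's $\cert(f)=\sens(f)$, then quotes $\sqrt{\deg(f)}\le\sens(f)$ for the second part. Your unate remark is likewise handled in the paper exactly as you suggest, via the invariance of $\rdeg$ under input negation (\Cref{prop:negation}).
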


This result also extends easily to unate functions. We first show that negating inputs does not change the rational degree.
\begin{claim}\label{prop:negation}
    Let $f$ be a Boolean function and let $S \subseteq [n]$. Let $f^\prime$ be the Boolean function defined by applying $f$ to the input after negating the inputs in $S$. Then $\rdeg(f) = \rdeg(f^\prime)$.
\end{claim}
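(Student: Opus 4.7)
The plan is to exhibit an explicit, degree-preserving correspondence between rational representations of $f$ and of $f'$. Let $T\colon \bool^n \to \bool^n$ be the involution that flips the coordinates in $S$, i.e., $T(x)_i = 1 - x_i$ for $i \in S$ and $T(x)_i = x_i$ otherwise. By definition of $f'$ we have $f'(x) = f(T(x))$ for every $x$, and since $T \circ T = \mathrm{id}$ we also have $f(x) = f'(T(x))$.

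First I would show $\rdeg(f') \leq \rdeg(f)$. Given a rational representation $f = p/q$ with $\max(\deg p, \deg q) = \rdeg(f)$, define $p'(x) \coloneqq p(T(x))$ and $q'(x) \coloneqq q(T(x))$. Each substitution $x_i \mapsto 1 - x_i$ is affine in $x_i$, so after expanding and reducing using $x_i^2 = x_i$, both $p'$ and $q'$ are multilinear polynomials of degree at most $\deg(p)$ and $\deg(q)$ respectively. Moreover, for every $x$ in the domain of $f'$,
\[
\frac{p'(x)}{q'(x)} = \frac{p(T(x))}{q(T(x))} = f(T(x)) = f'(x),
\]
so $p'/q'$ is a valid rational representation of $f'$, giving $\rdeg(f') \leq \max(\deg p, \deg q) = \rdeg(f)$.

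The reverse inequality $\rdeg(f) \leq \rdeg(f')$ is symmetric: applying the same construction to any rational representation of $f'$ and using the involution $T$ yields a rational representation of $f$ of no greater degree. Combining the two inequalities gives $\rdeg(f) = \rdeg(f')$.

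There is essentially no obstacle here: the only substantive point is that substituting an affine function of a single variable preserves (multilinear) degree, which is immediate. The same argument transfers verbatim to the $\pmone$ representation, where coordinate negation corresponds to $x_i \mapsto -x_i$ and obviously preserves degree as well.
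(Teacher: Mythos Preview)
Your proposal is correct and follows essentially the same approach as the paper: the paper also takes an optimal rational representation $p/q$ of $f$, substitutes $x_i \mapsto 1 - x_i$ for each $i \in S$ to obtain $p'/q'$, and observes that this affine substitution preserves degree while yielding a rational representation of $f'$. Your write-up is slightly more explicit in spelling out the reverse inequality via the involution, but the argument is the same.
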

\begin{proof}
    Let $p,q:\bool^n \to \R$ be polynomials such that $f(x) = p(x)/q(x)$ for all $x \in \bool^n$. Define a polynomial $p^\prime$ by taking every $i \in S$ and replacing every instance of $x_i$ in the multilinear expansion of $p$ with $1-x_i$, and define $q^\prime$ similarly. This operation does not change the degree: $\deg(p) = \deg(p^\prime)$ and $\deg(q) = \deg(q^\prime)$. Furthermore, it can easily be seen that $f^\prime(x) = p^\prime(x)/q^\prime(x)$, and the result follows.
\end{proof}

This fact allows for a simple proof that unate functions have large rational degree.

\begin{corollary}\label{cor:rdeg-unate}
    For unate Boolean functions $f$, $\sqrt{\deg(f)} \leq \rdeg(f)$.    
\end{corollary}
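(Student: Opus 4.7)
The plan is to reduce the unate case directly to the monotone case that was just handled in \cref{cor:rdeg-monotone}. By definition, a unate function $f$ becomes monotone after negating an appropriate subset $S\subseteq[n]$ of its input variables; call the resulting monotone function $f'$. So the strategy is to transport both $\rdeg$ and $\deg$ across this variable-flip.

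First I would invoke \cref{prop:negation}, which already tells us that $\rdeg(f) = \rdeg(f')$. Next I would observe the analogous statement for Fourier degree: namely $\deg(f) = \deg(f')$. This is immediate from the same substitution $x_i \mapsto 1-x_i$ (or equivalently $x_i \mapsto -x_i$ over $\pmone$) used in the proof of \cref{prop:negation}, since this substitution is a linear change of variables that maps multilinear polynomials of degree at most $d$ to multilinear polynomials of degree at most $d$, and is an involution (so it cannot strictly decrease the degree either). Hence $\deg$ is preserved.

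With $f'$ monotone, \cref{cor:rdeg-monotone} gives $\sqrt{\deg(f')} \leq \rdeg(f')$. Combining with the two invariances above,
\begin{equation*}
    \sqrt{\deg(f)} \;=\; \sqrt{\deg(f')} \;\leq\; \rdeg(f') \;=\; \rdeg(f),
\end{equation*}
which is exactly the claimed inequality.

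There is no genuine obstacle here: all the hard work is already contained in \cref{claim:rdegmonotone}, \cref{cor:rdeg-monotone}, and \cref{prop:negation}. The only thing to be slightly careful about is the (routine) verification that Fourier degree is also preserved under coordinatewise negation, so that the bound transports back from $f'$ to $f$.
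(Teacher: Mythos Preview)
Your proposal is correct and matches the paper's own proof essentially verbatim: reduce to the monotone case via the coordinate negations guaranteed by unateness, invoke \cref{prop:negation} for $\rdeg$, and apply \cref{cor:rdeg-monotone}. The only extra content you add is the explicit (and routine) check that $\deg$ is also preserved under coordinate negation, which the paper leaves implicit.
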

\begin{proof}
    Every unate Boolean function is equivalent to a monotone function under a negation of some subset of the input variables. Thus the claim follows by combining \cref{cor:rdeg-monotone} and \cref{prop:negation}. 
\end{proof}

Note that these bounds are tight, as witnessed by the $\AND$-of-$\OR$s function, $f = \AND_n \circ \OR_n$, on $n^2$ bits, which has $\rdeg\lpar f\rpar \leq \cert(f) = n = \sqrt{\deg(f)}$. Writing $f(x) = \bigwedge_{i=1}^n \bigvee_{j=1}^n x_{ij} $, then an explicit rational representation of $f$ of degree $n$ is
\begin{equation}
\frac{\prod_{i=1}^n(\sum_{j=1}^n x_{ij})}{\prod_{i=1}^n(\sum_{j=1}^n x_{ij}) + \sum_{i=1}^n\prod_{j=1}^n(1-x_{ij})}.
\end{equation}

\subsection{Read-Once Formulae}\label{subsec:readonce}

In this section, we demonstrate lower bounds on read-once $\AC, \AC[\oplus], \TC,$ and $\TC[\oplus]$ formulae. Our result is actually slightly stronger, and holds for any read-once formula where the gates are functions of high rational degree. We begin by proving a composition result for rational degree.
\begin{lemma}\label{lem:trickle-down}
    Let $f \colon \bool^n \to \bool$ and $g_i \colon \bool^{n_i} \to \bool$ be Boolean functions where every variable in each function is relevant. Defining $h\colon \{0,1\}^{\sum n_i}\to \{0,1\}$ to be $h(x^1,\dots, x^n) =  f(g_1(x^1),\dots,g_n(x^n))$, we have that 
    \begin{align*}
        \rdeg(h)\ge \max\{\rdeg(f), \rdeg(g_1), \dots, \rdeg(g_n)\}.
    \end{align*}
\end{lemma}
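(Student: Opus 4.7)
The plan is to prove the two kinds of lower bound separately: $\rdeg(h) \geq \rdeg(g_i)$ for each $i \in [n]$, and $\rdeg(h) \geq \rdeg(f)$. In both cases, I fix an arbitrary rational representation $P/Q$ of $h$ with $\max(\deg P, \deg Q) = \rdeg(h)$ and use either a restriction or a substitution to extract a rational representation of the desired subfunction whose numerator and denominator have degree at most $\rdeg(h)$.

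For $\rdeg(h) \geq \rdeg(g_i)$, the key observation is that since every variable is relevant for $f$, the $i$th coordinate is relevant, so there exist bits $c_j \in \{0,1\}$ for $j \neq i$ such that $f$ restricted by fixing the coordinates outside $i$ to these values is non-constant in the remaining coordinate. Since every variable is relevant for each $g_j$ with $j \neq i$, each such $g_j$ is non-constant and so attains the value $c_j$ at some input $a_j \in \{0,1\}^{n_j}$. Fixing $x^j = a_j$ for every $j \neq i$ in $P/Q$ then produces a rational representation of $g_i$ or its negation $\overline{g_i}$, whose numerator and denominator have degree at most $\rdeg(h)$. Since $\rdeg(g_i) = \rdeg(\overline{g_i})$, this gives the desired bound.

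For $\rdeg(h) \geq \rdeg(f)$, for each $i$ pick inputs $a_i^0, a_i^1 \in \{0,1\}^{n_i}$ with $g_i(a_i^{\beta}) = \beta$ (again possible since each $g_i$ is non-constant). For $b_i \in \{0,1\}$, the $j$th coordinate of $a_i^{b_i}$ is given by the linear form $\phi_i^j(b_i) \coloneqq (1-b_i)(a_i^0)_j + b_i (a_i^1)_j$. Substituting $x^i_j \leftarrow \phi_i^j(b_i)$ into $P$ and $Q$ produces polynomials $P'(b), Q'(b)$ in $b_1, \dots, b_n$ satisfying $P'(b)/Q'(b) = h(a_1^{b_1}, \dots, a_n^{b_n}) = f(b)$ on all of $\{0,1\}^n$; passing to their multilinear reductions $\widetilde{P'}, \widetilde{Q'}$ (valid because we only care about values on $\{0,1\}^n$) yields a rational representation of $f$.

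The only technical step is checking that $\deg \widetilde{P'} \leq \deg P$ (and analogously for $Q$), and this will be the main obstacle to verify carefully. Any monomial of the multilinear polynomial $P$ has the form $\prod_{\alpha=1}^{d} x^{i_\alpha}_{j_\alpha}$ with $d \leq \deg P$ and distinct pairs $(i_\alpha,j_\alpha)$; after substitution, each factor $\phi_{i_\alpha}^{j_\alpha}(b_{i_\alpha})$ depends only on $b_{i_\alpha}$, so the number of distinct variables $b_i$ appearing in the resulting product is at most the number of distinct indices among $i_1, \dots, i_d$, hence at most $d$. Multilinearization cannot introduce new variables, so the resulting multilinear monomials have degree at most $d$. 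Thus $\rdeg(f) \leq \max(\deg \widetilde{P'}, \deg \widetilde{Q'}) \leq \max(\deg P, \deg Q) = \rdeg(h)$, completing the proof.
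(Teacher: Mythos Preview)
Your proof is correct. The bound $\rdeg(h)\ge\rdeg(g_i)$ is proved exactly as in the paper: use relevance of the $i$th coordinate of $f$ to find a restriction making $f$ a dictator or anti-dictator in that coordinate, then fix the other blocks to hit the required values of $g_j$.

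For $\rdeg(h)\ge\rdeg(f)$ you take a genuinely different route from the paper. The paper uses that every variable of each $g_i$ is relevant to find a restriction $\rho^i$ of all but one coordinate of $x^i$ under which $g_i$ becomes a dictator or anti-dictator; combining these restrictions, $h|_\rho$ equals $f$ up to input negations, and one concludes by the (trivial) fact that restriction cannot increase $\rdeg$. You instead only use that each $g_i$ is non-constant: you pick \emph{arbitrary} preimages $a_i^0,a_i^1$ of $0$ and $1$, substitute the affine maps $x^i_j\leftarrow (1-b_i)(a_i^0)_j+b_i(a_i^1)_j$ into an optimal $P/Q$, and multilinearize. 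Your degree check is fine (indeed, since each $\phi_i^j$ has degree $\le 1$, the substitution already satisfies $\deg P'\le\deg P$ before multilinearization, which only helps). The paper's argument is slightly more elementary, avoiding any substitution or multilinearization by staying entirely within the world of restrictions; your argument, on the other hand, shows that the hypothesis ``every variable of $g_i$ is relevant'' is not actually needed for this direction --- non-constancy of each $g_i$ suffices.
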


\begin{proof}
    Since every variable is relevant for each $g_i$, we know there exist restrictions $\rho^i$ to all but $1$ variable in each $x^i$ such that $g_i|_{\rho^i}(x^i) = x_{k_i}^i$ or $(1-x_{k_i}^i)$ for some $1\le k_i\le n_i$. Considering the restriction $\rho = \rho^1\cup\dots\cup \rho^n$, it is evident that $h|_\rho(x) = f(x^1_{k_1},\dots,x^n_{k_n})$ up to negations. Therefore, 
    \begin{align}\label{eqn:f}
        \rdeg(h)\ge \rdeg(h|_\rho)\ge \rdeg(f).
    \end{align} \ 

    Now pick an arbitrary $i$. Since, by assumption, every variable of $f$ is relevant, there exists an assignment $x_j=z_j$ for all $j\neq i$ such that $f(z_1,\dots z_{i-1},x_i,z_{i+1},\dots, z_n) = x_i$ or $\conj{x_i}$. Since $g_i$ is nonconstant, it follows that there exists an assignment to the variables $(x^j)_{j\neq i}$ such that each $g_j(x^j)$ is fixed to $z_j$. 
    
    Let $\tau$ be the restriction induced by this partial assignment. Then  \begin{align*}
        h|_{\tau}(x) = f(z_1,\dots z_{i-1}, g_i(x^i), z_{i+1},\dots z_n) = g_i(x^i)\text{ or }\conj{g_i(x^i)}.
    \end{align*}
    Consequently, 
    \begin{align}\label{eqn:gi}
        \rdeg(h)\ge \rdeg(h|_{\tau})\ge \rdeg(g_i).
    \end{align} 
    Combining \Cref{eqn:f,eqn:gi} gives the desired result. 
\end{proof}

Next, we show how this composition result can be applied to rational degree lower bounds for read-once formulae.

\begin{lemma}
\label{lem:branching-factor}
    Let $f$ be written as a read-once formula with symmetric gates where the maximum branching factor of any node is $w$. Then $\rdeg(f) = \Omega(w)$. 
\end{lemma}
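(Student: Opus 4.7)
The plan is to reduce the statement to the symmetric lower bound of Proposition \ref{cor:symmetric-lower-bound} by extracting the fan-in-$w$ symmetric gate via the composition lemma. Without loss of generality I would assume every gate in the formula computes a non-constant function of its inputs, since any constant gate can be folded into its parent while preserving $f$ and only possibly shrinking $w$. Let $v$ be a node of fan-in $w$, and let $s\colon \{0,1\}^w \to \{0,1\}$ be the symmetric function it computes. The goal becomes $\rdeg(f) \geq \rdeg(s)$.

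Next, I would iterate Lemma \ref{lem:trickle-down} along the path from the root down to $v$. At each internal node $u$ on this path, read-onceness lets us factor the subformula as $f_u = g_u(h_1, \dots, h_{k_u})$ where $g_u$ is the symmetric gate at $u$ and the $h_j$ are subformulae rooted at $u$'s children, acting on pairwise disjoint variable sets. A short induction on depth shows that in a read-once formula of non-constant symmetric gates every variable of every subformula is relevant to that subformula: each non-constant symmetric gate on $k$ inputs admits a setting of $k-1$ inputs that leaves it sensitive to the last input (pick a Hamming weight $j$ at which $s$'s value flips between weights $j$ and $j+1$), and by read-onceness these sensitizing assignments combine across levels. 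Hence Lemma \ref{lem:trickle-down} applies at each step, giving $\rdeg(f_u) \geq \rdeg(f_c)$ for the child $c$ on the path to $v$. Telescoping, together with one final application at $v$ itself, yields $\rdeg(f) \geq \rdeg(s)$.

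Finally, I would invoke Proposition \ref{cor:symmetric-lower-bound} to get $\rdeg(s) \geq (\deg(s)+1)/3$, and the von Zur Gathen--Roche bound cited just after that proposition to conclude $\deg(s) \geq w - O(w^{0.548})$ for any non-constant symmetric function on $w$ bits. Chaining these inequalities gives $\rdeg(f) \geq \rdeg(s) = \Omega(w)$, as desired.

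The main obstacle I anticipate is the bookkeeping needed to justify the "every variable relevant" hypothesis for Lemma \ref{lem:trickle-down} at every step along the root-to-$v$ path, especially in handling the base cases where a leaf variable feeds directly into a gate. The argument is not hard, but it hinges on combining the non-constancy of each symmetric gate with the disjointness of variable sets guaranteed by the read-once assumption; both are necessary, as otherwise one could have a variable "blocked" by an always-zero sibling subformula.
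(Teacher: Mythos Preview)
Your proof is correct and follows essentially the same strategy as the paper: iterate Lemma~\ref{lem:trickle-down} along the root-to-$v$ path (using read-onceness and non-constancy of the gates to guarantee the relevance hypothesis) to obtain $\rdeg(f)\geq\rdeg(s)$, and then lower-bound $\rdeg(s)$. The only difference is that the paper avoids your von~zur~Gathen--Roche detour by invoking Lemma~\ref{lem:rdeg-symmetric} directly: for a non-constant symmetric $s$ on $w$ inputs one has $\max(|S_0|,|S_1|)\geq (w+1)/2$, giving $\rdeg(s)=\Omega(w)$ immediately without passing through $\deg(s)$.
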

\begin{proof}
    Let $p/q$ be a rational representation of $f$. Due to \cref{prop:negation}, we can assume without loss of generality that only the literals $x_i$, appear in the formula and not $\overline{x_i}$.
    
    Now consider the node $G$ with branching factor $w$. Let $F$ be the subformula with top gate $G$ and let $F_1,\dots, F_w$ be the read-once subformulae below $G$. Each $F_i$ is nonconstant, which implies the existence of a restriction $\rho_i$ of all but $1$ variable in each $V_i$ such that toggling the sole live variable (say $x_{k_i}$) toggles the value of $F_i$ (i.e. $F_i|_{\rho_i} = x_k$ or $\conj{x_k}$ for some $k$). As the $V_i$ are disjoint (as $f$ is read-once), these restrictions together define a unified restriction $\rho$ such that $F|_\rho(x) = G(x_{k_1},\dots ,x_{k_w})$ up to negations. Inductively using \Cref{lem:trickle-down} on the formula $f|_\rho$ by starting at the top node and going down the path to $G$, it follows that 
    \begin{equation}
        \rdeg(f)\geq \rdeg(f|_\rho)\ge \rdeg(F|_\rho) =\rdeg(G) \geq w/2,
    \end{equation} 
    where the last inequality follows from \Cref{lem:rdeg-symmetric}.
\end{proof}
Now we can prove rational degree lower bounds on read-once formulae with symmetric gates.
\begin{corollary}\label{cor:read-once-symmetric}
    Let $f$ be written as a depth-$d$ read-once formula with symmetric gates. Then $\rdegnull(f) = \Omega(\deg(f)^{1/d})$.
\end{corollary}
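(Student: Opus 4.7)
The plan is to deduce the corollary by combining the branching-factor bound of \Cref{lem:branching-factor} with a simple structural observation relating the depth, the maximum branching factor, and the Fourier degree.

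First, I would set $w$ to be the maximum branching factor (fan-in) among all gates in the given depth-$d$ read-once formula for $f$. Assuming without loss of generality that every variable is relevant (otherwise we can prune and the degree, rational degree, and depth only decrease or stay the same), the number $n$ of input variables equals the number of leaves of the formula. Since each internal node has at most $w$ children and the formula has depth $d$, there are at most $w^d$ leaves, giving $n \leq w^d$. Combined with the trivial inequality $\deg(f) \leq n$ (which holds for every Boolean function on $n$ bits), this yields
\begin{equation*}
w \;\geq\; n^{1/d} \;\geq\; \deg(f)^{1/d}.
\end{equation*}

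Second, I would invoke \Cref{lem:branching-factor} directly on the formula for $f$ (whose gates are symmetric by hypothesis). That lemma gives $\rdeg(f) = \Omega(w)$. Chaining the two inequalities produces
\begin{equation*}
\rdeg(f) \;=\; \Omega(w) \;=\; \Omega\!\left(\deg(f)^{1/d}\right),
\end{equation*}
which is exactly the claim (recalling that $\rdegnull$ and $\rdeg$ denote the same exact rational degree $\rdeg_0$).

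There is essentially no hard step here; the proof is just a one-line reduction once \Cref{lem:branching-factor} is in hand. The only mild subtleties are the preliminary cleanup -- restricting away any irrelevant variables so that every leaf of the (pruned) formula corresponds to a variable that $f$ genuinely depends on, and verifying that this cleanup does not increase the depth or destroy the symmetric/read-once structure. Once that is noted, the counting bound $n \leq w^d$ and the trivial bound $\deg(f)\leq n$ close the argument.
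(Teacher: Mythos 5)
Your proof is correct and takes essentially the same route as the paper: the paper likewise combines \Cref{lem:branching-factor} with the observation that a depth-$d$ formula whose gates all have fan-in below $n^{1/d}$ would have fewer than $n$ leaves (your $n \leq w^d$ bound, phrased contrapositively), together with $\deg(f) \leq n$. Your added remark about pruning irrelevant variables is harmless but not needed, since $\deg(f)$ is already bounded by the number of variables appearing in the formula.
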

\begin{proof}
    The result follows from \Cref{lem:branching-factor} and a simple contradiction argument: if all nodes have branching factor strictly less than $n^{1/d}$ then there must be strictly fewer than $n$ literals. Note that $n \geq \deg(f)$ so the lower bound $\Omega(n^{1/d})$ is stronger.
\end{proof}
In particular, this result holds for read-once $\AC[\oplus]$ formulae. This lower bound is tight for $d=2$, as witnessed by the $\AND$-of-$\OR$s function $f = \AND_{\sqrt{n}}\circ \OR_{\sqrt{n}}$. Indeed, $\rdeg\lpar f\rpar \leq \cert(f) = n^{1/2} = \sqrt{\deg(f)}$. 

A similar, albeit weaker result can be shown for read-once formulae where the gates are symmetric \emph{or} unate.
\begin{corollary}\label{cor:read-once-unate}
    Let $f$ be written as a depth-$d$ read once formula where every gate is symmetric and/or unate. Then $\rdeg(f) = \Omega(\deg(f)^{1/2d})$.
\end{corollary}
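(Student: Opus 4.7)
The plan is to adapt the strategy of \cref{cor:read-once-symmetric} (and its supporting \cref{lem:branching-factor}) to the mixed-gate setting, replacing the linear-in-fan-in symmetric bound with the square-root unate bound wherever the relevant gate is unate. Throughout I would assume the formula is in minimal form, i.e., each gate is non-constant and depends on all of its inputs (else the formula could be simplified). Let $n$ be the number of leaves, so $\deg(f) \le n$.

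First, by a counting argument: since the formula has depth $d$, if every gate had fan-in strictly less than $n^{1/d}$ then the total leaf count would be less than $(n^{1/d})^d = n$. Hence some gate $G$ has fan-in $w \ge n^{1/d}$. Second, mimicking the restriction construction in the proof of \cref{lem:branching-factor}, I would pick a path from the root down to $G$ and, for each sibling subformula along this path as well as each subformula beneath $G$'s children, fix all but one of its variables so that the subformula collapses to a literal; iteratively applying \cref{lem:trickle-down} along the root-to-$G$ path then yields $\rdeg(f) \ge \rdeg(G)$.

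Third, I would lower bound $\rdeg(G)$ according to $G$'s type. If $G$ is symmetric, \cref{lem:rdeg-symmetric} gives $\rdeg(G) \ge w/2$. If $G$ is unate, then since $G$ depends on all $w$ of its inputs, the classical fact that a unate Boolean function on $w$ variables depending on all of them has polynomial degree exactly $w$ yields $\deg(G) = w$, and \cref{cor:rdeg-unate} then gives $\rdeg(G) \ge \sqrt{w}$. In either case
\begin{equation*}
\rdeg(f) \;\ge\; \sqrt{w} \;\ge\; n^{1/(2d)} \;\ge\; \deg(f)^{1/(2d)},
\end{equation*}
as claimed.

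The main obstacle is the last step for unate gates: it requires the fact that a unate function depending on all $w$ of its variables has full degree $w$. This is a standard M\"obius/inclusion-exclusion observation about monotone Boolean functions (combined with \cref{prop:negation} to reduce from unate to monotone), but it has to be cited or verified by a short auxiliary lemma. If one prefers to avoid that route, \cref{cor:rdeg-monotone} together with \cref{prop:negation} give $\rdeg(G) = \sens(G)$ for unate $G$, and one can then argue combinatorially (e.g.\ via the inequality $\cert_0(G) \cdot \cert_1(G) \ge w$ when all $w$ variables are relevant) that $\sens(G) \ge \sqrt{w}$, which suffices.
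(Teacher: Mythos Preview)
Your proposal is correct and follows essentially the same route as the paper: locate a gate of fan-in $w \ge n^{1/d}$ by counting, use the restriction/trickle-down argument of \cref{lem:branching-factor} to get $\rdeg(f)\ge\rdeg(G)$, and then case-split on whether $G$ is symmetric (apply \cref{lem:rdeg-symmetric}) or unate (apply \cref{cor:rdeg-unate}). You are in fact more explicit than the paper about why \cref{cor:rdeg-unate} yields $\rdeg(G)\ge\sqrt{w}$ --- namely, that a unate gate depending on all $w$ inputs has $\deg(G)=w$ --- a step the paper leaves implicit.
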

\begin{proof}
    The proof is similar to that of \cref{cor:read-once-symmetric}. Consider any gate in the formula and let $w$ be its branching factor. If this node corresponds to a symmetric or unate gate then we have $\rdeg(f) = \Omega(w)$ (via \cref{lem:rdeg-symmetric}) or $\rdeg(f) = \Omega(\sqrt{w})$ (via \cref{cor:rdeg-unate}), respectively. 
    By contradiction, at least one node has branching factor at least $\deg(f)^{1/d}$.
    The result follows by applying \cref{lem:branching-factor}.
\end{proof}
Thus, in particular, read-once $\TC[\oplus]$ formulae have rational degree at least $\Omega(\deg(f)^{1/2d})$, as they consist of linear threshold gates, which are unate, and parity gates, which are symmetric.

Observe that if we restrict our attention to read-once $\AC$ and $\TC$ (that is, if we exclude parity gates), the lower bounds in \cref{cor:read-once-symmetric} and \cref{cor:read-once-unate} improve to $\sqrt{n}$ via \cref{cor:rdeg-unate}. This is because the composition of unate functions is unate, and read-once $\AC$ and $\TC$ are entirely comprised of unate gates.

\begin{corollary}\label{cor:threshold_read_nce}
    For any read-once $\TC$ formula $f$ on $m$ disjoint variables, $\sqrt{m} \leq \rdeg(f)$.    
\end{corollary}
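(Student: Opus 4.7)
The plan is to argue $\rdeg(f) \ge \sens(f) = \cert(f) \ge \sqrt{\dt(f)} = \sqrt{m}$, where each inequality invokes either a result from the paper or a classical fact. I would go through sensitivity rather than $\sqrt{\deg(f)}\le\rdeg(f)$ directly, because for read-once $\TC$ the relation $\deg(f) = m$ need not hold: even the single threshold gate $g = \mathbf{1}[2y_1 + y_2 + y_3 + y_4 \ge 3]$ has all four variables relevant but Fourier degree only $3$, so \Cref{cor:rdeg-unate} alone is a priori too weak to yield $\sqrt{m}$.

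First I would establish that $f$ is unate. Each linear threshold gate is unate in every input, and composition of unate functions on disjoint input variable sets preserves unateness, as noted in the paragraph just before the corollary. Iterating through the formula, $f$ is unate on its $m$ variables; I may assume every variable is relevant (else I pass to the induced read-once $\TC$ formula on fewer variables, for which the target bound is stronger), and by \Cref{prop:negation} I may further assume $f$ is monotone, since $\rdeg$, $\sens$, and $\cert$ are all invariant under negating any subset of the inputs. Then \Cref{cor:rdeg-monotone} gives $\rdeg(f) = \sens(f) = \cert(f)$, reducing the corollary to the lower bound $\cert(f) \ge \sqrt{m}$.

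For that lower bound I would combine two classical facts. First, read-once Boolean formulae in which every variable is relevant are \emph{evasive}, i.e.\ $\dt(f) = m$; this follows from a bottom-up adversary argument that, at each unate (in particular threshold) gate, answers queries so as to keep the gate's value ambiguous until all variables in one of its sub-formulae have been queried. Second, the Blum--Impagliazzo bound gives $\dt(f) \le \cert_0(f)\cdot\cert_1(f) \le \cert(f)^2$. Combining, $\cert(f) \ge \sqrt{\dt(f)} = \sqrt{m}$, whence $\rdeg(f) \ge \sqrt{m}$.

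The main obstacle is extending the classical evasiveness argument from $\AC$ to $\TC$: at each threshold gate the adversary must maintain the partial weighted sum of already-answered inputs in the ambiguous range around the threshold, which remains possible precisely because the gate is unate with all inputs relevant, so some unqueried sub-formula can still push the sum to either side. Once this is in place, the three steps above combine to give $\sqrt{m} \le \rdeg(f)$, as claimed.
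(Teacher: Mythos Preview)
Your route differs from the paper's. The paper simply observes that $f$, being a disjoint composition of unate gates, is itself unate, asserts $\deg(f) = m$, and invokes \Cref{cor:rdeg-unate}. You instead chain $\rdeg(f) = \cert(f) \ge \sqrt{\dt(f)}$ and argue $\dt(f) = m$ by an adversary.

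Your objection to the step $\deg(f) = m$ is valid under the paper's own definition of $\TC$ gates as \emph{arbitrary} linear threshold functions: your example $g = \mathbf{1}[2y_1 + y_2 + y_3 + y_4 \ge 3]$ is a single weighted threshold gate on four relevant variables with $\deg(g) = 3$. (The assertion \emph{does} hold when every gate is a symmetric threshold $\Thr_k$, since each such gate has full Fourier degree and full degree is preserved under disjoint block composition; that is the setting the paper tacitly adopts in \Cref{app:readonce_tc}.) So your detour through $\dt$ genuinely buys coverage of weighted thresholds that the paper's one-line argument does not.

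The one place your sketch is thin is single-gate evasiveness. Your adversary ``keep the partial weighted sum ambiguous'' can lose even when the gate has all inputs relevant: for $g = \mathbf{1}[3x_1 + x_2 + x_3 \ge 2]$, if the querier asks $x_2$ first and the adversary answers $0$, the restriction becomes $\mathbf{1}[3x_1 + x_3 \ge 2] = x_1$, so $x_3$ is now irrelevant and two queries suffice. The fix is to maintain the stronger invariant that every unqueried variable remains \emph{relevant} in the current restriction, and to verify that for threshold functions at least one of the two answers always preserves this; in the example, the adversary must answer $x_2 = 1$. With that case analysis in hand, evasiveness of the whole read-once formula follows by composing adversaries exactly as you describe, and the rest of your argument is correct.
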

\begin{proof}
    We know that $f$ is composed entirely of unate gates. By a simple inductive argument, the composition of unate functions is unate. Thus we deduce that $f$ is unate, and since the degree of $f$ is $m$, the result follows from \cref{cor:rdeg-unate}.
\end{proof}
In \cref{app:readonce_tc}, we prove a structural result, \cref{prop:read_once_threshold}, showing that any read-once $\TC$ formula on $m$ disjoint variables must restrict to either an $\AND$ or $\OR$ function on at least $\sqrt{m}$ disjoint variables (up to negation). \cref{prop:read_once_threshold} immediately implies \cref{cor:threshold_read_nce} but is not implied by it and so may be of independent interest.

It can be shown that there exist arbitrary-depth read-once formulae with rational degree at most $\sqrt{n}$ (see \Cref{fig:and-or-upperbound}). The inverse dependence on depth in \cref{cor:read-once-symmetric} and \cref{cor:read-once-unate} is somewhat unintuitive, and we conjecture that these results are not tight for $d > 2$.

    \begin{figure}[H]
        \centering        \includegraphics[width=4.5in]{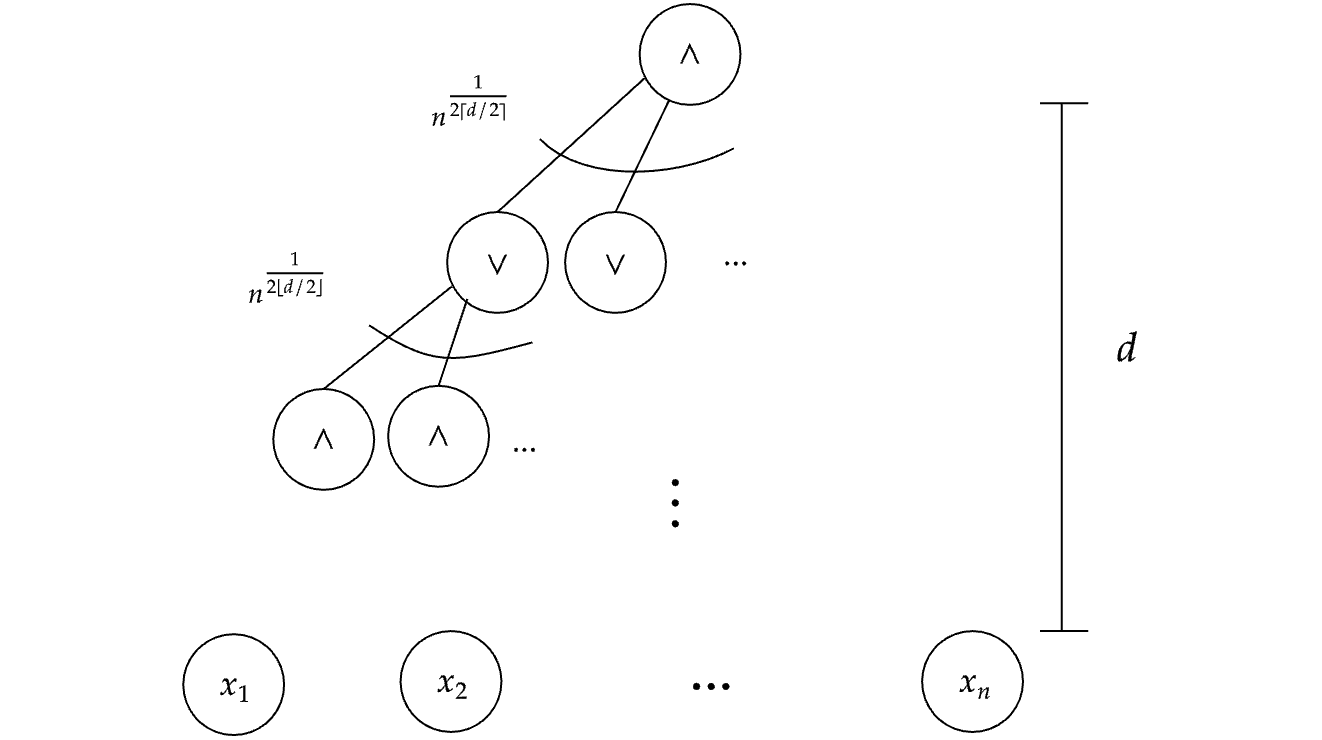}
        \caption{A depth-$d$ $\AND$-$\OR$ tree with certificate complexity $\sqrt{n}$, and thus rational degree at most $\sqrt{n}$. Indeed, setting all input wires to $1$ for $\AND$ functions and a single input wire to $1$ for $\OR$ functions along a single path to root gives a $1$ certificate, and setting all input wires to $0$ for $\OR$ functions and a single input wire to $0$ for $\AND$ functions gives a $0$-certificate. One can easily verify that both of these certificates are of size $\sqrt{n}$.}
        \label{fig:and-or-upperbound}
    \end{figure}

\subsection{Random Functions}
\label{subsec:random}
As our final piece of evidence that rational degree is polynomially related to degree, we prove that all but a negligible fraction of Boolean functions $f\colon \pmone^n \to \pmone$ have rational degree at least $n/2 -\bigO(\sqrt{n})$. 

As mentioned in the introduction Alon \cite{alonslicingthecube} and Anthony \cite{ANTHONY199591} used counting arguments to show that all but a negligible fraction of $n$-bit Boolean functions have sign degree at least $n/2$. We note that while sign degree is asymptotically a lower bound for rational degree, it loses a factor of 2, which gives a lower bound of $n/4$. To get a sign representation of $f$ from the nondeterministic polynomials $p(x)$ for $f$ and $q(x)$ for $\bar{f}$, one can construct $p(x)^2 - q(x)^2$. To show the tighter result for rational degree, we restate a variant of the function counting theorem used by Anthony.

Given a finite set $X$ and a mapping $\phi\colon X \to \mathbb{R}^d$, we say that a $\phi$-separable dichotomy of $X$ is a partition of $X$ into subsets $X^{+}\cup X^{-}$ such that there exists some $w\in \mathbb{R}^d$ for which $w\cdot \phi(x) > 0$ for all $x\in X^{+}$ and $w\cdot \phi(x) < 0$ for all $ x\in X^{-}$.

\begin{theorem}[Function counting theorem, \cite{cover1965}]\label{thm:function_counting}
    Let $\phi\colon S \to \mathbb{R}^d$. Let $X = \{x_1, \dots, x_N\} \subseteq S$. If a $\phi$-surface (i.e., a set of the form $\{x \in S: w\cdot \phi(x) = 0\}$ for some $w\in \mathbb{R}^d$) contains a set of points $Y = \{y_1, y_2, \dots, y_k\}\subseteq S$, where $\phi(y_i)$ are linearly independent for all $i$, and where the projection of $\phi(x_1), \dots, \phi(x_N)$ onto the subspace orthogonal to the span of the $\phi(y_i)$'s is in general position, then there are $C(N, d-k)$ many $\phi$-separable dichotomies of $X$, where 
    \begin{align*}
        C(N, d) = 2 \sum_{i=0}^{d-1} \binom{N-1}{i}.
    \end{align*}
\end{theorem}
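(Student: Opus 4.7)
The strategy is to reduce the statement to the classical Cover counting theorem (the case $k = 0$) and then prove that case by induction. For the reduction, observe that a $\phi$-surface contains $Y$ if and only if its defining vector $w$ satisfies $w \cdot \phi(y_i) = 0$ for each $i \in [k]$; since the $\phi(y_i)$ are linearly independent, these constraints cut $w$ down to a $(d-k)$-dimensional subspace $W \subseteq \mathbb{R}^d$. For $w \in W$, the sign of $w \cdot \phi(x_i)$ depends only on the orthogonal projection $\pi(\phi(x_i))$ onto $W$, and by the general-position hypothesis the projections $\pi(\phi(x_1)), \ldots, \pi(\phi(x_N))$ are in general position in $W \cong \mathbb{R}^{d-k}$. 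Hence the $\phi$-separable dichotomies of $X$ are in bijection with linear dichotomies of $N$ points in general position in $\mathbb{R}^{d-k}$, and it suffices to show that this latter count equals $C(N, d-k)$.

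For the $k = 0$ case, I would induct on $N + d$ with base cases $C(1, d) = 2$ (a single nonzero point admits both sign assignments via $w$ and $-w$) and $C(N, 1) = 2$ (in one dimension, $w > 0$ and $w < 0$ give the only two sign-determined dichotomies). The inductive step is the Pascal-type recurrence
\begin{equation*}
C(N, d) = C(N-1, d) + C(N-1, d-1),
\end{equation*}
which is compatible with the closed form via $\binom{N-1}{i} = \binom{N-2}{i} + \binom{N-2}{i-1}$. To derive the recurrence, single out $x_N$ and classify each linear dichotomy $\sigma$ of $\{x_1, \ldots, x_{N-1}\}$ by how many ways it extends to a dichotomy of $X$. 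Since the set of $w$ realizing $\sigma$ is an open convex cone, $\sigma$ extends in two ways exactly when this cone meets both $\{w \cdot \phi(x_N) > 0\}$ and $\{w \cdot \phi(x_N) < 0\}$, and by convexity this occurs exactly when some realizing $w$ lies on the hyperplane $\phi(x_N)^\perp$. Such dichotomies of $\{x_1, \ldots, x_{N-1}\}$ correspond to linear dichotomies of the projections of $\phi(x_1), \ldots, \phi(x_{N-1})$ onto $\phi(x_N)^\perp \cong \mathbb{R}^{d-1}$, and the inductive hypothesis gives $C(N-1, d-1)$ of them. Summing one extension per dichotomy of the first $N-1$ points (a count of $C(N-1, d)$) with the additional extension contributed by each ``on-boundary'' dichotomy yields $C(N-1, d) + C(N-1, d-1)$, as desired.

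The main obstacle is making the continuity argument fully rigorous while tracking the general-position hypothesis through the projection step: one must argue both directions of the equivalence between ``extendable in two ways'' and ``realizable by a hyperplane through $\phi(x_N)$'', and verify that projecting onto $\phi(x_N)^\perp$ preserves general position of the remaining points. The converse direction in particular requires that a realizing $w \in \phi(x_N)^\perp$ can be perturbed into each open half-space while preserving the strict inequalities on $\phi(x_1), \ldots, \phi(x_{N-1})$, which uses the openness of the strict-inequality conditions. Once this bookkeeping is handled, the rest of the proof is a routine induction on the recurrence above.
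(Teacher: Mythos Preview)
The paper does not prove this theorem; it is stated as a cited result from Cover~\cite{cover1965} and used as a black box to derive \cref{cor:two_plane_colorings} and the subsequent random-function lower bound. So there is no proof in the paper to compare against.

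That said, your sketch is the standard route to Cover's theorem and is essentially correct: the reduction from general $k$ to $k=0$ via orthogonal projection onto $W = \mathrm{span}\{\phi(y_i)\}^\perp$ is exactly right, and the Schl\"afli--Cover recurrence $C(N,d)=C(N-1,d)+C(N-1,d-1)$ together with its combinatorial derivation (classifying dichotomies of $\{x_1,\dots,x_{N-1}\}$ by whether the realizing cone straddles the hyperplane $\phi(x_N)^\perp$) is the classical argument. Your identification of the main technical point---that projecting $\phi(x_1),\dots,\phi(x_{N-1})$ onto $\phi(x_N)^\perp$ preserves general position, and that the open-cone perturbation argument really does give both sign extensions---is accurate, and these are where the general-position hypothesis is actually consumed. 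One small remark: the base case $C(N,1)=2$ also relies on general position (no $\phi(x_i)=0$), which you are implicitly using.
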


We consider the following adaptation of the above theorem. Consider a set of $N$ points $S = \{v_1, \dots, v_n\}$ in $\mathbb{R}^D$. Given a 2-coloring of the points $f\colon [N] \to \{-1,1\}$, we say that the coloring $f$ is separable by two hyperplanes if there exist hyperplanes $H_j = \{v:\ \alpha_j\cdot v = 0 \}$ for $j = 1,2$ such that 
\begin{align*}
    \forall i \in [N]\colon\ f(i) = \sgn((\alpha_1 \cdot v_i)(\alpha_2 \cdot v_i)).
\end{align*}

\begin{corollary}\label{cor:two_plane_colorings}
    Given $N$ points in $\mathbb{R}^M$, the number of two colorings $f\colon [N] \to \pmone$ that are separable by two hyperplanes is at most $C(N, M)^2$.
\end{corollary}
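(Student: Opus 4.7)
The plan is to factor a two-hyperplane coloring into a pointwise product of two single-hyperplane colorings and then apply the function counting theorem (\cref{thm:function_counting}) to each factor separately. Given any coloring $f\colon [N]\to \pmone$ separable by two hyperplanes with normals $\alpha_1,\alpha_2\in\mathbb{R}^M$, the condition $f(i) = \sgn((\alpha_1\cdot v_i)(\alpha_2\cdot v_i)) \in \pmone$ forces both $\alpha_1\cdot v_i$ and $\alpha_2\cdot v_i$ to be non-zero for every $i\in[N]$. Hence I can define auxiliary single-hyperplane colorings $g_j\colon [N]\to \pmone$ by $g_j(i) \coloneqq \sgn(\alpha_j\cdot v_i)$ for $j = 1,2$, and the tautological factorization $f(i) = g_1(i)\cdot g_2(i)$ holds on all of $[N]$.

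Next, I would bound the number of possible $g_j$ by invoking \cref{thm:function_counting} with $\phi$ taken to be the identity map on $\mathbb{R}^M$ (so $d=M$) and $k=0$, which says that the number of $\phi$-separable dichotomies of $\{v_1,\dots,v_N\}$ is at most $C(N,M)$. (In the general-position case the theorem gives equality; in degenerate cases the count is only smaller, so $C(N,M)$ always serves as a valid upper bound.) In particular, at most $C(N,M)$ distinct functions $g_j$ can arise as $\sgn(\alpha\cdot v_i)$ over all choices of $\alpha\in\mathbb{R}^M$.

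Finally, the map that sends a pair $(g_1,g_2)$ of linearly separable dichotomies to its pointwise product $g_1\cdot g_2$ is, by the factorization established above, surjective onto the set of all two-hyperplane colorings of $\{v_1,\dots,v_N\}$. Therefore the number of two-hyperplane colorings is at most the number of such pairs, which is $C(N,M)\cdot C(N,M) = C(N,M)^2$, as claimed.

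I do not anticipate a substantive obstacle: the argument is essentially a product-rule count once the factorization is observed. The one sanity-check is that the product map is indeed surjective onto the enumerated set of colorings, which follows immediately from the definition of two-hyperplane separability and the non-vanishing of both $\alpha_j\cdot v_i$.
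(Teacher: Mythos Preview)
Your proposal is correct and is essentially the same argument as the paper's: factor a two-hyperplane coloring $f$ as the pointwise product of two single-hyperplane colorings $g_1, g_2$, bound the number of each $g_j$ by $C(N,M)$ via the function counting theorem, and multiply. The paper's proof is terser, but the idea is identical.
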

\begin{proof}
    Let $S \subset \mathbb{R}^M$ be given and suppose that $f$ is a coloring that is separated by the hyperplanes $H_1$ and $H_2$. Then there exist colorings $f_1, f_2$ that are separated by the hyperplanes $H_1$ and $H_2$ respectively. Since there are at most $C(N,M)$ choices for each of $f_1$ and $f_2$, the number of such colorings $f$ is bounded by $C(N,M)^2$.
\end{proof}

\begin{lemma}\label{lemma:rational_degree_count}
    Let $m \leq n$ be two positive integers. The number of Boolean functions $f\colon \pmone^n \to \pmone$ with rational degree at most $m$ is at most $C(2^n, \binom{n}{\leq m})^2$.
\end{lemma}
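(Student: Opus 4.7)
The plan is to turn the rational-representation condition into a two-hyperplane separation condition and then invoke \cref{cor:two_plane_colorings}.

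First I would observe that if $\rdeg(f) \le m$, then there exist multilinear polynomials $p,q$ of degree at most $m$ with $q$ nonvanishing on $\pmone^n$ and $f(x) = p(x)/q(x)$ for all $x$. Since $f(x)\in \pmone$, this forces $p(x)\neq 0$ as well, and so
\begin{equation*}
f(x) \;=\; \sgn\!\bigl(p(x)\,q(x)\bigr)
\qquad \text{for all } x \in \pmone^n.
\end{equation*}
Next I would encode $p$ and $q$ as linear functionals on a feature space. Let $M \coloneqq \binom{n}{\le m}$ and define $\phi\colon \pmone^n \to \R^M$ by $\phi(x) \coloneqq (\chi_S(x))_{|S|\le m}$. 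Since any multilinear polynomial of degree at most $m$ is a linear combination of these characters, there exist vectors $\alpha,\beta\in\R^M$ with $p(x) = \alpha\cdot \phi(x)$ and $q(x) = \beta\cdot \phi(x)$. Therefore
\begin{equation*}
f(x) \;=\; \sgn\!\bigl((\alpha\cdot \phi(x))(\beta\cdot\phi(x))\bigr),
\end{equation*}
which is exactly the condition that the coloring $f$ of the $N\coloneqq 2^n$ points $\{\phi(x) : x \in \pmone^n\}\subset \R^M$ is separable by the two hyperplanes $H_1 = \{v : \alpha\cdot v = 0\}$ and $H_2 = \{v : \beta\cdot v = 0\}$.

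Finally I would apply \cref{cor:two_plane_colorings} to the $N$ points $\phi(x)$ sitting in $\R^M$: the number of two-hyperplane-separable dichotomies is at most $C(N,M)^2 = C(2^n,\binom{n}{\le m})^2$. Since every Boolean function $f$ on $\pmone^n$ is uniquely determined by its induced coloring of $\{\phi(x)\}$ (the $\phi(x)$ are distinct because the characters separate points), and every $f$ with $\rdeg(f)\le m$ yields such a two-hyperplane-separable coloring, the total number of such functions is bounded by $C(2^n,\binom{n}{\le m})^2$, as desired.

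The only subtle step is verifying that the sign identity $f(x)=\sgn(p(x)q(x))$ actually holds on every point of the hypercube (i.e., that neither $p(x)$ nor $q(x)$ vanishes), but this follows immediately from $f(x)\in\pmone$ together with the fact that a valid rational representation has nonvanishing denominator on the domain; the rest is bookkeeping against the function-counting theorem.
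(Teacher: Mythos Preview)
Your proposal is correct and follows essentially the same approach as the paper: both embed each $x\in\pmone^n$ into $\R^M$ via the feature map $\phi(x)=(\chi_S(x))_{|S|\le m}$, observe that a rational representation $p/q$ of degree $\le m$ yields $f(x)=\sgn\bigl((\hat p\cdot\phi(x))(\hat q\cdot\phi(x))\bigr)$, and then invoke \cref{cor:two_plane_colorings}. Your write-up is in fact slightly more careful than the paper's in explicitly justifying that $p(x)\neq 0$ and that the points $\phi(x)$ are distinct.
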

\begin{proof}
    Let $M = \binom{n}{\leq m}$. For each $x \in \pmone^n$ define $v_x \in \mathbb{C}^M$ by letting $(v_x)_S = \chi_S(x)$ for $S \subseteq [n]$, $\abs{S} \leq m$. Suppose $p/q$ is a rational representation of $f$ of degree at most $m$. Then 
    \begin{align*}
        f(x) = \sgn(p(x)/q(x)) = \sgn(p(x)q(x)) = \sgn((\hat{p}\cdot v_x)(\hat{q}\cdot v_x)).
    \end{align*}
    Thus the coloring given by $f$ is separated by the hyperplanes defined by $\hat{p}$ and $\hat{q}$. The result follows by \Cref{cor:two_plane_colorings}. 
\end{proof}

We can now state and prove our extremal lower bound on rational degree. 
\begin{corollary}
\label{prop:random}
    All but a negligible fraction of Boolean functions on $n$ variables have rational degree at least $n/2 - \bigO(\sqrt{n})$.
\end{corollary}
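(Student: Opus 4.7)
The plan is to combine the counting bound from \cref{lemma:rational_degree_count} with a Chernoff-type estimate on $\binom{n}{\leq m}$. Let $F(n,m)$ denote the number of Boolean functions $f\colon\pmone^n \to \pmone$ with $\rdeg(f) \le m$. Setting $M = \binom{n}{\leq m}$, \cref{lemma:rational_degree_count} gives
\begin{equation*}
F(n,m) \;\le\; C(2^n, M)^2.
\end{equation*}
Using the standard estimate $C(N,d) = 2\sum_{i=0}^{d-1}\binom{N-1}{i} \le 2\bigl(eN/(d-1)\bigr)^{d-1}$ valid for $d-1 \le N/2$, we get
\begin{equation*}
\log_2 F(n,m) \;\le\; 2M\log_2\!\left(\frac{e\cdot 2^n}{M}\right) + O(1) \;=\; 2Mn\bigl(1 - \tfrac{1}{n}\log_2(M/e)\bigr) + O(1).
\end{equation*}
Since the total number of Boolean functions is $2^{2^n}$, showing $F(n,m)/2^{2^n} \to 0$ reduces to establishing that
\begin{equation*}
2M\bigl(n + \log_2 e - \log_2 M\bigr) \;=\; 2^n - \omega(1).
\end{equation*}

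Next, I would apply the Chernoff--Hoeffding tail bound $\binom{n}{\le n/2 - t} \le 2^n e^{-2t^2/n}$ with $t = c\sqrt{n}$ and $m = n/2 - c\sqrt{n}$. This yields $M \le 2^n/\alpha$ where $\alpha = e^{2c^2}$. Substituting into the previous display,
\begin{equation*}
2M(n + \log_2 e - \log_2 M) \;\le\; \frac{2 \cdot 2^n}{\alpha}\!\left(\log_2(e\alpha)\right) \;=\; \frac{2 \cdot 2^n}{e^{2c^2}}\!\left(\log_2 e + \tfrac{2c^2}{\ln 2}\right).
\end{equation*}
Since $e^{2c^2}$ grows much faster than $c^2$, this expression is at most $\tfrac{1}{2}\cdot 2^n$ once $c$ is a sufficiently large absolute constant. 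Consequently $\log_2 F(n,m) \le 2^{n-1} + O(1)$, and so
\begin{equation*}
\frac{F(n,m)}{2^{2^n}} \;\le\; 2^{-2^{n-1} + O(1)} \;\longrightarrow\; 0,
\end{equation*}
which proves the corollary with $m = n/2 - O(\sqrt n)$.

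The only real subtlety I expect is getting the right asymptotic constant. A naive application of the Chernoff bound demanding that $M \ll 2^n/n$ would only yield $m = n/2 - \Omega(\sqrt{n\log n})$; it is essential to keep track of the full prefactor $2M \log_2(e\cdot 2^n/M)$ and exploit the double-exponential decay $e^{-2c^2}$ in $M$ so that the loss in $\binom{n}{\leq m}$ dominates the $n$-factor loss from the hyperplane-counting bound. Once this balance is observed, choosing $c$ to be any sufficiently large absolute constant suffices, and the stated bound $n/2 - O(\sqrt{n})$ follows.
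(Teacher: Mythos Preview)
Your proposal is correct and follows essentially the same approach as the paper: both combine \cref{lemma:rational_degree_count} with the Chernoff bound $\binom{n}{\le n/2-t}\le 2^n e^{-2t^2/n}$, then bound $C(N,M)$ by a Hamming-ball volume estimate. The only cosmetic difference is that the paper uses the entropy form $C(N,M)\le 2^{Nh_2(M/N)+O(1)}$ and solves $h_2(e^{-2c})<1/2$ numerically (obtaining $c\ge 1.104$), whereas you use the equivalent $(eN/k)^k$ bound and simply observe that a sufficiently large absolute constant $c$ works; your closing remark about why $O(\sqrt{n})$ rather than $O(\sqrt{n\log n})$ suffices is exactly the point the paper's entropy computation encodes.
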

\begin{proof}
    Write $m = n/2 - \sqrt{cn}$ for some constant $c$, and let $N = 2^n$. Then by \Cref{cor:two_plane_colorings} there are at most $C\bigl(N, \binom{n}{\leq m} \bigr)^2$ Boolean functions on $n$ variables of rational degree less than $m$. By the Chernoff bound $\binom{n}{\leq n/2-\lambda} < 2^n e^{-2\lambda^2/n}$ and the Hamming bound, we have that the proportion of Boolean functions with rational degree strictly less $m$ is bounded above by
    \begin{align*}
        \frac{C\bigl( N, \binom{n}{\leq m} \bigr)^2}{2^N} \leq \frac{C(N, Ne^{-2c})^2}{2^N} \leq O\bigl(2^{N(2h_2(e^{-2c})-1)}\bigr),
    \end{align*}
    where $h_2(\cdot)$ denotes the binary entropy function. Solving the inequality $h_2(e^{-2c}) < 1/2$ numerically we find that for $c \geq 1.104$, the above bound tends to $0$.
\end{proof}

\section{Composition Lemmas for Rational Degree}\label{sec:composition-lemmas}
Typically, the first composition lemma one proves for a Boolean complexity measure is exact composition with respect to $\XOR$. However, this fails for the non-deterministic degree, e.g., $\ndeg(x_1 \oplus x_2) = 1 \neq 2 = \ndeg(x_1) + \ndeg(x_2)$. Instead, we find that the non-deterministic degree composes exactly with respect to $\AND$.

In this section, we consider Boolean functions $f \colon \{0,1\}^m \to \{0,1\}$, and $g \colon \{0,1\}^n \to \{0,1\}$ for some $m,n \in \mathbb{N}$. By $f\wedge g$ and $f\vee g$ we denote the functions $\{0,1\}^m \times \{0,1\}^n \to \{0,1\}$ given by $(f\wedge g)(x,y) = f(x) \wedge g(y)$ and $(f \vee g)(x,y) = f(x) \vee g(y)$, respectively. We also write $\mathbb{R}[X]$, $\mathbb{R}[Y]$, and $\mathbb{R}[X,Y]$ for the rings of formal polynomials $\mathbb{R}[X_1,\dots,X_m]$, $\mathbb{R}[Y_1,\dots,Y_n]$, and $\mathbb{R}[X_1,\dots X_m,Y_1,\dots,Y_n]$, respectively.

\begin{prop}\label{prop:ndeg_and}
    Suppose $f$ and $g$ are non-constant Boolean functions on disjoint variables. Then 
    \begin{equation}
        \ndeg(f\wedge g) = \ndeg(f) + \ndeg(g).
    \end{equation}
\end{prop}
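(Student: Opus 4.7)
The upper bound $\ndeg(f \wedge g) \leq \ndeg(f) + \ndeg(g)$ is immediate: if $p \in \R[X]$ non-deterministically represents $f$ and $q \in \R[Y]$ represents $g$, then $p(x)q(y)$ vanishes exactly on $(f \wedge g)^{-1}(0)$ and has degree $\deg(p) + \deg(q)$. My plan for the matching lower bound is to start with an optimal non-deterministic polynomial for $f \wedge g$, expand it as a polynomial in $x$ with coefficients in $\R[Y]$, and use the avoidance lemma to manufacture a non-deterministic polynomial for $g$ out of the high-$x$-degree coefficients.

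Let $r(x,y)$ be a non-deterministic polynomial for $f \wedge g$ of minimum degree. Reducing modulo $\langle X_i^2 - X_i, Y_j^2 - Y_j\rangle$ if necessary, I may assume that $r$ is multilinear in both $x$ and $y$ without increasing its degree, and write
\[
r(x,y) \;=\; \sum_{\alpha \subseteq [m]} p_\alpha(y)\, x^\alpha,
\]
where each $p_\alpha \in \R[Y]$ is multilinear. Two facts about the $p_\alpha$ drive the argument. First, for any $y^* \in g^{-1}(0)$ the polynomial $r(x,y^*)$ vanishes on all of $\{0,1\}^m$ since $(f\wedge g)(x,y^*)=0$; multilinearity in $x$ then forces $p_\alpha(y^*) = 0$ for every $\alpha$. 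Second, for any $y^* \in g^{-1}(1)$ the polynomial $r(x,y^*)$ non-deterministically represents $f$ as a polynomial in $x$, so its $x$-degree is at least $\ndeg(f)$, which means some $\alpha$ with $|\alpha| \geq \ndeg(f)$ satisfies $p_\alpha(y^*) \neq 0$.

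The key step is to glue these pointwise facts into a single polynomial in $y$. Let $U \coloneqq \mathrm{span}\{p_\alpha : |\alpha| \geq \ndeg(f)\} \subseteq \R[Y]$. By the first observation, every member of $U$ vanishes on $g^{-1}(0)$; by the second, for each $y^* \in g^{-1}(1)$ some element of $U$ is nonzero at $y^*$. The avoidance lemma (\cref{lem:avoidance_lemma}) then produces a single $q \in U$ that is nonzero at every point of $g^{-1}(1)$, which therefore non-deterministically represents $g$ and has $\deg(q) \geq \ndeg(g)$. But $q$ is a linear combination of $p_\alpha$'s with $|\alpha| \geq \ndeg(f)$, so some $\alpha^*$ appearing with nonzero coefficient in $q$ must satisfy $|\alpha^*| \geq \ndeg(f)$ and $\deg(p_{\alpha^*}) \geq \ndeg(g)$, making $r$ contain a monomial of total degree at least $\ndeg(f) + \ndeg(g)$.

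The main thing to get right is the bookkeeping: multilinearization preserves values on the cube and does not increase degree, so it may be applied freely; once $U$ is defined correctly as an $\R$-subspace of the polynomial ring in $y$, the avoidance lemma does all the heavy lifting. Nonconstancy of $f$ and $g$ is used to ensure that both $g^{-1}(0)$ and $g^{-1}(1)$ are nonempty, so that the constraints on $U$ are nontrivial and the extracted $q$ is genuinely a non-deterministic polynomial for $g$.
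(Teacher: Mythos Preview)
Your proof is correct and follows essentially the same route as the paper's: decompose an optimal multilinear non-deterministic polynomial for $f\wedge g$ by its $x$-monomials, observe that the high-$x$-degree coefficient polynomials in $y$ vanish on $g^{-1}(0)$ and jointly cover $g^{-1}(1)$, apply \cref{lem:avoidance_lemma} to combine them into a single non-deterministic polynomial for $g$, and read off a monomial of total degree at least $\ndeg(f)+\ndeg(g)$. The only cosmetic difference is that you package the relevant coefficient polynomials as a subspace $U$, whereas the paper names them individually; the content is identical.
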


\begin{proof}
    Let $k \coloneqq \ndeg(f)$ and $l \coloneqq \ndeg(g)$. It is clear that $\ndeg(f\wedge g)\leq k+l$ by considering the product of non-deterministic polynomials for $f$ and $g$. We proceed to prove the lower bound.

    Let $P(X,Y) \in \mathbb{R}[X,Y]$ be a non-deterministic representation of $f\wedge g$. Since we will lower bound the degree of $P$, we may assume $P$ is multilinear without loss of generality. 

    Then, write
    \begin{equation}
        P(X,Y) = \sum_{S\subseteq  [m]} p_{S}(Y) \prod_{i \in S} X_i
        = \sum_{S\subseteq  [m]\colon \abs{S} \geq k} p_{S}(Y) \prod_{i \in S} X_i + \sum_{S\subseteq  [m] \colon \abs{S} < k} p_{S}(Y) \prod_{i \in S} X_i.
    \end{equation}

    Let $y\in g^{-1}(0)$. Then $P(X,y) \in \mathbb{R}[X]$ satisfies $P(x,y) = 0$ for all $x\in \{0,1\}^m$. This means that $P(X,y)$ is the $0$ polynomial by the uniqueness of exact multilinear polynomial representations of Boolean functions. Therefore $p_S(y) = 0$ for all $S\subseteq[m]$.

    Let $y \in g^{-1}(1)$. Then $P(X,y)$ is a non-deterministic representation of $f$. Therefore, since $\ndeg(f) = k > 0$, at least one entry of the vector $(p_S(y))_{S\subseteq [m], \abs{S}\geq k}$ must be non-zero. 

   We now appeal to the following lemma with $\calD$ set to $g^{-1}(1)$ and the $f_i$ functions given by (functions induced by) the set $\{p_S:\ S\subseteq [m], \abs{S}\geq k\}$. We defer the short proof of this lemma to after this proof.
   
\begin{lemma}\label{lem:avoidance_lemma}
    Let $n \in \mathbb{N}$. Let $\calD$ be a finite non-empty set. Let $f_1,\dots, f_n \colon \calD \to \mathbb{R}$. Suppose that for all $y \in \calD$, there exists $i\in [n]$ such that $f_i(y) \neq 0$. Then there exist $\alpha_1,\dots, \alpha_n >0 $ such that
    \begin{equation}
        (\alpha_1 f_1 + \dots + \alpha_n f_n)(y) \coloneqq \alpha_1 f_1(y) + \dots + \alpha_n f_n(y) \neq 0
    \end{equation}
    for all $y\in \calD$.
\end{lemma}

From the lemma, we see that for every $S\subseteq [m]$ with $\abs{S}\geq k$, there exists $\alpha_S > 0$ such that the polynomial:
    \begin{equation}
        Q(Y) \coloneqq \sum_{S\subseteq[m] \colon \abs{S} \geq  k } \alpha_S \cdot p_S(Y)  \in \mathbb{R}[Y]
    \end{equation}
    satisfies $y\in g^{-1}(1) \implies Q(y) \neq 0$.

    On the other hand, we have already observed that $y\in g^{-1}(0)$ implies $p_S(y) = 0$ for all $S\subseteq [m]$. Therefore, $y\in g^{-1}(0) \implies Q(y) = 0$. 
    
    We conclude that $Q(Y)$ non-deterministically represents $g$. Therefore, the degree of $Q$ is at least $l$. Thus, there must exist $S^*\subseteq [m]$ with $\abs{S^*} \geq k$ such that $\deg(p_{S^*}(Y)) \geq l$. 

    But $l > 0$ since $g$ is non-constant so the degree of the term
    \begin{equation}
        p_{S^*}(Y) \prod_{i \in S^*} X_i
    \end{equation}
    is at least\footnote{Note that if $l=0$, $p_S^*(Y)$ could be the zero polynomial in which case the degree of the term here is also zero.} $k+l$. Therefore, $P$ has degree at least $k+l$, as required.
\end{proof}

\begin{proof}[Proof of \cref{lem:avoidance_lemma}]
    Since $\calD$ is finite, for all $i\in [n]$, there exists $0< b_i < B_i$ such that $f_i(\calD) - \{0\} \subseteq (-B_i, -b_i) \cup (b_i, B_i)$. Now define $\alpha_1 = 1$ and for $i = 2,\dots, n$, define $\alpha_i > 0$ by 
    \begin{equation}
        \alpha_i b_i \coloneqq \sum_{j=1}^{i-1}\alpha_j B_j + 1 > \sum_{j=1}^{i-1}\alpha_j B_j.
    \end{equation}
    It is straightforward to verify that
    $(\alpha_1 f_1 + \dots + \alpha_n f_n)(y) \neq 0$ for all $y\in \calD$.
\end{proof}

We can also show the following composition lemma for the non-deterministic degree with respect to $\OR$. (Note that this lemma also does not hold with respect to $\XOR$ since $\ndeg((x_1\oplus x_2)\oplus x_3) = 2 \neq 1 = \max(\ndeg(x_1\oplus x_2), \ndeg(x_3))$ --- indeed we were unable to find any \say{clean} composition lemma for $\ndeg$ or $\rdeg$ with respect to $\XOR$.)
\begin{prop}\label{prop:ndeg_or}
    Suppose $f$ and $g$ are non-constant Boolean functions on disjoint inputs. Then
    \begin{equation}
    \ndeg(f \vee g) = \max(\ndeg(f), \ndeg(g)).
\end{equation}
\end{prop}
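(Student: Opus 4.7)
The plan is to prove this by showing both inequalities. The lower bound $\ndeg(f\vee g) \geq \max(\ndeg(f), \ndeg(g))$ will follow from a simple restriction argument: since $g$ is non-constant, there exists $y_0 \in \{0,1\}^n$ with $g(y_0) = 0$, and then for any non-deterministic polynomial $P(X,Y)$ for $f\vee g$, the substitution $P(X, y_0)$ is a non-deterministic polynomial for $f$ (because $(f\vee g)(x,y_0) = f(x)$). This forces $\deg(P) \geq \ndeg(f)$, and a symmetric argument using some $x_0 \in f^{-1}(0)$ gives $\deg(P) \geq \ndeg(g)$.

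For the upper bound $\ndeg(f\vee g) \leq \max(\ndeg(f), \ndeg(g))$, I would let $p(X)$ and $q(Y)$ be minimum-degree non-deterministic polynomials for $f$ and $g$, respectively, and consider the candidate $R(X,Y) \coloneqq \alpha\, p(X) + \beta\, q(Y)$ for positive scalars $\alpha, \beta$. This polynomial has degree $\max(\deg p, \deg q) = \max(\ndeg(f), \ndeg(g))$, and it automatically vanishes on $(f\vee g)^{-1}(0)$ because both $p(x) = 0$ and $q(y) = 0$ there. What remains is to choose $\alpha, \beta > 0$ so that $R$ is non-zero on $\calD \coloneqq (f\vee g)^{-1}(1)$. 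On this set, at every $(x,y)$ at least one of $p(x), q(y)$ is non-zero, so defining $\tilde p(x,y) \coloneqq p(x)$ and $\tilde q(x,y) \coloneqq q(y)$ one sees that $\tilde p$ and $\tilde q$ collectively avoid zero on $\calD$. Invoking \cref{lem:avoidance_lemma} then produces the desired $\alpha, \beta > 0$, giving a non-deterministic representation of $f\vee g$ of the required degree.

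The substantive step is the upper bound; the lower bound is essentially immediate. Even the upper bound does not present a genuine obstacle once one has \cref{lem:avoidance_lemma} in hand, which was already established in the course of proving \cref{prop:ndeg_and}. The conceptual point is that, unlike the $\AND$ case where non-deterministic representations must multiply (so that degrees add), for $\OR$ we can combine the two polynomials additively, so the degrees take a maximum rather than a sum. The only care needed is to verify that no cancellation occurs on $\calD$, which is precisely the problem the avoidance lemma solves.
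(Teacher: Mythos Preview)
Your proposal is correct and follows essentially the same approach as the paper: the lower bound by restriction and the upper bound via the additive combination $\alpha\,p(X)+\beta\,q(Y)$. The only difference is cosmetic---the paper chooses $\alpha,\beta$ explicitly by scaling $p$ to be small ($\abs{\alpha p}\le 1$) and $q$ to be large ($\abs{\beta q}\ge 100$ on its support) rather than invoking \cref{lem:avoidance_lemma}, but this is the same idea.
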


\begin{proof}
     Let $k \coloneqq \ndeg(f)$ and $l \coloneqq \ndeg(g)$. It is clear that $\ndeg(f \vee g) \geq \max(\ndeg(f), \ndeg(g))$ by restriction. We proceed to prove the upper bound.

     Suppose $p(X)\in \mathbb{R}[X]$ and $q(Y) \in \mathbb{R}[Y]$ are non-deterministic polynomials for $f$ and $g$ respectively such that $\deg(p(X)) =k$ and $\deg(q(Y)) = l$. Let $\alpha,\beta\in \mathbb{R} - \{0\}$ be such that
    \begin{align}
        &\forall x\in \{0,1\}^m, \abs{\alpha p(x)} \leq 1;
        \\
        &\forall x\in \{0,1\}^n \text{ s.t. } q(x) \neq 0, \abs{\beta q(x)} \geq  100.
    \end{align}
    
    Then it easy to verify that $\alpha p(X) + \beta q(Y) \in \mathbb{R}[X,Y]$ is a non-deterministic polynomial for $f \vee g$.
    Since $\deg(\alpha p(X) + \beta q(Y))\leq \max(k,l)$, the proposition follows.
\end{proof}

We have the following corollary for the rational degree of the $\AND$ and $\OR$ of Boolean functions.

\begin{corollary}\label{cor:rdeg_and_composition}
    Suppose $f$ and $g$ are non-constant Boolean functions on disjoint variables. Then
    \begin{align}
        \rdeg(f \wedge g) =& \max(\ndeg(f) + \ndeg(g), \ndeg(\overline{f}), \ndeg(\overline{g})),
        \\
        \rdeg(f \vee g) =& \max(\ndeg(\overline{f}) + \ndeg(\overline{g}), \ndeg(f), \ndeg(g)).
    \end{align}
\end{corollary}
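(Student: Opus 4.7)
The plan is to reduce both identities to the two non-deterministic degree composition lemmas (\cref{prop:ndeg_and} and \cref{prop:ndeg_or}) already established, via the identity $\rdeg(h) = \max(\ndeg(h), \ndeg(\overline{h}))$ stated in the preliminaries, combined with De Morgan's laws. This should be essentially a three-line derivation with no real obstacle; the main thing to be careful about is applying the compositional results with the correct non-constancy hypotheses.

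Concretely, for the first identity, I would first write
\begin{equation*}
    \rdeg(f\wedge g) = \max\bigl(\ndeg(f\wedge g),\, \ndeg(\overline{f\wedge g})\bigr),
\end{equation*}
then use $\overline{f\wedge g} = \overline{f}\vee\overline{g}$ to rewrite the second term. Since $f$ and $g$ are non-constant and on disjoint variables, so are $\overline{f}$ and $\overline{g}$, hence \cref{prop:ndeg_and} gives $\ndeg(f\wedge g) = \ndeg(f)+\ndeg(g)$, and \cref{prop:ndeg_or} gives $\ndeg(\overline{f}\vee\overline{g}) = \max(\ndeg(\overline{f}), \ndeg(\overline{g}))$. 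Folding these into the outer $\max$ yields
\begin{equation*}
    \rdeg(f\wedge g) = \max\bigl(\ndeg(f)+\ndeg(g),\, \ndeg(\overline{f}),\, \ndeg(\overline{g})\bigr),
\end{equation*}
as claimed.

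For the second identity the argument is symmetric: apply $\rdeg(f\vee g) = \max(\ndeg(f\vee g), \ndeg(\overline{f\vee g}))$, rewrite the second term via $\overline{f\vee g} = \overline{f}\wedge\overline{g}$, and then invoke \cref{prop:ndeg_or} on $f\vee g$ and \cref{prop:ndeg_and} on $\overline{f}\wedge\overline{g}$. The only caveat I would flag explicitly is that both composition lemmas require non-constancy of the inner functions, which is preserved under negation, so the hypothesis that $f,g$ are non-constant on disjoint variables is sufficient to apply each lemma in both directions.
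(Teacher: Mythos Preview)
Your proposal is correct and follows exactly the paper's own approach: the paper's proof is a single sentence invoking \cref{prop:ndeg_and,prop:ndeg_or}, the identity $\rdeg(h) = \max(\ndeg(h),\ndeg(\overline{h}))$, and De Morgan's laws. Your write-up simply spells out the details (including the non-constancy check under negation) that the paper leaves implicit.
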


\begin{proof}
This follows from \cref{prop:ndeg_and,prop:ndeg_or}, the identity $\rdeg(h) = \max(\ndeg(h),\ndeg(\overline{h}))$, and De Morgan's laws.
\end{proof}

\section{Rational Degree versus Spectral Sensitivity}\label{sec:rdeg-v-lambda}

The spectral sensitivity $\lambda(f)$ is a complexity measure of a Boolean function $f$ first defined formally by Aaronson \textit{et al.} \cite{aaronson2020degree}, although it has implicitly appeared earlier in Huang's proof of the Sensitivity Theorem \cite{huang2019induced}. Aaronson and his coauthors suggested \cite{aaronson2020degree} that it may be possible to establish a lower bound on the rational degree $\rdeg(f)$ by relating it to the spectral sensitivity $\lambda(f)$. In this direction, we show that $\lambda(f) \leq \rdeg(f)$ is false by exhibiting a degree $\sfrac{3}{2}$ separation between these two measures. The spectral sensitivity is defined formally as follows.

\begin{definition}
    Let $f\colon \{0,1\}^n \to \{0,1\}$ be a Boolean function. The sensitivity graph of $f$ is defined as the subgraph $G_f = (V, E)$ of the Boolean hypercube with $V = \{0,1\}^n$ and edge set ${E = \{(x,y): \abs{x\oplus y} = 1, f(x) \neq f(y)\}}$. The spectral sensitivity of $f$ is defined as the spectral norm of the adjacency matrix of $G_f$, ie. $\lambda(f) = \norm{A_f}$, where $A_f$ denotes the adjacency matrix of $G_f$.
\end{definition}

For an even positive integer $n$, let $\EH_n \colon \{0,1\}^n \to \{0,1\}$ be the Boolean function such that $\EH_n(x) = 1$ iff $\abs{x} = n/2$. As we show below, composing $\AND_n$ with the negation of $\EH_n$ gives a degree-$\sfrac{3}{2}$ separation between the spectral sensitivity and the rational degree. 

\begin{prop}
\label{prop:separation}
    The function $f = \AND_n \circ\> \overline{\EH_n}$ satisfies $\rdeg(f) = \Theta(n)$, $\lambda(f) = \Theta(n^{\sfrac{3}{2}})$, $\sens(f) = \Theta(n^2)$ and $\deg(f) = \Theta(n^2)$.
\end{prop}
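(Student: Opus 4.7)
I would handle the four equalities in increasing order of difficulty. For sensitivity, the plan is to exhibit the input $x^* \in \{0,1\}^{n^2}$ with every block of Hamming weight $n/2 - 1$: then $f(x^*) = 1$, and flipping any of the $n(n/2+1)$ zero-bits raises the corresponding block's weight to $n/2$ and flips $f$ to $0$. So $\sens(f) \geq n(n/2+1) = \Omega(n^2)$, matching the trivial $\sens(f) \leq n^2$. For the degree, I would directly expand $f = \prod_{i=1}^n (1 - \EH_n(x^{(i)}))$: since $\deg(\EH_n) = n$ (by Minsky--Papert) and the $n$ blocks are on disjoint variables, the $T = [n]$ term $(-1)^n \prod_i \EH_n(x^{(i)})$ contributes an uncancelled monomial of degree $n^2$, giving $\deg(f) = \Theta(n^2)$.

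For the rational degree, the plan is to combine the non-deterministic degrees of $\EH_n$ and $\overline{\EH_n}$ via the iterated $\AND$ composition formula (\Cref{cor:rdeg_and_composition}, applied using \Cref{prop:ndeg_and,prop:ndeg_or}). The polynomial $x_1 + \cdots + x_n - n/2$ certifies $\ndeg(\overline{\EH_n}) = 1$, and $\prod_{i \neq n/2}(x_1 + \cdots + x_n - i)$ certifies $\ndeg(\EH_n) \leq n$. Iterating gives $\rdeg(f) = \max\bigl(n \cdot \ndeg(\overline{\EH_n}),\, \ndeg(\EH_n)\bigr) \leq n$. For the matching lower bound, \Cref{lem:trickle-down} yields $\rdeg(f) \geq \rdeg(\AND_n) = \ndeg(\AND_n) = n$, where the last equality uses that the multilinear polynomial vanishing on $\{0,1\}^n \setminus \{1^n\}$ is a scalar multiple of $\prod_i x_i$.

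For the spectral sensitivity upper bound, I would use that $G_f$ is bipartite between $f^{-1}(0)$ and $f^{-1}(1)$, so $\lambda(f)$ equals the spectral norm of the biadjacency matrix $B$, and the standard bound $\|B\|^2 \leq \|B\|_1 \|B\|_\infty \leq s_0(f) \cdot s_1(f)$ applies. We already have $s_1(f) = \Theta(n^2)$. For $s_0(f)$: a $0$-input with two or more blocks of weight $n/2$ has sensitivity $0$ (no single flip fixes them all), while a $0$-input with a unique bad block has sensitivity $n$. Hence $s_0(f) = n$ and $\lambda(f) \leq \sqrt{n \cdot n(n/2+1)} = O(n^{3/2})$.

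The main obstacle is the matching lower bound $\lambda(f) = \Omega(n^{3/2})$, for which I plan to exhibit an induced subgraph of $G_f$ that is biregular bipartite with known high degrees. Define
\[
L = \{x : |x^{(i)}| = n/2 - 1 \text{ for all } i\} \subseteq f^{-1}(1),
\]
\[
R = \bigl\{y : |y^{(i^*)}| = n/2 \text{ for a unique } i^* \text{ and } |y^{(j)}| = n/2 - 1 \text{ for all } j \neq i^*\bigr\} \subseteq f^{-1}(0).
\]
Every $x \in L$ has $n(n/2+1)$ sensitive neighbors, each obtained by flipping a zero-bit in some block, and all land in $R$. Every $y \in R$ has sensitive bits only within its unique bad block $i^*$, and of these precisely the $n/2$ one-bits (which return block $i^*$ to weight $n/2-1$ when flipped) land in $L$, while the $n/2$ zero-bits send it to a $1$-input outside $L$. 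Thus the induced subgraph $H$ on $L \cup R$ is biregular bipartite with $d_L = n(n/2+1)$ and $d_R = n/2$, so $\lambda(H) = \sqrt{d_L d_R} = \Theta(n^{3/2})$ (the top eigenvector being a suitably weighted pair of indicators). Since $A_H$ is a principal submatrix of $A_{G_f}$, $\lambda(f) \geq \lambda(H) = \Omega(n^{3/2})$, closing the final gap.
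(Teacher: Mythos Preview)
Your proposal is correct. For $\sens(f)$ and $\rdeg(f)$ it essentially matches the paper's argument (note that since you already establish $\ndeg(\overline{\EH_n})=1$ exactly, the iterated \Cref{cor:rdeg_and_composition} gives $\rdeg(f)=\max(n\cdot 1,\ndeg(\EH_n))\geq n$ directly, so your trickle-down lower bound is redundant, though not wrong). For $\deg(f)$ you give a hands-on expansion where the paper simply cites the known fact that Fourier degree composes multiplicatively; both are fine.

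The genuine difference is in the treatment of $\lambda(f)$. The paper invokes the composition theorem $\lambda(f\circ g)=\lambda(f)\cdot\lambda(g)$ from \cite{aaronson2020degree} as a black box, then plugs in $\lambda(\AND_n)=\sqrt{n}$ and $\lambda(\overline{\EH_n})=\Theta(n)$ to conclude in one line. Your route is instead fully self-contained: the upper bound via $\lambda(f)\leq\sqrt{s_0(f)\,s_1(f)}$ together with the exact computation $s_0(f)=n$, and the lower bound by exhibiting an explicit biregular bipartite induced subgraph of $G_f$ and using Cauchy interlacing. This trades a short citation for an elementary argument; the paper's version is more concise but leans on an external result, while yours would stand even without knowing that spectral sensitivity composes.
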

\begin{proof}
    Since the spectral sensitivity behaves multiplicatively under composition (c.f. \cite[Theorem~29]{aaronson2020degree}) we have $\lambda(f) = \lambda(\AND_n)\cdot \lambda(\overline{\EH_n})$. It is not hard to see that $\lambda(\AND_n) = \sqrt{n}$ and $n/2 \leq \lambda(\overline{\EH_n}) \leq n$. Thus $\lambda(f) = \Theta(n^{\sfrac{3}{2}})$.

    On the other hand, by our composition theorem (\cref{prop:ndeg_and}) $\ndeg(f) = n \cdot \ndeg(\overline{\EH_n})$. By De Morgan's laws and \cref{prop:ndeg_or} $\ndeg(\overline{f}) = \ndeg(\OR_n \circ \EH_n) = \ndeg(\EH_n)$. So it remains to identify the non-deterministic degrees of $\overline{\EH_n}$ and $\EH_n$. To this end, observe that $q(x) = \sum_{i=1}^n x_i - n/2$ is a non-deterministic representation of $\overline{\EH_n}$ with degree $1$, which must also be degree-optimal since $\overline{\EH_n}$ is non-constant. Therefore, $\ndeg(\overline{\EH_n}) = 1$ and so $\ndeg(f) = n$. Moreover, we trivially have $\ndeg(\EH_n) \leq n$ so $\ndeg(\overline{f}) \leq n$. Therefore, $\rdeg(f) = \max(\ndeg(f), \ndeg(\overline{f})) = n$.
    
    Finally, notice that the following input has $\approx n^2/2$ sensitive bits: for every instance of $\overline{\EH}$ we consider an input with $n/2-1$ ones. Flipping any 0 in this input flips the value of $f$ from 1 to 0. Moreover, it is known that $\deg$ composes and both $\AND$ and $\EH$ have maximal degree, therefore we can conclude that $\deg(f) = \Theta(n^2)$.
\end{proof}

We remark that the function $f=\AND_n \circ \> \EH_n$ also separates $\rdeg(f)$ from $\min\{\C(f), \deg(f)\}$ quadratically. It is clear that $\rdeg(g) \leq \min\{\C(g), \deg(g)\}$ for any function $g$, and $\C(g)$ and $\deg(g)$ can be separated quadratically from each other, so it is not surprising that $\rdeg(g)$ can be quadratically smaller than $\C(g)$ or $\deg(g)$ separately. However, as this example shows, it can be quadratically separated from both simultaneously. 

\section{Applications in Complexity Theory}

In this section, we give two functions: one whose rational degree is unboundedly higher than its approximate degree and one which has approximate degree unboundedly higher than its rational degree. These examples in turn give bidirectional oracle separations between $\bqp$ and $\posteqp$. We conclude the section by giving evidence that zero-error quantum computation with post-selection gives advantage over bounded-error randomized algorithms, providing context to our results.

\subsection{Post-Selection can be a Weak Resource} \label{sec:rdeg-is-high}
In this subsection, we give an oracle which witnesses that  $\bqp \not \subseteq \posteqp$ (in fact, even $\rp \not \subseteq \posteqp$). This is accomplished by constructing a partial function which has constant 1-sided error randomized query complexity but maximal  $\postquantumquery_E$. In fact, this problem also demonstrates that the rational degree can be arbitrarily higher than the approximate degree for partial functions.

\begin{problem}[Majority or None]
    The $\majn_n$ function is defined as a partial Boolean function on the set of bitstrings $x\in \bool^n$ that have Hamming weight either $0$ or at least $n/2$. The function $\majn_n$ takes value $0$ in the former case, and takes value $1$ otherwise.
\end{problem}

\begin{theorem}
\label{thm:majornone-separation} The $\majn_n$ function can be decided by a quantum algorithm using constantly many queries, yet its rational degree is at least $\Omega(n)$.  Consequently, $\majn_n$ witnesses the following separations:
\begin{alignat*}{2}
    \apxdeg(\majn_n) &\leq \bigO(1) \quad \text{ yet }\quad \rdeg(\majn_n)&& \geq \Omega(n), \\
    \quantumquery(\majn_n) &\leq \bigO(1) \quad \text{ yet }\quad \postquantumquery_E(\majn_n)&& \geq \Omega(n).
\end{alignat*}
\end{theorem}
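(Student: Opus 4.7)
The plan has two parts: an upper bound on quantum query complexity (which yields the constant approximate degree) and a lower bound on the rational degree (which yields the post-selected lower bound).

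For the upper bound, I would observe that $\majn_n$ can be decided with constant error by an extremely simple randomized algorithm: sample $k$ coordinates uniformly at random and output $1$ if any sampled bit is $1$, otherwise output $0$. On a promise input of Hamming weight $0$ this is always correct, and on a promise input of Hamming weight at least $n/2$ each queried coordinate is $1$ with probability at least $1/2$, so the error is at most $2^{-k}$. Choosing $k = O(1)$ gives a bounded-error randomized algorithm making $O(1)$ queries, hence $\quantumquery(\majn_n) \leq \bigO(1)$. The bound $\apxdeg(\majn_n) \leq \bigO(1)$ then follows from the Beals et al.\ inequality $\apxdeg_\epsilon(f) \leq 2\quantumquery_\epsilon(f)$.

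For the lower bound on rational degree, I would directly apply \cref{lem:rdeg-symmetric}. The promise domain of $\majn_n$ contains the entirety of each Hamming slice it touches: it fully contains slice $0$ (where $f = 0$) and fully contains each slice of weight $k$ for $k \in \{n/2, n/2+1, \dots, n\}$ (where $f = 1$). In the notation of the lemma, $S_0 = \{0\}$ and $S_1 = \{n/2, n/2+1, \dots, n\}$, so $|S_1| = n/2 + 1$. The lemma immediately yields
\begin{equation}
    \rdeg(\majn_n) \;\geq\; \tfrac{1}{2} \max(|S_0|, |S_1|) \;\geq\; \tfrac{1}{2}(n/2 + 1) \;=\; \Omega(n).
\end{equation}

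The post-selected quantum query lower bound then follows from the extension of the Mahadev--de Wolf correspondence $\postquantumquery_E(f) = \Theta(\rdeg(f))$ to partial functions (recorded in the appendix on $\rdeg$ vs.\ $\postquantumquery$). Since both the upper bound and the rational-degree lower bound are essentially one-step applications of results already developed in the paper, there is no real obstacle here; the only thing to be slightly careful about is verifying that the promise domain genuinely contains every element of the Hamming slices referenced in $S_0$ and $S_1$, which is immediate from the definition of $\majn_n$.
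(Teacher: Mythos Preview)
Your proposal is correct and follows essentially the same approach as the paper: a constant-query randomized (hence quantum) algorithm for the upper bound, and a direct application of \cref{lem:rdeg-symmetric} with $|S_1|\ge n/2$ for the lower bound, together with the Mahadev--de Wolf correspondence for the post-selected query bound.
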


\begin{proof}
    The $\majn_n$ function even has constant $\RP$ query complexity. Indeed, we may simply query a constant number of random bits and output $1$ if any of them are $1$. Therefore, $\majn_n$ has constant quantum query complexity, which in turn implies a constant approximate degree. We show via a rational degree lower bound that $\postquantumquery_E(\majn_n) = \Omega(n)$. In particular, we show that $\rdegnull(\majn_n) = \Omega(n)$. Using the notation of \Cref{lem:rdeg-symmetric}, for $\majn_n$ we have $|S_1| \geq n/2$, giving us
    \[
    \postquantumquery_E(\majn_n)\geq \rdegnull(\majn_n) =
    \Omega(n). \qedhere
    \]
\end{proof}

The complexity classes $\quantumquery$ and $\postquantumquery_E$ are the query complexity equivalents of $\bqp$ and $\posteqp$, respectively. As such, our unbounded separation between these complexity measures gives an oracle separation of $\bqp$ and $\posteqp$.

\begin{corollary} \label{cor:posteqp-low}
    There exists an oracle $O$ such that $\rp^O \not \subseteq \posteqp^O$. \qed
\end{corollary}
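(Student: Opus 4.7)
The plan is to lift the query-complexity separation from \Cref{thm:majornone-separation} to an oracle separation via a standard diagonalization. For any oracle $O$, let $O_n\in\bool^{2^n}$ denote the characteristic string of the $n$-length members of $O$, and restrict attention to admissible oracles for which each slice $O_n$ lies in the promise of $\majn_{2^n}$, i.e. $|O_n|\in\{0\}\cup[2^{n-1},2^n]$. Define
\[
L_O \coloneqq \{1^n : \majn_{2^n}(O_n) = 1\}.
\]
Containment $L_O \in \rp^O$ is immediate: on input $1^n$, sample a large enough constant number of positions of $O_n$ uniformly and accept iff any is $1$; this has perfect completeness and bounded one-sided error by the promise.

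For the separation, I would enumerate all $\posteqp$-machines $M_1,M_2,\dots$ with polynomial runtimes $p_1,p_2,\dots$, and build $O$ in stages by reserving a distinct length $n_i$ for each $M_i$. The crux is the following contrapositive: if $M_i^O$ were a valid $\posteqp$ algorithm deciding $L_O$ for every admissible $O$, then varying only the $n_i$-slice would yield a zero-error postselected quantum query algorithm for $\majn_{2^{n_i}}$ using at most $p_i(n_i)$ queries. Choosing $n_i$ large enough that $p_i(n_i) < c\cdot 2^{n_i}$, where $c$ is the constant implicit in $\postquantumquery_E(\majn_m) = \Omega(m)$ from \Cref{thm:majornone-separation}, contradicts this query lower bound. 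Hence there exists an admissible assignment for $O_{n_i}$ on which $M_i$ either returns the wrong answer or violates the $\posteqp$ promise on input $1^{n_i}$; fix $O_{n_i}$ to be one such choice, and take $O_n = 0^{2^n}$ for all other $n$.

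The $O$ thus built is admissible, so $L_O \in \rp^O$, and by construction no $M_i$ correctly decides $L_O$ in $\posteqp^O$, witnessing $\rp^O\not\subseteq \posteqp^O$. The main subtlety will be handling the $\posteqp$ promise carefully: a machine only defines a language when its postselection probability is positive and the postselected answer is deterministic, so the statement ``$M_i$ fails'' must cover both incorrect postselected answers and promise violations. This is handled cleanly by the Mahadev--de Wolf characterisation, which tells us that $M_i$ satisfying the promise on \emph{both} an all-zeros $O_{n_i}$ and a Hamming-weight-heavy $O_{n_i}$ with correct outputs is precisely a zero-error postselected query algorithm for $\majn_{2^{n_i}}$, which the rational degree lower bound excludes.
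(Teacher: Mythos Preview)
Your proposal is correct and is exactly the standard query-to-oracle diagonalization the paper has in mind: the paper states the corollary with a bare \qed, relying on the framework of \cite{FSS1981} and the preceding remark that the query separation of \Cref{thm:majornone-separation} lifts to an oracle separation. You have simply written out the details the paper leaves implicit, including the handling of the $\posteqp$ promise, so there is nothing to compare.
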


\subsection{Post-Selection can be a Strong Resource} \label{sec:rdeg-is-low}

On the other hand, we can give an oracle which witnesses $\posteqp \not \subseteq \bqp$. We do this by constructing a promise problem $f$ for which $\postquantumquery_E(f)=\bigO(1)$ but $\quantumquery_{\eps}(f) = \Omega(n)$. This problem also witnesses the fact that approximate degree can be unboundedly larger than rational degree.

\begin{problem}[\imbalance]
    Let $n = 4m + 2$ for some positive integer $m$. Define the functions $L, R\colon \pmone^n \rightarrow \mathbb{R}$ as $L(x) = x_1 + x_2 + \ldots + x_{2m+1}$ and $R(x) = x_{2m+2} + \ldots + x_{4m+2}$. Then the $\imbalance\colon \pmone^n \to \mathbb{R}$ function is defined as $\imbalance(x) = \frac{L(x)}{R(x)}$. 
\end{problem}

Note that we assumed $4 \nmid n$ to ensure that the denominator $R(x)$ cannot be 0.

\begin{problem}[Boolean Imbalance]
    Let $m$ and $n$ be as in the above problem. We define the Boolean Imbalance $\bi_n$ function as the restriction of $\imbalance$ to the union $S_{-} \cup S_{+}$ where we let
    \begin{align*}
        S_{+} &= \{(x_L, x_R)\colon |x_L| = |x_R| = m\},\\
        S_{-} &= \{(x_L, x_R)\colon |x_L| + |x_R| = 2m+1 \text{ and } |x_L|, |x_R| \geq m\}.
    \end{align*}
\end{problem}

Note that $\bi_n(x) = 1$ for any $x\in S_{+}$ since the numerator and denominator evaluate to the same quantity. On the other hand, for any $x \in S_{-}$ we have that $\bi_n(x) = -1$ since both $L(x)$ and $R(x)$ must be $\pm 1$ but they must be different.

By a generalisation of the equivalence of Mahadev and de Wolf (\cref{lem:rdeg-lowerbound-postq}) we have an upper bound of 2 on $\postquantumquery_E(\bi_n$). We now show that it has a linear lower bound on the approximate degree.

\begin{lemma}
    \label{thm:bi-separation} The $\bi_n$ function can be decided by a post-selected quantum algorithm using only 2 queries, yet its rational degree is at least $\Omega(n)$.  Consequently, $\bi_n$ witnesses the following separations:
\begin{alignat*}{2}
    \rdeg(\bi_n) &\leq \bigO(1) \quad \text{ yet }\quad \apxdeg(\bi_n) &&\geq \Omega(n),
    \\
    \postquantumquery_E(\bi_n) &\leq \bigO(1) \quad \text{ yet }\quad  \quantumquery(\bi_n) &&\geq \Omega(n).
\end{alignat*}
\end{lemma}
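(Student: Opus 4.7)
The plan breaks into four bounds. For the upper bounds $\rdeg(\bi_n)\leq O(1)$ and $\postquantumquery_E(\bi_n)\leq O(1)$, the definition itself supplies the witness: $L$ and $R$ are degree-$1$ polynomials, and the discussion immediately preceding the lemma shows that on $S_+\cup S_-$ both take values in $\pmone$ with $L(x)/R(x)=\bi_n(x)$. So $L/R$ is a degree-$1$ rational representation of $\bi_n$, giving $\rdeg(\bi_n)\leq 1$. Feeding this into the partial-function generalization of the Mahadev--de Wolf equivalence (\cref{lem:rdeg-lowerbound-postq}) yields $\postquantumquery_E(\bi_n)=O(1)$, with the concrete bound of $2$ queries coming from careful bookkeeping in that translation.

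For the approximate-degree lower bound, I would assume a polynomial $p$ of degree $d$ with $|p(x)|\leq 1$ on $\pmone^n$ and $|p(x)-\bi_n(x)|\leq 1/3$ on the domain, and apply Minsky--Papert symmetrization independently over each half (averaging $p$ over permutations of the $x_L$-coordinates and separately of the $x_R$-coordinates). This produces a bivariate polynomial $P(k_L,k_R)$ of total degree at most $d$ with $|P(k_L,k_R)|\leq 1$ on integer points in $[0,2m+1]^2$, $P(m,m)\geq 2/3$, and $P(m,m+1),P(m+1,m)\leq -2/3$. The quantum query bound then follows immediately from the polynomial method $\quantumquery(f)\geq \apxdeg(f)/2$ once the degree bound is established.

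The main obstacle is squeezing a linear lower bound on $d$ out of the constraints on $P$. The most direct attack -- restricting to the one-dimensional polynomial $Q(k)=P(k,m)$, which has a unit jump from $Q(m)\geq 2/3$ to $Q(m+1)\leq -2/3$ while being bounded by $1$ on $\{0,\dots,2m+1\}$ -- produces only $d=\Omega(\sqrt{n})$ via the discrete Markov--Bernstein inequality, since a jump of constant size between adjacent integer points on a range of length $N$ forces a polynomial bounded by $1$ to have degree at least $\Omega(\sqrt{N})$. To bridge the remaining quadratic gap to $\Omega(n)$, my plan is to abandon one-dimensional restrictions and instead construct a dual LP witness: a function $\Phi\colon\pmone^n\to\mathbb{R}$ of pure high degree $\Omega(n)$ supported essentially on $S_+\cup S_-$ and correlating positively with $\bi_n$ on the domain. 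The natural starting point is to take $\Phi$ proportional to $\bi_n\cdot\mathbbm{1}_{S_+\cup S_-}$ and iteratively kill low-degree Fourier mass in the Bun--Thaler/Sherstov style, exploiting the product-of-slices structure of $S_+$ (and the analogous structure of $S_-$) to make the relevant Krawtchouk computations tractable. Verifying that the pure-high-degree condition can be enforced to order $\Omega(n)$ while retaining constant correlation with $\bi_n$ on the domain will be the technical crux of the argument.
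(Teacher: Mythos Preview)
Your upper bounds are correct and match the paper. For the lower bound, your bivariate symmetrization is exactly the right first step, but you then undersell your own one-dimensional restriction. The polynomial $Q(k)=P(k,m)$ is bounded by $1$ on $\{0,\dots,2m+1\}$ and jumps by at least $4/3$ between $k=m$ and $k=m+1$; the crucial point you missed is that this jump is at the \emph{middle} of the range. The discrete Markov--Bernstein bound you invoked, ``a constant jump on a range of length $N$ forces degree $\Omega(\sqrt{N})$'', is the worst case, attained when the jump is near an endpoint (this is Markov's inequality). For a jump at position $k$, the sharper Paturi-style bound gives degree $\Omega(\sqrt{k(N-k)})$ (this comes from Bernstein's inequality after rescaling to $[-1,1]$, where the jump lands in the interior). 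With $k=m$ and $N=2m+1$ this is $\Omega(m)=\Omega(n)$, not $\Omega(\sqrt{n})$.

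So your one-dimensional restriction already proves $\apxdeg(\bi_n)=\Omega(n)$ and the dual-LP construction is unnecessary. This is exactly what the paper does in one line: it observes that $\bi_n$ takes value $+1$ on all domain points of total Hamming weight $2m$ and $-1$ on all domain points of total Hamming weight $2m+1$, and then invokes Nayak--Wu (which packages the Paturi bound) with $l=2m\approx n/2$ to get $\Omega(n)$. Your bivariate symmetrization is the honest way to justify that citation, since $\bi_n$ is only $S_{2m+1}\times S_{2m+1}$-symmetric rather than fully symmetric, but once you have $Q$ the Nayak--Wu/Paturi bound finishes immediately.
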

\begin{proof}
    Note that $\bi_n$ is defined on inputs of Hamming weight $2m$ and $2m+1$. By a result of Nayak and Wu any function which is constant on Hamming slices $l, l+1$ and flips its value has approximate degree $\Omega(\max\{l, n-l\})$ \cite{Nayak99}. In this case, since the function value flips on Hamming weights $2m, 2m+1$ we get a lower bound of $\Omega(\max\{2m+1, 2m\}) = \Omega(n)$. 
\end{proof}

Finally, just like in the previous section, this separation between complexity measures allows us to construct an oracle relative to which $\posteqp$ is not contained in $\bqp$.

\begin{corollary}\label{cor:posteqp-high}
    There exists an oracle $O$ such that $\posteqp^O \not \subseteq \bqp^O$. \qed   
\end{corollary}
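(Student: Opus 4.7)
The plan is to lift the query-complexity separation of \cref{thm:bi-separation} to an oracle separation via a standard diagonalization argument in the style of \cite{FSS1981}. The ingredients are already in place: $\bi_n$ is a partial Boolean function with $\postquantumquery_E(\bi_n)=O(1)$ and $\quantumquery(\bi_n)=\Omega(n)$, and we need to bundle instances of $\bi_n$ of growing size into an oracle so that the corresponding language separates the two uniform classes.

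First, I would define, for each input length $N$, an oracle block $O_N\in\{0,1\}^{2^N}$ that encodes (the truth table of) an input $x^{(N)}\in\pmone^{n(N)}$ to $\bi_{n(N)}$ where $n(N)$ is chosen polynomially in $N$ (say $n(N)=N$ up to the $4\nmid n$ adjustment) and satisfies the promise. The oracle $O$ is the concatenation of the blocks $O_N$, and the language is $L^O=\{1^N : \bi_{n(N)}(x^{(N)})=1\}$. The inclusion $L^O\in\posteqp^O$ is immediate from \cref{thm:bi-separation}: on input $1^N$ the $\posteqp$ machine runs the constant-query zero-error post-selected algorithm for $\bi_{n(N)}$ using the oracle block $O_N$, which is a $\mathrm{poly}(N)$-time computation even accounting for the overhead of simulating post-selection in the standard $\posteqp$ model (\cref{def:postbqp}).

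The separation $L^O\notin\bqp^O$ is obtained by diagonalizing against an enumeration $\{M_1,M_2,\dots\}$ of polynomial-time $\bqp$ machines. For each $i$ I would pick a sufficiently large $N_i$ (larger than all input lengths used to defeat $M_1,\dots,M_{i-1}$) such that the running time of $M_i$ on $1^{N_i}$ is at most $p_i(N_i)$ for some fixed polynomial $p_i$. Because $M_i$ can only make $p_i(N_i)$ queries to the oracle, in particular to the block $O_{N_i}$, and because the $\quantumquery$ lower bound of $\Omega(n(N_i))$ from \cref{thm:bi-separation} is $\omega(p_i(N_i))$ by choosing $N_i$ large, the standard adversary/polynomial-method lower bound shows that $M_i$ cannot distinguish the two promise cases of $\bi_{n(N_i)}$ with bounded error. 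Hence we can set $O_{N_i}$ to a valid promise input on which $M_i$ outputs the wrong answer, defeating $M_i$. Iterating over all $i$ and leaving unused blocks arbitrary (but consistent with the promise) produces an oracle $O$ relative to which no polynomial-time $\bqp$ machine decides $L^O$.

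The main technical subtlety, rather than obstacle, is the conversion of the query lower bound for a partial function into a per-machine diagonalization step: one must carry the bounded-error guarantee of $M_i$ across all choices of the other oracle blocks. This is handled by the now-standard observation that the query lower bound applies uniformly, so fixing the remaining oracle bits cannot increase the distinguishing advantage beyond what the $p_i(N_i)$ queries into $O_{N_i}$ allow. The argument for \cref{cor:posteqp-low} using $\majn_n$ is entirely analogous, which is why both directions can be asserted simultaneously in the body of the paper.
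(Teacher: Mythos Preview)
Your proposal is correct and follows exactly the approach the paper intends: the paper omits the proof entirely (the corollary is stated with \qed), relying on the standard \cite{FSS1981}-style diagonalization to lift the query separation of \cref{thm:bi-separation} to an oracle separation, which is precisely what you have spelled out. Your write-up simply makes explicit the routine details that the paper leaves implicit.
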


Our unbounded separation of rational degree and approximate degree gives an oracle separation of $\posteqp$ and $\bqp$. Combined with \Cref{cor:posteqp-low}, this tells us that zero-error post-selection and bounded error are \say{incomparable} resources in the black-box model: one is not stronger than the other.

\subsection{Post-Selection and Non-Determinism} \label{sec:post-selection-nondeterminism}
To conclude the section, we provide more context to our results by giving evidence that zero-error quantum computation with post-selection gives advantage over efficient classical computation.
\begin{claim}
\label{claim:np_cap_conp}
$\NP \cap \coNP \subseteq \posteqp$.
\end{claim}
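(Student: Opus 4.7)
\medskip

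\noindent\textbf{Proof proposal.} The plan is to give a direct $\posteqp$ algorithm that simulates the two verifiers promised by membership in $\NP$ and $\coNP$, using post-selection to filter out the ``useless'' branches of a uniform superposition over witnesses. Concretely, fix $L \in \NP \cap \coNP$ and let $V_1, V_2$ be polynomial-time classical verifiers such that, for some polynomial $p$ and all $x \in \{0,1\}^n$:
\begin{itemize}[label=\small$\bullet$,leftmargin=1.5em]
    \item $x \in L$ iff there exists $w \in \{0,1\}^{p(n)}$ with $V_1(x,w) = 1$;
    \item $x \notin L$ iff there exists $w \in \{0,1\}^{p(n)}$ with $V_2(x,w) = 1$.
\end{itemize}
In particular, for every $x$ exactly one of the two verifiers has an accepting witness, and the other has none.

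The algorithm runs as follows. Prepare the uniform superposition $\frac{1}{\sqrt{2^{p(n)}}}\sum_w \ket{w}\ket{0}\ket{0}\ket{0}$, then coherently compute $V_1(x,w)$ and $V_2(x,w)$ into the second and third registers, obtaining
\[
\frac{1}{\sqrt{2^{p(n)}}}\sum_w \ket{w}\ket{V_1(x,w)}\ket{V_2(x,w)}\ket{0}.
\]
Finally, write $V_1(x,w)\vee V_2(x,w)$ into the fourth register. Post-select on the fourth register being $1$, and output the value of the second register as $a(x)$; the fourth register plays the role of $b(x)$ in \cref{def:postbqp}. Since either $V_1$ or $V_2$ has an accepting witness, the post-selection succeeds with probability at least $2^{-p(n)} > 0$, so condition (1) of \cref{def:postbqp} is satisfied.

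For correctness, observe that after post-selecting on $V_1(x,w) \vee V_2(x,w) = 1$, the support of the state is exactly the set of witnesses $w$ for which at least one verifier accepts. If $x \in L$, then by the $\coNP$ condition no $w$ makes $V_2(x,w)=1$, so every surviving branch has $V_1(x,w)=1$; measuring the second register yields $1$ with probability $1$. Symmetrically, if $x \notin L$, then every surviving branch has $V_2(x,w)=1$ and $V_1(x,w) = 0$, so the measurement yields $0$ with probability $1$. This establishes the zero-error conditions $\Pr[a(x) = 1 \mid b(x) = 1] \in \{0,1\}$ demanded by $\posteqp$, and the whole circuit clearly runs in time polynomial in $n$ since $V_1, V_2$ do.

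I do not foresee a real obstacle here: the construction is essentially a quantum analogue of the standard observation that $\NP \cap \coNP \subseteq \mathsf{ZPP}^{\text{NP-oracle-with-promise}}$, with post-selection doing the job of the promise oracle. The only mild care is to make the two verifiers act on the same witness length (pad with unused bits if necessary) and to keep the computation reversible, both of which are routine.
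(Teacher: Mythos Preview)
Your proof is correct and follows essentially the same approach as the paper: prepare a uniform superposition over witnesses, coherently run both verifiers, post-select on at least one accepting, and read off which one did. The only cosmetic difference is that the paper keeps two separate witness registers (one for each verifier) rather than padding to a common length, but this is immaterial.
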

\begin{proof}
    Let $L \in \NP \cap \coNP$. Since $L \in \NP$, there is an efficient algorithm $M_1$ and a polynomial $p_1$ such that for every $x \in L$, there exists $u_1 \in \bool^{p_1(|x|)}$ such that $M_1(x,u_1) = 1$ and for every $x \not \in L$ and $u \in \bool^{p_1(|x|)}$ we have $M_1(x,u) = 0$. 
    Similarly since $L \in \coNP$ there is an efficient algorithm $M_2$ and polynomial $p_2$ such that for every $x \not \in L$, there exists $u_2 \in \bool^{p_2(|x|)}$ such that $M_2(x,u_2) = 1$ and for every $x \in L$ and $u_2 \in \bool^{p_2(|x|)}$ we have $M_2(x,u_2) = 0$. 

    Now, given $x$, our quantum computer can generate a uniform superposition over all the possible certificates for both $M_1$ and $M_2$ (concatenated together), and post-select on the event that either $M(x, u_1) = 1$ or $M_2(x,u_2) = 1$. Then, the quantum algorithm can measure all registers and simulate both $M_1(x,u_1)$ and $M_2(x,u_2)$ and see which one is 1. By definition, only one of $M_1$ and $M_2$ will accept, and whichever one accepts tells us if $x \in L$ or not.
\end{proof}

It is widely believed that $\NP \cap \coNP$ is not contained in $\P$ or even $\bpp$. As such, there is reason to believe that zero-error quantum algorithms with post-selection can offer advantage over efficient classical computation.

\section{Open Questions}

In this paper, we considered the problem of lower bounding the rational degree of Boolean functions in terms of their Fourier degree. While we could not answer this question in its full generality, we showed that the square root of the degree lower bounds the rational degree for both monotone and symmetric Boolean functions. We conjecture that this lower bound extends to all total Boolean functions.
\begin{conjecture}
    For all Boolean functions $f\colon \pmone^n \to \pmone$, $\sqrt{\deg(f)} \leq \rdeg(f)$.
\end{conjecture}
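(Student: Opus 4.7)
The plan is to attack the conjecture through the sharper multiplicative bound $\deg(f) \leq \ndeg(f)\cdot \ndeg(\bar f)$. This suffices because $\rdeg(f) = \max(\ndeg(f),\ndeg(\bar f))$, so $\rdeg(f)^2 \geq \ndeg(f)\cdot \ndeg(\bar f)$, and the multiplicative bound then gives $\sqrt{\deg(f)} \leq \rdeg(f)$. This inequality is essentially the polynomial-method analogue of de Wolf's conjecture $\dt(f) \leq O(\ndeg(f)\ndeg(\bar f))$ and would resolve \cref{q:fortnow} with $c=2$.

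Fix non-deterministic polynomials $p$ for $f$ of degree $k = \ndeg(f)$ and $q$ for $\bar f$ of degree $\ell = \ndeg(\bar f)$. The central algebraic fact is that $p(x)q(x) = 0$ pointwise on $\{0,1\}^n$, so the pair $(p(x),q(x))$ always lies on the union of the coordinate axes in $\mathbb{R}^2$: it equals $(\text{nonzero},0)$ exactly on $f^{-1}(1)$ and $(0,\text{nonzero})$ exactly on $f^{-1}(0)$. First I would try to construct a bivariate polynomial $R(y,z)$ satisfying $R(y,0)=1$ for every $y\in p(f^{-1}(1))$ and $R(0,z)=0$ for every $z\in q(f^{-1}(0))$, so that $R(p,q)$ represents $f$ exactly. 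The obstacle is that the naive interpolating $R$ has $y$-degree proportional to $|p(f^{-1}(1))|$, which after substituting $y=p(x)$ blows the total degree far past $k\ell$. Overcoming this requires either a careful choice of representatives $p,q$ whose ranges on the relevant sets are small or of product form, or an exploitation of the rigid identity that the multilinearization $\mathrm{ML}(p\cdot q)$ vanishes as a \emph{polynomial}, which couples the coefficient structures of $p$ and $q$ in a potentially usable way.

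In parallel I would try to generalize the sensitivity-based argument of \cref{subsec:monotone}, which yielded the identity $\rdeg(f) = \sens(f)$ on unate $f$ and from there $\sqrt{\deg(f)} \leq \rdeg(f)$ via $\deg(f)\leq \sens(f)^2$. A natural extension is to partition variables by the direction in which they contribute to sensitivity at various sensitive inputs and to apply the unate argument on each class, though sensitive bits can interact non-monotonically across classes. A more modest parallel goal is to extend the read-once lower bound of \cref{subsec:readonce} beyond gates that are symmetric or unate, which via the composition lemma (\cref{lem:trickle-down}) would chip away at the conjecture on a structurally rich class of functions.

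The principal obstacle is that this conjecture has been open for more than thirty years and no polynomial upper bound $\deg(f) \leq O(\rdeg(f)^c)$ is known for \emph{any} finite $c$; even ruling out an unbounded separation between $\deg$ and $\rdeg$ on total functions is open. A complete resolution likely requires a genuinely new algebraic ingredient, perhaps routed through the quantum equivalence $\postquantumquery_E(f) = \Theta(\rdeg(f))$, which recasts the conjecture as the statement that zero-error post-selection delivers at most a quadratic speedup over exact quantum computation on total functions.
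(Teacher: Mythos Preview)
The statement you are addressing is listed in the paper's \emph{Open Questions} section as a conjecture; the paper offers no proof of it, and indeed stresses that even a weaker polynomial relation $\deg(f)\le O(\rdeg(f)^c)$ for any finite $c$ is unknown. There is therefore no proof in the paper to compare your attempt against.

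Your submission is not a proof either, and to your credit you frame it that way: it is a research outline, and in the final paragraph you explicitly acknowledge that the problem has been open for over thirty years. On the substance: the first line of attack, proving $\deg(f)\le \ndeg(f)\cdot\ndeg(\bar f)$, is a natural sharpening and would indeed imply the conjecture, but your bivariate-interpolation idea runs into exactly the degree blowup you identify, and you offer no mechanism to control it; the vanishing of the multilinearization of $p\cdot q$ is true but you do not show how to exploit it. The second line, generalizing the unate/sensitivity argument beyond monotone functions, is stated only at the level of aspiration, and the known obstruction (e.g., the Kushilevitz function has full sensitivity but sublinear degree, so $\sens(f)\le\rdeg(f)$ fails in general) already blocks the most direct extension. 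In short, nothing here closes a gap; it is a reasonable survey of possible directions, but not a proof.
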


Answering this conjecture in the affirmative would place rational degree within a plethora of Boolean function complexity measures all of which are polynomially related. Recall that for partial functions, we have unbounded separations between the rational and approximate degrees in both directions. 

We showed that a hypothetical total function that witnesses any such separation must lack a certain level of structure: in particular, it cannot be symmetric, monotone, or expressible by a low-depth read-once Boolean formula. In this direction, an easier question is whether there are other classes of functions, such as transitive-symmetric functions, for which rational degree cannot be separated from Fourier degree.

We also proved that almost all Boolean functions $f\colon \pmone^n \to \pmone$ have rational degree at least $n/2 - \bigO(\sqrt{n})$. As mentioned in the preliminaries O'Donnell and Servedio proved \cite{ODONNELL2008298} that almost all Boolean functions $f\colon \pmone^n \to \pmone$ have sign degree at most $n/2 + \bigO(\sqrt{n \log{n}})$. It would be interesting to know if a similar result can be established for the rational degree.

\begin{conjecture}
    All but a negligible fraction of Boolean functions $f\colon \pmone^n \to \pmone$ have rational degree at most $n/2 + \littleO(n)$.     
\end{conjecture}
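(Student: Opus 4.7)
The plan is to follow the overall strategy of O'Donnell and Servedio's upper bound on sign degree \cite{ODONNELL2008298}, adapted to the non-deterministic (and hence rational) setting. Since $\rdeg(f) = \max(\ndeg(f),\ndeg(\bar f))$, and $f$ and $\bar f$ are identically distributed when $f$ is chosen uniformly, a union bound reduces the problem to showing $\ndeg(f) \leq n/2 + \littleO(n)$ with probability $1-\littleO(1)$.

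Fix $d = n/2 + h(n)$ with $h(n) = \omega(\sqrt{n})$ but $h(n) = \littleO(n)$, so that $M \coloneqq \binom{n}{\leq d} = 2^n(1-\littleO(1))$ by a Chernoff bound on $\mathrm{Binomial}(n,1/2)$. Let $V_d$ be the space of multilinear polynomials of degree at most $d$, so $\dim V_d = M$. For uniform random $f$, Chernoff also gives $\abs{A} = \abs{f^{-1}(0)} = 2^{n-1} + \bigO(2^{n/2}\sqrt{n})$ with high probability, so the kernel $W_A$ of the evaluation map $T_A \colon V_d \to \mathbb{R}^A$ satisfies $\dim W_A \geq M - \abs{A} = \Omega(2^n)$. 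Any $p \in W_A$ that is non-zero on every point of $B \coloneqq \bool^n\setminus A$ is then a non-deterministic polynomial for $f$ of degree at most $n/2 + \littleO(n)$, as desired.

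For each fixed $b \in B$, the subset $\{p \in W_A : p(b) = 0\}$ is either all of $W_A$ (the \emph{degenerate} case, in which the evaluation functional $\mathrm{ev}_b$ lies in $\mathrm{span}\{\mathrm{ev}_a : a \in A\} \subseteq V_d^*$) or a proper hyperplane of $W_A$. In the non-degenerate case, this hyperplane has measure zero under, say, a Gaussian measure on $W_A$. Hence if no $b \in B$ is degenerate for $A$, then a uniformly random Gaussian polynomial in $W_A$ is almost surely a valid non-deterministic representation, finishing the proof.

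The main obstacle is to rule out the degenerate case with high probability over $f$. A point $b$ is degenerate precisely when $b$ lies in the ``$V_d$-algebraic closure'' of $A$. Since $\abs{A}$ is strictly below $M$, the subspace $\mathrm{span}\{\mathrm{ev}_a : a \in A\}$ is a proper subspace of $V_d^*$ and should miss almost every $b \in B$; however, making this rigorous requires fine control over the algebraic structure of the $2^n$ evaluation functionals $\{\mathrm{ev}_x\}_{x \in \bool^n}$ sitting inside the $M$-dimensional space $V_d^*$, i.e., a kind of ``general position'' assertion for the hypercube characters at degree near $n/2$. A natural attack would combine a union bound over $b \in B$ and over typical sets $A$ with counting estimates on the minimal linear dependencies among the $\mathrm{ev}_x$, in a manner dual to the Alon--Anthony function-counting argument reproduced in our \cref{thm:function_counting}. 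This last step is where we expect the bulk of the technical difficulty to lie.
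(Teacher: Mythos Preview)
This statement is a \emph{conjecture} in the paper, not a theorem: it appears in the Open Questions section, and the paper offers no proof or proof sketch of it. There is therefore nothing in the paper for your proposal to be compared against.

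As for the proposal itself, you correctly identify that the crux is showing that, with high probability over the random zero-set $A = f^{-1}(0)$, no $b \in B$ has its evaluation functional $\mathrm{ev}_b$ lying in $\mathrm{span}\{\mathrm{ev}_a : a \in A\} \subseteq V_d^*$. You also correctly note that this is the step you do not know how to carry out. But this is not a minor technicality: the linear dependencies among the $\mathrm{ev}_x$ in $V_d^*$ are precisely the $2^n - M$ vectors $(\chi_S(x))_{x \in \bool^n}$ for $\abs{S} > d$, which are highly structured (each supported on all of $\bool^n$), so a naive ``general position'' heuristic or a crude union bound over $b$ and $A$ does not obviously go through. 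In particular, the function-counting argument of \cref{thm:function_counting} is about counting sign patterns of linear forms and does not directly yield control over which specific points are forced into the span of a random subset of evaluation functionals. Your outline is a reasonable starting framework, but as written it is a restatement of the problem rather than a proof: the entire difficulty of the conjecture is concentrated in the step you have flagged as ``where we expect the bulk of the technical difficulty to lie,'' and you have not supplied a mechanism for resolving it.
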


\subsection*{Acknowledgements}

The authors thank Scott Aaronson, Yuval Filmus, Lance Fortnow, Sabee Grewal, Daniel Liang, Geoffrey Mon, Rocco Servedio, Avishay Tal, Ronald de Wolf, and David Zuckerman for helpful conversations.

VI and SJ are supported by Scott Aaronson’s
Vannevar Bush Fellowship from the US Department of Defense, the Berkeley NSF-QLCI CIQC
Center, a Simons Investigator Award, and the Simons “It from Qubit” collaboration. VI is
 supported by a National Science Foundation Graduate Research Fellowship.
MKD and DW acknowledge support from the Army Research Office (grant W911NF-20-1-0015) and the Department of Energy, Office of Science, Office of Advanced Scientific Computing Research, Accelerated Research in Quantum Computing program. VMK acknowledges support from NSF Grant CCF-2008076 and a Simons Investigator Award (\#409864, David Zuckerman). MW was supported by NSF grant CCF-2006359.

\bibliography{refs}

\newcommand{\etalchar}[1]{$^{#1}$}
\newcommand{\stoc}[1]{Proceedings of the #1 {ACM} Symposium on Theory of
  Computing ({STOC})}\newcommand{\focs}[1]{Proceedings of the #1 {IEEE}
  Symposium on Foundations of Computer Science
  ({FOCS})}\newcommand{\soda}[1]{Proceedings of the #1 {ACM-SIAM} Symposium on
  Discrete Algorithms ({SODA})}\newcommand{\icalp}[1]{Proceedings of the #1
  International Colloquium on Automata, Languages, and Programming
  ({ICALP})}\newcommand{\itcs}[1]{Proceedings of the #1 Innovations in
  Theoretical Computer Science Conference
  (ITCS)}\newcommand{\mfcs}[1]{Proceedings of the #1 International Symposium on
  Mathematical Foundations of Computer Science (MFCS)}\newcommand{\toc}{Theory
  of Computing}\newcommand{\sicomp}{SIAM Journal on
  Computing}\newcommand{\jacm}{Journal of the
  ACM}\newcommand{\ccc}[1]{Proceedings of the #1 Annual Conference on
  Computational Complexity
  ({CCC})}\providecommand{\dnoopsort}[1]{}\providecommand{\Wnoopsort}[1]{}
\begin{thebibliography}{ABK{\etalchar{+}}21}

\bibitem[AB23]{ambainis2023}
Andris Ambainis and Aleksandrs Belovs.
\newblock An exponential separation between quantum query complexity and the
  polynomial degree.
\newblock In {\em \ccc{38th}}, 2023.
\newblock \href {http://arxiv.org/abs/2301.09218} {\path{arXiv:2301.09218}},
  \href {https://doi.org/10.4230/LIPIcs.CCC.2023.24}
  {\path{doi:10.4230/LIPIcs.CCC.2023.24}}.

\bibitem[ABB{\etalchar{+}}17]{ABB+15}
Andris Ambainis, Kaspars Balodis, Aleksandrs Belovs, Troy Lee, Miklos Santha,
  and Juris Smotrovs.
\newblock Separations in query complexity based on pointer functions.
\newblock {\em J. ACM}, 64(5):Art. 32, 24, 2017.
\newblock \href {https://doi.org/10.1145/3106234} {\path{doi:10.1145/3106234}}.

\bibitem[ABK16]{aaronson2016separations}
Scott Aaronson, Shalev Ben{-}David, and Robin Kothari.
\newblock Separations in query complexity using cheat sheets.
\newblock In {\em S{TOC}'16---{P}roceedings of the 48th {A}nnual {ACM} {SIGACT}
  {S}ymposium on {T}heory of {C}omputing}, pages 863--876. ACM, New York, 2016.
\newblock \href {https://doi.org/10.1145/2897518.2897644}
  {\path{doi:10.1145/2897518.2897644}}.

\bibitem[ABK{\etalchar{+}}21]{aaronson2020degree}
Scott Aaronson, Shalev Ben{-}David, Robin Kothari, Shravas Rao, and Avishay
  Tal.
\newblock {D}egree vs. {A}pproximate {D}egree and {Q}uantum {I}mplications of
  {H}uang’s {S}ensitivity {T}heorem.
\newblock In {\em Proceedings of the 53rd Annual ACM SIGACT Symposium on Theory
  of Computing}, page 1330–1342, 2021.
\newblock \href {http://arxiv.org/abs/2010.12629} {\path{arXiv:2010.12629}},
  \href {https://doi.org/10.1145/3406325.3451047}
  {\path{doi:10.1145/3406325.3451047}}.

\bibitem[AK]{ComplexityZoo}
Scott Aaronson and Greg Kuperberg.
\newblock Complexity zoo.
\newblock \url{https://complexityzoo.net/Complexity_Zoo}.

\bibitem[Alo93]{alonslicingthecube}
Noga Alon.
\newblock Personal communication to {M}. {S}aks.
\newblock Reported in M. Saks, Slicing the Hypercube, 1993.

\bibitem[Amb13]{Amb13}
Andris Ambainis.
\newblock Superlinear advantage for exact quantum algorithms.
\newblock In {\em S{TOC}'13---{P}roceedings of the 2013 {ACM} {S}ymposium on
  {T}heory of {C}omputing}, pages 891--899. ACM, New York, 2013.
\newblock \href {https://doi.org/10.1145/2488608.2488721}
  {\path{doi:10.1145/2488608.2488721}}.

\bibitem[Ant95]{ANTHONY199591}
Martin Anthony.
\newblock Classification by polynomial surfaces.
\newblock {\em Discrete Applied Mathematics}, 61(2):91--103, 1995.
\newblock \href {https://doi.org/10.1016/0166-218X(94)00008-2}
  {\path{doi:10.1016/0166-218X(94)00008-2}}.

\bibitem[BBC{\etalchar{+}}01]{beals1998}
Robert Beals, Harry Buhrman, Richard Cleve, Michele Mosca, and Ronald~de Wolf.
\newblock Quantum lower bounds by polynomials.
\newblock {\em \jacm}, 48(4):778--797, 2001.
\newblock \href {http://arxiv.org/abs/quant-ph/9802049}
  {\path{arXiv:quant-ph/9802049}}.

\bibitem[BBG{\etalchar{+}}22]{BBGJK21}
Kaspars Balodis, Shalev Ben{-}David, Mika G\"{o}\"{o}s, Siddhartha Jain, and
  Robin Kothari.
\newblock Unambiguous {DNF}s and {A}lon-{S}aks-{S}eymour.
\newblock In {\em 2021 {IEEE} 62nd {A}nnual {S}ymposium on {F}oundations of
  {C}omputer {S}cience---{FOCS} 2021}, pages 116--124. IEEE Computer Soc., Los
  Alamitos, CA, [2022] \copyright 2022.
\newblock \href {https://doi.org/10.1109/FOCS52979.2021.00020}
  {\path{doi:10.1109/FOCS52979.2021.00020}}.

\bibitem[BdW02]{buhrmandewolf2002survey}
Harry Buhrman and Ronald de~Wolf.
\newblock Complexity measures and decision tree complexity: a survey.
\newblock {\em Theoretical Computer Science}, 288(1):21--43, 2002.
\newblock \href {https://doi.org/10.1016/S0304-3975(01)00144-X}
  {\path{doi:10.1016/S0304-3975(01)00144-X}}.

\bibitem[BHT17]{BHT17}
Shalev Ben{-}David, Pooya Hatami, and Avishay Tal.
\newblock Low-sensitivity functions from unambiguous certificates.
\newblock In {\em 8th {I}nnovations in {T}heoretical {C}omputer {S}cience
  {C}onference}, volume~67 of {\em LIPIcs. Leibniz Int. Proc. Inform.}, pages
  Art. No. 28, 23. Schloss Dagstuhl. Leibniz-Zent. Inform., Wadern, 2017.

\bibitem[BKT20]{bunkotharithaler2020polystrikesback}
Mark Bun, Robin Kothari, and Justin Thaler.
\newblock The polynomial method strikes back: Tight quantum query bounds via
  dual polynomials.
\newblock {\em Theory of Computing}, 16(10):1--71, 2020.
\newblock \href {https://doi.org/10.4086/toc.2020.v016a010}
  {\path{doi:10.4086/toc.2020.v016a010}}.

\bibitem[BS21]{BS20}
Nikhil Bansal and Makrand Sinha.
\newblock {$k$}-{F}orrelation optimally separates quantum and classical query
  complexity.
\newblock In {\em S{TOC} '21---{P}roceedings of the 53rd {A}nnual {ACM}
  {SIGACT} {S}ymposium on {T}heory of {C}omputing}, pages 1303--1316. ACM, New
  York, [2021] \copyright 2021.
\newblock \href {https://doi.org/10.1145/3406325.3451040}
  {\path{doi:10.1145/3406325.3451040}}.

\bibitem[BT20]{bun17optimal}
Mark Bun and Justin Thaler.
\newblock A nearly optimal lower bound on the approximate degree of {${\rm
  AC}^0$}.
\newblock {\em SIAM J. Comput.}, 49(4):59--96, 2020.
\newblock \href {https://doi.org/10.1137/17M1161737}
  {\path{doi:10.1137/17M1161737}}.

\bibitem[Cov65]{cover1965}
Thomas~M. Cover.
\newblock Geometrical and statistical properties of systems of linear
  inequalities with applications in pattern recognition.
\newblock {\em IEEE Transactions on Electronic Computers}, EC-14(3):326--334,
  1965.
\newblock \href {https://doi.org/10.1109/PGEC.1965.264137}
  {\path{doi:10.1109/PGEC.1965.264137}}.

\bibitem[dW00]{dewolf2000}
Ronald de~Wolf.
\newblock Characterization of non-deterministic quantum query and quantum
  communication complexity.
\newblock In {\em \ccc{15th}}, pages 271--278. IEEE Computer Society, 2000.
\newblock \href {http://arxiv.org/abs/cs/0001014} {\path{arXiv:cs/0001014}},
  \href {https://doi.org/10.1109/CCC.2000.856758}
  {\path{doi:10.1109/CCC.2000.856758}}.

\bibitem[For03]{FortnowBlog}
Lance Fortnow.
\newblock Computational complexity - rational functions and decision-tree
  complexity.
\newblock
  \url{https://blog.computationalcomplexity.org/2003/11/rational-functions-and-decision-tree.html},
  2003.

\bibitem[FSS81]{FSS1981}
Merrick Furst, James~B. Saxe, and Michael Sipser.
\newblock Parity, circuits, and the polynomial-time hierarchy.
\newblock In {\em 22nd Annual Symposium on Foundations of Computer Science},
  pages 260--270, 1981.
\newblock \href {https://doi.org/10.1109/SFCS.1981.35}
  {\path{doi:10.1109/SFCS.1981.35}}.

\bibitem[GJPW18]{GJPW15}
Mika G\"{o}\"{o}s, T.~S. Jayram, Toniann Pitassi, and Thomas Watson.
\newblock Randomized communication versus partition number.
\newblock {\em ACM Trans. Comput. Theory}, 10(1):Art. 4, 20, 2018.
\newblock \href {https://doi.org/10.1145/3170711} {\path{doi:10.1145/3170711}}.

\bibitem[GPW18]{GPW15}
Mika G\"{o}\"{o}s, Toniann Pitassi, and Thomas Watson.
\newblock Deterministic communication vs. partition number.
\newblock {\em SIAM J. Comput.}, 47(6):2435--2450, 2018.
\newblock \href {https://doi.org/10.1137/16M1059369}
  {\path{doi:10.1137/16M1059369}}.

\bibitem[GSS16]{GSS13}
Justin Gilmer, Michael Saks, and Srikanth Srinivasan.
\newblock Composition limits and separating examples for some boolean function
  complexity measures.
\newblock {\em Combinatorica}, 36(3):265--311, 2016.
\newblock \href {https://doi.org/10.1007/s00493-014-3189-x}
  {\path{doi:10.1007/s00493-014-3189-x}}.

\bibitem[Hua19]{huang2019induced}
Hao Huang.
\newblock Induced subgraphs of hypercubes and a proof of the sensitivity
  conjecture.
\newblock {\em Ann. of Math. (2)}, 190(3):949--955, 2019.
\newblock \href {http://arxiv.org/abs/1907.00847} {\path{arXiv:1907.00847}},
  \href {https://doi.org/10.4007/annals.2019.190.3.6}
  {\path{doi:10.4007/annals.2019.190.3.6}}.

\bibitem[MdW15]{mahadev2014}
Urmila Mahadev and Ronald de~Wolf.
\newblock Rational approximations and quantum algorithms with postselection.
\newblock {\em Quantum Info. Comput.}, 15(3–4):295–307, 2015.
\newblock \href {http://arxiv.org/abs/1401.0912} {\path{arXiv:1401.0912}},
  \href {https://doi.org/10.48550/ARXIV.1401.0912}
  {\path{doi:10.48550/ARXIV.1401.0912}}.

\bibitem[MP69]{minskypapert}
Marvin Minsky and Seymour Papert.
\newblock {\em Perceptrons}.
\newblock MIT Press, Cambridge, MA, 1969.

\bibitem[NC11]{nielsen-chuang}
Michael~A. Nielsen and Isaac~L. Chuang.
\newblock {\em Quantum Computation and Quantum Information: 10th Anniversary
  Edition}.
\newblock Cambridge University Press, USA, 10th edition, 2011.

\bibitem[Nis91]{nisan1991}
Noam Nisan.
\newblock {CREW PRAM}s and decision trees.
\newblock {\em SIAM Journal on Computing}, 20(6):999--1007, 1991.
\newblock \href {https://doi.org/10.1137/0220062} {\path{doi:10.1137/0220062}}.

\bibitem[NS92]{nisan-szegedy}
Noam Nisan and Mario Szegedy.
\newblock On the degree of boolean functions as real polynomials.
\newblock In {\em \stoc{24th}}, pages 462--467. {ACM}, 1992.
\newblock \href {https://doi.org/10.1145/129712.129757}
  {\path{doi:10.1145/129712.129757}}.

\bibitem[NW95]{nisanwigderson1995}
Noam Nisan and Avi Wigderson.
\newblock On rank vs. communication complexity.
\newblock {\em Combinatorica}, 15(4):557--565, 1995.
\newblock \href {https://doi.org/10.1007/BF01192527}
  {\path{doi:10.1007/BF01192527}}.

\bibitem[NW99]{Nayak99}
Ashwin Nayak and Felix Wu.
\newblock The quantum query complexity of approximating the median and related
  statistics.
\newblock In {\em Annual {ACM} {S}ymposium on {T}heory of {C}omputing
  ({A}tlanta, {GA}, 1999)}, pages 384--393. ACM, New York, 1999.
\newblock \href {https://doi.org/10.1145/301250.301349}
  {\path{doi:10.1145/301250.301349}}.

\bibitem[O'D14]{odonnell2021analysis}
Ryan O'Donnell.
\newblock {\em Analysis of {B}oolean functions}.
\newblock Cambridge University Press, New York, 2014.
\newblock \href {http://arxiv.org/abs/2105.10386} {\path{arXiv:2105.10386}},
  \href {https://doi.org/10.1017/CBO9781139814782}
  {\path{doi:10.1017/CBO9781139814782}}.

\bibitem[OS08]{ODONNELL2008298}
Ryan O'Donnell and Rocco~A. Servedio.
\newblock Extremal properties of polynomial threshold functions.
\newblock {\em Journal of Computer and System Sciences}, 74(3):298--312, 2008.
\newblock Computational Complexity 2003.
\newblock \href {https://doi.org/10.1016/j.jcss.2007.06.021}
  {\path{doi:10.1016/j.jcss.2007.06.021}}.

\bibitem[Rub95]{Rub95}
David Rubinstein.
\newblock Sensitivity vs. block sensitivity of boolean functions.
\newblock {\em Combinatorica}, 15(2):297--299, 1995.

\bibitem[Sak93]{saks_1993}
Michael~E. Saks.
\newblock {\em Slicing the hypercube}, page 211–256.
\newblock London Mathematical Society Lecture Note Series. Cambridge University
  Press, 1993.
\newblock \href {https://doi.org/10.1017/CBO9780511662089.009}
  {\path{doi:10.1017/CBO9780511662089.009}}.

\bibitem[SSW23]{SSW20}
Alexander~A. Sherstov, Andrey~A. Storozhenko, and Pei Wu.
\newblock An optimal separation of randomized and quantum query complexity.
\newblock {\em SIAM J. Comput.}, 52(2):525--567, 2023.
\newblock \href {https://doi.org/10.1137/22M1468943}
  {\path{doi:10.1137/22M1468943}}.

\bibitem[vZGR97]{vongathen_roche}
Joachim von Zur~Gathen and James~R. Roche.
\newblock Polynomials with two values.
\newblock {\em Combinatorica}, 17(3):345--362, 1997.
\newblock \href {https://doi.org/10.1007/BF01215917}
  {\path{doi:10.1007/BF01215917}}.

\end{thebibliography}
\bibliographystyle{alphaurl}

\appendix

\newpage
\section{Rational Degree and Query Complexity with Post-Selection}\label{app:rdeg_postq}

In this appendix, we show that the main theorem of \cite{mahadev2014}, stated for total Boolean functions, in fact also holds for partial ones. We do so simply by observing that the proof given in \cite{mahadev2014} also works for partial Boolean functions.

\begin{theorem}
    For any (possibly partial) Boolean function $f$ on $n$ variables and $\eps \in [0,1/2)$, $\rdeg_\eps(f) = \Theta(\postquantumquery_\eps(f))$.
\end{theorem}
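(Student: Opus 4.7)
The plan is to verify that the proof of Mahadev and de Wolf for total Boolean functions goes through essentially unchanged for partial ones, by carrying out both directions and tracking where totality might be used. Since the two inequalities are independent, I will handle each separately.

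For the upper bound $\postquantumquery_\eps(f) \leq O(\rdeg_\eps(f))$, start with a rational $\eps$-approximation $p/q$ of $f$ of degree $d \coloneqq \rdeg_\eps(f)$, so that $|f(x) - p(x)/q(x)| \leq \eps$ for every $x \in D$ (in particular, $q(x) \neq 0$ on $D$). Following the construction of \cite{mahadev2014}, rescale $p$ and $q$ by a common constant so that both are uniformly bounded in magnitude on $\{0,1\}^n$, and apply the amplitude--polynomial correspondence of \cite{beals1998} to build $O(d)$-query subroutines that prepare a state with $p(x)$ and $q(x)$ encoded as amplitudes on distinguished basis states. Post-selecting on landing in the two-dimensional subspace spanned by these basis states produces a measurement whose probability of outcome $|0\rangle$ equals $p(x)/q(x)$ at every input where $q(x) \neq 0$ --- which includes all of $D$. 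Thus the resulting algorithm uses $O(d)$ queries and outputs $f(x)$ with error $\eps$ on $D$, which is all that is required by the definition of $\postquantumquery_\eps$ for partial functions.

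For the lower bound $\rdeg_\eps(f) \leq O(\postquantumquery_\eps(f))$, take a $T$-query post-selected quantum algorithm $\mathcal{A}$ computing $f$ with error $\eps$. By the polynomial method of \cite{beals1998}, the joint probability $P(x)$ that both the answer qubit and the post-selection qubit measure to $1$, and the marginal probability $Q(x)$ that the post-selection qubit is $1$, are multilinear polynomials in $x$ of degree at most $2T$. These polynomials are automatically well-defined on all of $\{0,1\}^n$, because the unitary circuit of $\mathcal{A}$ acts coherently on every input string. The correctness of $\mathcal{A}$ on $D$ then translates directly into the bound $|f(x) - P(x)/Q(x)| \leq \eps$ for all $x \in D$, so $P/Q$ is an $\eps$-approximate rational representation of $f$ of degree at most $2T$.

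The only place where totality might have intervened is in ensuring that the rational functions are well-formed. In the forward direction, we need $q(x) \neq 0$ only on $D$, which is part of the definition of a rational representation. In the reverse direction, we need $Q(x) \neq 0$ only on $D$, which is guaranteed by the standing requirement in the definition of $\posteqp$ (and hence of $\postquantumquery_\eps$) that the post-selection event have positive probability on every input the algorithm is promised to handle. Neither direction ever evaluates $f$ outside $D$, so the Mahadev--de Wolf argument transfers verbatim to the partial-function setting. The ``hard'' part is therefore essentially a sanity check rather than a new technical step: one simply observes that the polynomials produced by the polynomial method are defined on all of $\{0,1\}^n$, matching the domain requirement in the definition of $\rdeg_\eps$, and that the query algorithm constructed from $p/q$ only needs correctness on $D$.
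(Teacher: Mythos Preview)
Your approach matches the paper's: both directions are argued exactly as in Appendix~A, with the lower bound extracting degree-$2T$ polynomials $P,Q$ from the post-selected algorithm via \cite{beals1998} and the upper bound preparing a post-selected state with amplitudes $p(x),q(x)$ from a rational representation. One step in your sketch is incorrect as written, however: after post-selecting onto the two distinguished basis states, the normalized state is proportional to $p(x)\ket{0}+q(x)\ket{1}$, so the probability of outcome $\ket{0}$ is $p(x)^2/\bigl(p(x)^2+q(x)^2\bigr)$, not $p(x)/q(x)$ --- indeed $p/q$ need not lie in $[0,1]$. The paper handles this by passing to the $\pm1$ convention and measuring in the Hadamard basis, so that the outcome distinguishes $p/q\approx +1$ from $p/q\approx -1$ with error at most $\eps$. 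This is a local repair rather than a gap in the strategy, and nothing in it depends on whether $f$ is total.
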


First we show that rational degree lower bounds quantum query complexity with post-selection.

\begin{lemma}
\label{lem:rdeg-lowerbound-postq}
    Let $D \subseteq \bool^n$ and consider $f\colon D \to \bool$. Then $\rdeg_\eps(f) \leq \postquantumquery_\eps(f)$.
\end{lemma}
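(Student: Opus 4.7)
The plan is to adapt the standard polynomial method of Beals et al.\ to the post-selected setting. Let $A$ be a $T$-query post-selected quantum algorithm that $\eps$-computes $f$ in the sense of \cref{def:postbqp}, meaning it outputs two bits $a(x), b(x)$ via measurement of the first two qubits, with $\Pr[b(x)=1] > 0$ and $\Pr[a(x)=1 \mid b(x)=1]$ is $\eps$-close to $f(x)$ for every $x \in D$. I want to extract from $A$ an $\eps$-approximate rational representation of $f$ whose numerator and denominator have low degree.

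First I would recall that after $T$ queries to the phase/bit oracle encoding $x \in \bool^n$, every amplitude of the final state is a multilinear polynomial in $x_1,\dots,x_n$ of degree at most $T$. Consequently, the probabilities $p(x) \coloneqq \Pr[a(x)=1 \wedge b(x)=1]$ and $q(x) \coloneqq \Pr[b(x)=1]$, being squared magnitudes of sums of amplitudes, are each multilinear polynomials of degree at most $2T$. By the definition of post-selection we have
\begin{equation*}
\Pr[a(x) = 1 \mid b(x) = 1] \;=\; \frac{p(x)}{q(x)},
\end{equation*}
which is well-defined on $D$ since $q(x) > 0$ there. Therefore $p/q$ is an $\eps$-approximate rational representation of $f$ of degree at most $2T$, which yields $\rdeg_\eps(f) \leq 2 \cdot \postquantumquery_\eps(f)$; in particular we obtain the claimed inequality up to the constant absorbed into the $\Theta$-characterization.

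The main content is the observation that $p$ and $q$ arise directly as polynomial-method quantities, so no genuine quantum reasoning beyond Beals et al.\ is required. The main subtlety, and the only thing that needs care, is ensuring that the ratio is defined everywhere on $D$ and that the $\eps$-closeness of the conditional probability to $f$ transfers to the rational approximation condition on $D$ without requiring boundedness on all of $\bool^n$ --- this is precisely the reason the definition of $\rdeg_\eps$ in the paper drops the global boundedness that is demanded of $\adeg_\eps$, as discussed in the preliminaries. Given this setup, the lemma follows immediately.
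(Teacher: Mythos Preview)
Your proposal is correct and follows essentially the same argument as the paper: extract the conditional probability $\Pr[a(x)=1\mid b(x)=1]$ as a ratio of the two degree-$\leq 2T$ polynomials $p(x)=\Pr[a(x)=1\wedge b(x)=1]$ and $q(x)=\Pr[b(x)=1]$ coming from Beals et al., and observe this is an $\eps$-approximate rational representation of $f$ on $D$. Your handling of the factor of $2$ and of the boundedness-on-$D$ issue is exactly what the paper does (indeed the paper remarks after \cref{lem:postq-lowerbound-rdeg} that this direction even yields a rational representation bounded outside the promise).
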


\begin{proof}
    Suppose there is a $T$-query post-selected $\eps$-error algorithm for $f$. As in \cref{def:postbqp} let $a(x)$ be the random variable corresponding to the measurement outcome of the output qubit and $b(x)$ the random variable corresponding to the measurement outcome of the post-selected qubit. We have that 
    \begin{align*}
        \abs{\Pr[a(x) = 1 | b(x) = 1] - f(x)} \leq \eps. 
    \end{align*}
    Now, by \cite{beals1998}, the amplitudes of a quantum algorithm after $T$ oracle queries are polynomials in $x_1,\dots, x_n$ of degree at most $2T$. It immediately follows that
    $\Pr[a(x)\land b(x)=1]$ and $\Pr[b(x)=1]$ are polynomials of degree at most $2T$: call them $p$ and $q$ respectively. Thus we have
    \begin{align*}
        \abs{\dfrac{p(x)}{q(x)} - f(x)} = \abs{\Pr[a(x) = 1 | b(x) = 1] - f(x)} \leq \eps,
    \end{align*}
    which gives us the desired rational approximation of $f$.
\end{proof}

Now, we show that, up to a constant factor, the rational degree upper bounds quantum query complexity with post-selection. The proof is Fourier analytic, and so we switch to the $\pmone$ basis.

\begin{lemma}
\label{lem:postq-lowerbound-rdeg}
    Let $D \subseteq \pmone^n$ and consider $f\colon D \to \pmone$. Then $\postquantumquery_\eps(f) \leq 2 \rdeg_\eps(f)$.
\end{lemma}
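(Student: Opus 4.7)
The strategy is to adapt the construction of Mahadev and de Wolf \cite{mahadev2014} and verify that every step is insensitive to whether $f$ is total or partial. Fix an $\epsilon$-approximate rational representation $p/q$ of $f$ with $\max(\deg p,\deg q) = d \coloneqq \rdeg_\epsilon(f)$. After multilinearizing and then rescaling both numerator and denominator by a common scalar $1/(M\sqrt{2})$ with $M \coloneqq \max(\norm{p}_\infty,\norm{q}_\infty)$ (sup norms over all of $\pmone^n$), I may assume $\norm{p}_\infty,\norm{q}_\infty \leq 1/\sqrt{2}$ without altering the ratio $p/q$.

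First I would invoke the constructive converse of the polynomial method of \cite{beals1998}: for any multilinear polynomial $r$ on $\pmone^n$ of degree at most $d$ with $\norm{r}_\infty \leq 1/\sqrt{2}$, there is a $d$-query quantum circuit $U_r$ mapping a fixed initial state to some state $\ket{\phi_x^r}$ whose amplitude on a designated basis state $\ket{0^\star}$ equals $r(x)$. Applying this to both $p$ and $q$ gives circuits $U_p$ and $U_q$, each with $d$ queries.

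Next I would combine $U_p$ and $U_q$ using a standard interference trick. Initialize an ancilla qubit in $\ket{+}$, apply $U_p$ controlled on $\ket{0}$ and $U_q$ controlled on $\ket{1}$, then Hadamard the ancilla. The resulting state has amplitudes $\tfrac{1}{2}(p(x)+q(x))$ and $\tfrac{1}{2}(p(x)-q(x))$ on $\ket{0}_a\ket{0^\star}$ and $\ket{1}_a\ket{0^\star}$, respectively, and uses $2d$ queries in total. Post-selecting on the main register measuring to $\ket{0^\star}$, the ancilla is measured as $0$ with conditional probability
\begin{equation}
\frac{(p(x)+q(x))^2}{(p(x)+q(x))^2 + (p(x)-q(x))^2} \;=\; \frac{(1+r)^2}{(1+r)^2 + (1-r)^2}, \qquad r \coloneqq \frac{p(x)}{q(x)}.
\end{equation}
When $|r - f(x)| \leq \epsilon$ with $f(x) \in \pmone$, an elementary calculation shows this expression is within $O(\epsilon)$ of $1$ or $0$ according to the sign of $f(x)$, so (after rescaling $\epsilon$ by a constant absorbed into the overall factor of $2$) measuring the ancilla yields the required $\epsilon$-error algorithm in the sense of \cref{def:postbqp}.

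The main obstacle lies in confirming that nothing breaks in the partial-function setting, and two checks suffice. First, the polynomial-to-amplitude construction takes as input only the polynomial $r$ and its global sup-norm bound on $\pmone^n$, not any information about $f$; hence partiality of $f$ is immaterial. Second, post-selection requires the flagged event to have positive probability: since $p/q$ is defined on $D$, the polynomial $q$ is nonzero on $D$, and thus $(p(x)+q(x))^2 + (p(x)-q(x))^2 = 2(p(x)^2 + q(x)^2) > 0$ for every $x \in D$. With both checks in place, the algorithm uses $2d = 2\rdeg_\epsilon(f)$ queries, establishing the claimed bound.
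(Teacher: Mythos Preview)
Your proposal is correct and lands on the same endpoint as the paper---a post-selected state proportional to $p(x)\ket{0}+q(x)\ket{1}$ followed by a Hadamard-basis measurement---but the route differs in a way worth noting. The paper prepares the \emph{joint} state $\sum_S(\hat p(S)\ket{0}+\hat q(S)\ket{1})\ket{S}$, uses $d$ queries to insert the phases $\chi_S(x)$, Hadamards the $\ket{S}$ register, and post-selects on $\ket{0^n}$; this costs only $d$ queries, so the paper's argument actually yields the stronger bound $\postquantumquery_\eps(f)\leq\rdeg_\eps(f)$. You instead invoke a polynomial-to-amplitude subroutine as a black box, run it once for $p$ and once for $q$ controlled on an ancilla, and pay $2d$ queries. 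Your modular framing makes the two partiality checks (sup-norm bound is global; $q\neq 0$ on $D$ guarantees nonzero post-selection probability) especially transparent, while the paper's explicit single-pass construction is query-tighter. One small caveat: the ``constructive converse'' you cite to \cite{beals1998} is not in that paper; it is exactly the Fourier-encoding step the paper spells out, so you are implicitly using the same primitive, just twice instead of once.
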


\begin{proof}
    Suppose $f\colon D \to \pmone^n$ has an $\eps$-approximate rational representation $p/q$ such that $\max\{\deg(p),\deg(q)\} = d$. Considering the Fourier expansions of $p,q$, we can construct the state
    \[
    \sum_{S \subseteq [n]} \hat{p}(S)\ket{0}\ket{S} + \hat{q}(S)\ket{1}\ket{S},
    \]
    where $\ket{S}$ is the basis state that corresponds to the indicator bitstring for the set $S$. For simplicity, we have left out normalizing constants. Then, using $\max\{\deg(p),\deg(q)\} = d$ queries to $x$, we can construct the state
    \[
    \sum_{S \subseteq [n]} \hat{p}(S) \chi_S(x) \ket{0}\ket{S} + \hat{q}(S) \chi_S(x) \ket{1}\ket{S}.
    \]
    Now we apply an $n$-qubit Hadamard to the second register, obtaining the state
    \[
    \ket{0}\left(\sum_{S \subseteq[n]} \hat{p}(S) \ket{0^n} + ...  \right) + \ket{1}\left(\sum_{S \subseteq[n]} \hat{q}(S) \ket{0^n} + ...  \right)
    \]
    after which we post-select on the second register being equal to $0^n$. This gives us the (again, unnormalized) state
    \[
    \ket{0}\left(\sum_{S \subseteq[n]} \hat{p}(S) \ket{0^n}\right) + \ket{1}\left(\sum_{S \subseteq[n]} \hat{q}(S) \ket{0^n}\right) = \left(p(x)\ket{0} + q(x) \ket{1}\right)\ket{0^n}.
    \]
    After discarding the second register and normalizing, we are left with
    \[
    \frac{p(x)}{p(x)^2 + q(x)^2} \ket{0} + \frac{q(x)}{p(x)^2 + q(x)^2} \ket{1}.
    \]
    We measure this state in the Hadamard basis and interpret the result as having value in $\pmone$. If $f(x) = -1$, then $p(x)/q(x) \in [-1-\eps, -1 + \eps]$. The probability of measuring $\ket{-}$ is 
    \[
    \frac{(q(x) - p(x))^2}{2(p(x)^2 + q(x)^2)} = \frac{(1 - p(x)/q(x))^2}{2((p(x)/q(x))^2 + 1} \leq \frac{\eps^2}{2(1 + (1-\eps)^2)} \leq \eps.
    \]
\end{proof}

Note that in the proof of \cref{lem:rdeg-lowerbound-postq} we get a rational polynomial which is bounded outside of the promise. As a consequence, we have that imposing this boundedness condition only increases the $\epsilon$-approximate rational degree by at most a factor of 2.

\section{Read-Once $\TC$ Formulae}\label{app:readonce_tc}

In this appendix, we prove a structural result on read-once $\TC$ formulae. Recall that $\TC$ formulae are constructed from threshold gates. For convenience in our proof, we will consider the gate set as consisting of the $\NOTgate$ gate and all non-constant \emph{monotone-increasing} Threshold gates. A non-constant monotone-increasing Threshold gate is a gate that computes a threshold function of the form $\Thr_k \colon \{0,1\}^m \to \{0,1\}$ where $m \in \mathbb{N}$, $k\in \{1,\dots, m\}$ and $\Thr_k(x) = 1$ if and only if $\abs{x} \geq k$.

More specifically, we show that any read-once $\TC$ formula on $m$ disjoint variables restricts to either an $\AND$ or $\OR$ function on at least $\sqrt{m}$ distinct variables up to negating some input bits. As such, we make the following definition. 
\begin{definition}[AND- and OR- dimension]
    Let $f\colon \{0,1\}^n \to \{0,1\}$. The AND-dimension of $f$, denoted $\anddim(f)$, is the largest positive integer $m$ such that there exists a restriction $\rho$ of $f$ such that $f|_\rho \colon \{0,1\}^{m} \to \{0,1\}$ is equal to the AND function up to negating some input bits. The OR-dimension of $f$, denoted $\ordim(f)$, is defined similarly using the OR function instead.
\end{definition}

Now we can prove \cref{prop:read_once_threshold}.

\begin{prop}\label{prop:read_once_threshold}
   Suppose $f\colon \{0,1\}^n \to \{0,1\}$ is a read-once $\TC$ formula on $m$ disjoint variables of \emph{any} depth, then $\anddim(f)\cdot \ordim(f) \geq m$. In particular, $\rdeg(f) \geq \max(\anddim(f), \ordim(f)) \geq  \sqrt{m}$.
\end{prop}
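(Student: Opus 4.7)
The plan is to prove $\anddim(f)\cdot \ordim(f) \geq m$ by structural induction on the read-once $\TC$ formula; the ``in particular'' part then follows from the arithmetic-mean inequality $\max(\anddim(f),\ordim(f)) \geq \sqrt{\anddim(f)\cdot\ordim(f)}$ together with $\rdeg(f) \geq \max(\anddim(f),\ordim(f))$. This last bound holds because $\rdeg$ is non-increasing under restriction and invariant under negating input bits (\cref{prop:negation}), while a short direct calculation gives $\rdeg(\AND_r) = \rdeg(\OR_r) = r$ (namely $\ndeg(\AND_r) = r$ via $\prod_i x_i$ and multilinear uniqueness, and $\ndeg(\overline{\AND_r}) = 1$ via $r - \sum_i x_i$).

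If the top gate of $f$ is a $\NOTgate$ we simply write $f = \neg h$; this swaps AND- and OR-dimension, so the claim follows from the inductive hypothesis applied to $h$. Otherwise the top gate is a monotone threshold $\Thr_k$ and we write $f = \Thr_k(g_1,\ldots,g_w)$, where each $g_i$ is a read-once $\TC$ formula on $m_i$ disjoint variables with $m = \sum_i m_i$. Set $a_i := \anddim(g_i)$ and $o_i := \ordim(g_i)$; by induction $a_i o_i \geq m_i$. For any $S \subseteq [w]$ with $|S| = k$, I would fix each $g_i$ for $i \notin S$ to the constant $0$ (possible since $g_i$ is non-constant) and restrict each $g_i$ for $i \in S$ to an $\AND_{a_i}$ (possible by the definition of $\anddim(g_i)$). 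Because the $g_i$'s act on disjoint sets of variables, $f$ under this combined restriction becomes $\Thr_k$ applied to $w - k$ zeros and $k$ ANDs, which equals $\AND_{\sum_{i\in S}a_i}$. Hence $\anddim(f) \geq \alpha := \max_{|S|=k}\sum_{i \in S} a_i$. Symmetrically, fixing $k-1$ of the $g_i$'s to $1$ and restricting the remaining $w-k+1$ to $\OR_{o_i}$ yields $\ordim(f) \geq \beta := \max_{|T|=w-k+1}\sum_{i \in T} o_i$.

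The main obstacle is now the purely combinatorial inequality
\[
\alpha\,\beta \;\geq\; \sum_{i=1}^{w} a_i o_i,
\]
since together with induction it gives $\anddim(f)\cdot\ordim(f) \geq \alpha\beta \geq \sum_i a_i o_i \geq \sum_i m_i = m$. To establish it, sort WLOG so that $a_1 \geq a_2 \geq \cdots \geq a_w$, so $\alpha = \sum_{i \leq k} a_i$. Let $i^{*} \in \arg\max_{i \leq k} o_i$ and choose the admissible set $T := \{i^{*}\} \cup \{k+1,\ldots,w\}$ of size $w-k+1$, which gives $\beta \geq o_{i^{*}} + \sum_{i>k} o_i$. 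Then
\[
\sum_i a_i o_i \;=\; \sum_{i \leq k} a_i o_i + \sum_{i > k} a_i o_i \;\leq\; o_{i^{*}}\, \alpha \;+\; \alpha\sum_{i > k} o_i \;\leq\; \alpha\,\beta,
\]
where the first inequality uses $o_i \leq o_{i^{*}}$ for $i \leq k$ (by choice of $i^{*}$) and $a_i \leq a_1 \leq \alpha$ for $i > k$ (by the sort order). This closes the induction and proves the proposition.
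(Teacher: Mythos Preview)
Your proof is correct and follows the same inductive scheme as the paper: swap $\anddim$ and $\ordim$ at a $\NOTgate$, and at a threshold root $\Thr_k$ obtain $\anddim(f)\geq\alpha$ and $\ordim(f)\geq\beta$ by the same restriction argument, then invoke the combinatorial inequality $\alpha\beta\geq\sum_i a_i o_i$ to close the induction. The one noteworthy difference is in how that inequality is proved. The paper (as \cref{lem:sequence_inequality}) first applies the rearrangement inequality to reduce to the doubly-sorted case and then does a short but somewhat ad hoc estimate; your argument---sort by the $a_i$, pick $i^{*}=\arg\max_{i\le k}o_i$, and lower-bound $\beta$ using the explicit set $T=\{i^{*}\}\cup\{k{+}1,\dots,w\}$---is more direct and avoids the rearrangement inequality entirely. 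Your handling of the ``in particular'' clause via $\ndeg(\AND_r)=r$ is also cleaner than the paper's citation of \cref{lem:rdeg-symmetric}. One small omission: you should state the base case (a single literal, where $\anddim=\ordim=m=1$), though it is trivial.
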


\begin{proof}
    The ``in particular'' part, follows from the first part by observing that \cref{lem:rdeg-symmetric,prop:negation} imply $\rdeg(f) \geq \max(\anddim(f), \ordim(f))$. Therefore $\rdeg(f) \geq \sqrt{m}$.

    We proceed to prove the first part. The proof is by induction on the depth $d$ of the tree $T$ defining the read-once formula. 
    
    The proposition holds in the base case $d = 1$ since either 
    \begin{enumerate}
        \item $m=1$ and $f$ is the dictator function or its negation, then $\max(\anddim(f),\ndeg(f)) = 1$, or
        \item $\Thr_k\colon \bool^m \to \bool$ is a restriction of $f$, so $\anddim(f) \geq k$ (by restricting $m-k$ bits of $\Thr_k$ to $0$) and $\ordim(f) \geq m-k+1$ (by restricting $k-1$ bits of $\Thr_k$ to $1$). So in this case $\max(\anddim(f),\ordim(f)) \geq k(m-k+1) \geq m$ as $k\in \{1,\dots,m\}$
    \end{enumerate}

    For the induction step, assume that $T$ has depth $d > 1$, and the proposition holds for all read-once formulae of depth less than $d$. There are again two cases:
    \begin{enumerate}
        \item The root is labeled by the  NOT gate. In this case, $f$ can be  computed by taking the NOT of an read-once formula on $m$ distinct variables of depth $d-1$. By the inductive hypothesis, that read-once formula computes a function $f'\colon \{0,1\}^n \to \{0,1\}$ with $\anddim(f')\cdot \ordim(f') \geq m$. Since $f$ is the NOT of $f'$, De Morgan's laws and the definition of the AND- and OR-dimensions imply that $\anddim(f) \geq \ordim(f')$ and $\ordim(f) \geq \anddim(f')$. Therefore $\anddim(f)\cdot \ordim(f) \geq m$, which establishes the induction step.
        \item The root is labeled by a Threshold gate that computes $\Thr_k \colon \{0,1\}^a \to \{0,1\}$.  Let $T_1, \ldots, T_a$ denote the $a$ read-once formulae feeding into the root. Each $T_i$ is on some number $m_i\geq 1$ of distinct variables and of depth at most $d-1$. $T_j$ and $T_k$ do not share variables for any $j\neq k$ and $\sum_{i=1}^a m_i = m$. Then each $T_i$ can be viewed as computing a non-constant function $f_i\colon \{0,1\}^{m_i} \to \{0,1\}$ such that $ \Thr_k(f_1,\dots, f_a) \colon \{0,1\}^m \to \{0,1\}$ is a restriction of $f$. By the inductive hypothesis, $\anddim(f_i)\cdot \ordim(f_i) \geq m_i$.
        
        Let $A_1 \geq\ldots\geq A_a$ denote the multiset $\{\anddim(f_i): i \in [a]\}$ sorted in descending order. Similarly, let $O_1 \geq\ldots\geq O_a$ denote the multiset $\{\ordim(f_i)\colon i \in [a]\}$ in descending order. Since the $f_i$ functions are all non-constant, we can appropriately restrict $f$ as explained in the base case to deduce
        \begin{equation}\label{eq:induction_and_or_dim}
            \anddim(f) \geq A_1 + \dots + A_k \quad \text{and} \quad \ordim(f) \geq O_1 + \dots + O_{a-k+1}.
        \end{equation}
        We now observe the following fact whose short proof we defer to after this proof. 
        \begin{fact}\label{lem:sequence_inequality}
        Let $n\in \mathbb{N}$. Let $x_1, x_2, \dots, x_n \geq 0$ and $y_1, y_2, \dots,  y_n \geq 0$. Suppose $\tildex_1\geq \tildex_2\geq \dots \geq \tildex_n$ are the $x_i$s sorted in descending order and $\tildey_1 \geq \tildey_2 \geq \dots \geq  \tildey_n \geq 0$ are the $y_i$s sorted in descending order. Then, for any $k\in [n]$,
        \begin{equation}
            (\tildex_1+\tildex_2 + \dots + \tildex_k) (\tildey_1+ \tildey_2 + \dots + \tildey_{n-k+1}) \geq x_1 y_1 + x_2 y_2   + \dots + x_n y_n.
        \end{equation}
        \end{fact}

        Therefore
        \begin{align*}
             \anddim(f) \cdot \ordim(f) \geq& \sum_{i=1}^a \anddim(f_i) \cdot \ordim(f_i) &&\text{(\cref{eq:induction_and_or_dim} 
 and \cref{lem:sequence_inequality})}
             \\
             \geq& \sum_{i=1}^a m_i &&\text{(Inductive hypothesis)}
             \\
             =& m,
        \end{align*}
        which establishes the induction step.
    \end{enumerate}

    The proposition follows.
\end{proof}

\begin{proof}[Proof of \cref{lem:sequence_inequality}]

It suffices to prove that 
\begin{equation}\label{eq:pre_rearrangement}
     (\tildex_1+\tildex_2 + \dots + \tildex_k) (\tildey_1+ \tildey_2 + \dots + \tildey_{n-k+1}) \geq \tildex_1 \tildey_1 + \tildex_2 \tildey_2 + \dots + \tildex_n \tildey_n
\end{equation}
from which the lemma follows by the rearrangement inequality.

To prove \cref{eq:pre_rearrangement}, it suffices to prove it for any $k\leq \ceil{n/2}$ by the symmetry between the $\tildex_i$s and $\tildey_i$s. We may also assume $k \geq 2$ since the case $k=1$ is clearly true.

For $2\leq k \leq \ceil{n/2}$, we have
\begin{align*}
 &(\tildex_1+\tildex_2 + \dots + \tildex_k) (\tildey_1+ \tildey_2 + \dots + \tildey_{n-k+1})
 \\
 \geq& (\tildex_1 + \tildex_2)(\tildey_1+ \tildey_2 + \dots + \tildey_{\floor{n/2}+1}) &&\text{($\tildex_i, \tildey_i\geq 0$)}
 \\
 \geq& \tildex_1 (\tildey_1+ \tildey_2 + \dots + \tildey_{\floor{n/2}+1}) + \tildex_{\floor{n/2} + 2}(\tildey_1+ \tildey_2 + \dots + \tildey_{\floor{n/2}+1})
 \\
 \geq& \tildex_1 \tildey_1 + \tildex_2 \tildey_2   + \dots + \tildex_n \tildey_n,
\end{align*}
as required.     
\end{proof}
\end{document}